\documentclass[aps,
superscriptaddress,
twocolumn]{revtex4}

\usepackage{latexsym,amsfonts,amssymb,exscale,enumerate,comment}
\usepackage{amsmath,amsthm,amscd}
\usepackage[all,knot]{xy}

\usepackage{ braket}
\usepackage[caption=false]{subfig}


\usepackage{url}
\usepackage[bookmarks=true,%
    colorlinks=true,%
    linkcolor=blue,%
    citecolor=blue,%
    filecolor=blue,%
    menucolor=blue,%
    urlcolor=blue,%
    breaklinks=true]{hyperref}


\input xy
\usepackage[all]{xy}
\SelectTips{cm}{}

\usepackage{tikz}
\usepackage{tikz-cd} 
\usetikzlibrary{shapes,snakes}
\usetikzlibrary{decorations.markings}
\usetikzlibrary{decorations.pathreplacing}
\tikzstyle directed=[postaction={decorate,decoration={markings,
    mark=at position #1 with {\arrow{>}}}}]

\newcommand{\hackcenter}[1]{
 \xy (0,0)*{#1}; \endxy}

\tikzset{->-/.style={decoration={
  markings,
  mark=at position #1 with {\arrow{>}}},postaction={decorate}}}

\usepackage{graphicx}
\usepackage{color}
%
%
%
%
\newcommand{\brk}[1]{{\left\langle{#1}\right\rangle}}
\newcommand{\bp}[1]{{\left({#1}\right)}}

\newcommand{\md}{\operatorname{\mathsf{d}}}

\newcommand{\Gr}{\mathcal{G}}
\newcommand{\X}{\mathcal{X}}
\newcommand{\col}{{\Phi}}
\newcommand{\HHH}{\ensuremath{\mathcal{H}}}
\newcommand{\St}{\operatorname{St}}

\newcommand{\I}{\mathcal{I}}
\newcommand{\m}{\mathsf{m}}
\newcommand{\mirror}[1]{{\stackrel{\leftrightarrow}{#1}}}
\def\Y{{\mathcal{Y}}}


\theoremstyle{plain}
\newtheorem{theorem}{Theorem}

\newtheorem{corollary}[theorem]{Corollary}
\newtheorem{proposition}[theorem]{Proposition}
\newtheorem{lemma}[theorem]{Lemma}

\theoremstyle{definition}

\theoremstyle{definition}
\newtheorem{remark}[theorem]{Remark}

\numberwithin{equation}{section}
\numberwithin{theorem}{section}


\newcommand{\maps}{\colon}


\newcommand{\refequal}[1]{\xy {\ar@{=}^{#1}
(-1,0)*{};(1,0)*{}};
\endxy}

\hfuzz=6pc



\newcommand{\Hom}{{\rm Hom}}

\renewcommand{\to}{\rightarrow}


\def\dmod{{\mathrm{-mod}}}   

\def\Id{\mathrm{Id}}

\def\T{{\mathcal{T}}}

\numberwithin{equation}{section}

%
\newcommand{\wb}{\overline}

\newcommand{\slt}{{\mathfrak{sl}(2)}}
\newcommand{\Uq}{{U_q\slt}}
\newcommand{\UqMed}{{\wb U_q\slt}}

\newcommand{\UsltH}{{U_q^{H}\slt}}
\newcommand{\Ubar}{{\wb U_q^{H}\slt}}

\let\tilde=\widetilde


\let\epsilon=\varepsilon


\usepackage{bbm}
\def\C{{\mathbb{C}}}

\def\R{{\mathbbm R}}
\def\Z{{\mathbbm Z}}
\def\H{{\mathcal{H}}}


\def\cal#1{\mathcal{#1}}%
\def\1{\mathbbm{1}}%
\def\nn{\notag}

\def\la{\langle}
\def\ra{\rangle}

\newcommand{\A}{{\sf A}}

\newcommand{\bb}{{\sf b}}
\newcommand{\G}{\cal{G}}


\usepackage{bbm}

\def\cal#1{\mathcal{#1}}

\newcommand\nc{\newcommand}
\nc\rnc{\renewcommand}
\nc\Kar{\operatorname{Kar}}
\nc\End{\operatorname{End}}

\newcommand{\scs}{\scriptstyle}

\nc\Sym{\operatorname{Sym}}

\allowdisplaybreaks



\begin{document}
\title {Pseudo-Hermitian Levin-Wen models from non-semisimple TQFTs}


\author{Nathan Geer}
\affiliation{Mathematics \& Statistics,
  Utah State University,
  Logan, Utah 84322, USA}%
  \email{nathan.geer@gmail.com}
\author{Aaron D. Lauda}
\affiliation{Department of Mathematics,
 University of Southern California ,
  Los Angeles, California 90089, USA}
  \email{lauda@usc.edu}%

%
\author{Bertrand Patureau-Mirand}
\affiliation{UMR 6205, LMBA, universit\'e de Bretagne-Sud,
  BP 573, 56017 Vannes, France }
\email{bertrand.patureau@univ-ubs.fr}
\author{Joshua Sussan}
\affiliation{Department of Mathematics,
  CUNY Medgar Evers,
  Brooklyn, NY 11225, USA}
  \email{jsussan@mec.cuny.edu}
  \affiliation{Mathematics Program,
 The Graduate Center, CUNY,
  New York, NY 10016, USA}
  \email{jsussan@gc.cuny.edu}

\begin{abstract}
We construct large classes of exactly solvable pseudo-Hermitian 2D spin Hamiltonians.  The ground states of these systems depend only on the spatial topology of the system.  We identify the ground state system on a surface with the value assigned to the surface by a non-semisimple TQFT generalizing the Turaev-Viro model. A non-trivial example arises from a non-semisimple subcategory of representations of quantum $sl(2)$  where the quantum parameter is specialized to a root of unity.
  \end{abstract}

\maketitle
\setcounter{tocdepth}{3}

 \section{ Introduction}

There is a well developed correspondence connecting  topologically ordered phases in 2-dimensional systems with the theory of unitary modular tensor categories (UMTCs).  Many of the key properties of these systems, such as  topological ground-state degeneracy and non-abelian braiding statistics of low-energy point-like excitations, make them candidates for topological quantum computing.  These features can be formalized through UMTCs and their associated 2+1-dimensional topological quantum field theories (TQFTs)~\cite{Kit,GTKLT,FKLW,LW,MR2608953}.

Levin and Wen constructed an important class of doubled, or time reverse symmetric, 2-dimensional phases utilizing a discretized model defined on a trivalent graph. The graph need not be planar and can be regarded as living on a genus $g$ surface.    Their `string-net' model produces exactly solvable gapped Hamiltonians where the low-energy physics of a 2-dimensional topologically ordered phase is described by the condensation of string-like objects~\cite{LW}.  These models have been further generalized by modifying various assumptions \cite{KitK,PhysRevB.89.195130,HW}.  The result is the general belief that the most general class of bosonic topological orders with gapped boundaries are described by the Drinfeld center of a fusion category.  Indeed, in \cite{PhysRevB.103.195155} the generalized string-net models producing exactly solvable Hermitian Hamiltonians are classified.

\medskip

In this work, we extend the paradigm of bosonic topological order by introducing a new class of theories not arising as the Drinfeld center of a fusion category.   A key difference in our approach is that, while the Hamiltonians in our systems have positive spectrum and normalizable wave functions, these Hamiltonians are not Hermitian.  They are however \emph{pseudo-Hermitian}, so that time evolution is given by the exponential of the Hamiltonian that is self adjoint with respect to an indefinite inner product.  These inner products are also invariant under the time-evolution generated by the Schrodinger equation.

Quantum mechanics with an indefinite inner product has been studied going back to Dirac~\cite{dirac} and Pauli~\cite{Pauli}.  Even with the indefinite norms, they observed a formalism consistent with deterministic quantum mechanics.  More recently, there has been a resurgence of interest in indefinite quantum mechanics coming from the study of $\cal{P}\cal{T}$-symmetric quantum theory~\cite{BendB}.
$\cal{P}\cal{T}$-symmetric quantum theory removes the assumption that the Hamiltonian is Hermitian and replaces it by the condition that the Hamiltonian commutes with the antilinear operator $\cal{P}\cal{T}$ of parity and time reversal, so that $[H, \cal{P}\cal{T}]=0$. Hermiticity is then replaced by the condition $H = H^{\cal{P}\cal{T}}$  (see \cite{Bender_2005} for a good introduction).
$\cal{P}\cal{T}$-symmetric quantum theory was proposed as a way to measure physical phenomena in the absence of Hermitian Hamiltonians.
Indeed, non-Hermitian $\cal{P}\cal{T}$-symmetric Hamiltonians have
already been used to describe such phenomena as the ground state of a quantum system of
hard spheres~\cite{PhysRev.115.1390}, Reggeon field theory~\cite{BROWER1978213}, and the Lee-Yang edge singularity~\cite{ZAMOLODCHIKOV1991619}.
 In each of these examples, the Hamiltonians have spectral positivity and the associated quantum theories are unitary because the Hamiltonians are $\cal{P}\cal{T}$-symmetric.

It was discovered that $\cal{P}\cal{T}$-symmetric quantum theory was not the most general criteria that would ensure that a given non-Hermitian Hamiltonian would have a real spectrum and unitary evolution~\cite{Mostafazadeh_2002a, Mostafazadeh_2002,Most-III}.  Rather, the notion of \emph{pseudo-Hermiticity} provides such a criteria.   Any diagonalizable Hamiltonian admitting a symmetry generated by an invertible antilinear operator is pseudo-Hermitian, including the $\cal{P}\cal{T}$ operator.
A linear operator $H \maps \cal{H} \to \cal{H}$ acting on a Hilbert space $\cal{H} = (\cal{H}, \la \cdot, \cdot \ra_+)$ (here we are assuming positive-definite inner product $\la \cdot, \cdot \ra_+$)  is called \emph{pseudo-Hermitian}, or $\eta$-pseudo-Hermitian, if there exists a linear, invertible, Hermitian operator $\eta \maps \cal{H} \to \cal{H}$ such that
\[
H^{\sharp} = \eta H \eta^{-1} ,
\]
where $H^{\sharp}$ above is the usual Hermitian conjugate $\la H^{\sharp} \psi,   \phi \ra_+ = \la \psi,  H\phi \ra_+$   determined by the positive-definite form $\la \cdot, \cdot \ra_+$.  We use this nonstandard notation since we will be primarily interested in a different inner product and the Hermitian adjoint with respect to that form.

If $H$ is pseudo-Hermitian, the choice of such $\eta$ is not unique.  Each choice of $\eta$ determines a possibly indefinite inner product, or \emph{pseudo-inner product}, on $\cal{H}$ given by
\begin{equation}
  \la \psi, \phi \ra  := \la \psi, \eta \phi \ra_+ ~.
\end{equation}
When a Hamiltonian $H$ is $\eta$-pseudo-Hermitian, then it becomes an actual Hermitian operator with respect to the indefinite form $\la , \ra$.  We write $H^{\dagger} = H$ to mean
\[
\la \psi, H \phi \ra = \la H\psi, \phi \ra.
\]
If $H$ is pseudo-Hermitian, exponentiating $iH$ produces an operator $U$ satisfying
\[
U^{\sharp} = \eta U^{-1}\eta^{-1}.
\]
Such an operator is called pseudo-unitary (see ~\cite{Most-unitary}).
The group of such operators is controlled by the group $U(n,m)$ with $n,m \in \Z^{+}$ determined from the signature of the inner product.
\medskip

In this article, we show that $\eta$-pseudo-Hermitian Hamiltonians extending the Levin-Wen models arise naturally from recent non-semisimple modifications of the theory of 2+1-dimensional state-sum TQFTs.   As argued in \cite{Most-Is}, the key distinction between indefinite metric quantum mechanics and pseudo-Hermitian quantum mechanics is that the choice of $\eta$ is additional data that is fixed ahead of time.  Here we show that such an $\eta$ arises naturally from topological considerations.

A key observation from ~\cite{KKR} and \cite{Kir-stringnet,BalKir,Balsam}  establishes a link between the exactly solvable Levin-Wen Hamiltonians defined on trivalent graphs on a surface $\Sigma$  and the Turaev-Viro topological quantum field theory in three spatial dimensions.  In this interpretation, the plaquette operators used to define the Hamiltonian  arise from three-dimensional tetrahedra glued onto the triangulated surface of the model.  Key properties of these plaquette operators, such as being projectors and   mutually commuting, also have natural topological interpretations as change of triangulations in 2+1-dimensions, (see Section~\ref{sec:top} for more details).    The projection onto the ground state is then given by the image of the operator assigned to $\Sigma \times [0,1]$ by the Turaev-Viro TQFT.

More recently, there have been developments in the study of topological quantum field theories, so called, non-semisimple TQFTs that live outside the usual unitary modular tensor category framework. `Non-semisimple' refers to the fact that they are built on tensor categories that do not satisfy the usual semisimple assumptions prevalent in nearly all categorical descriptions of topological phases.  These TQFTs are governed by relative $\Gr$-spherical categories
 and depend on extra data including a Hamiltonian link in the 3-manifold $M$ and a cohomology class $[\Phi]\in H_1(M,\Gr)$.
The key examples of such non-semisimple categories  have an infinite number of nonisomorphic simple objects, all having vanishing quantum dimensions.    Nevertheless, these non-semisimple TQFTs have remarkable properties, often proving more powerful than their semisimple analogs. For example, non-semisimple TQFTs lead to mapping class group representations with the notable property that the action of a Dehn twist has infinite order, and thus the representation could be faithful, (see \cite{BCGP2}). This is in contrast with the usual quantum mapping class group representations where all Dehn twists have finite order and the representations are not faithful.  Also, after projectivization, these TQFTs correspond to the Lyubashenko projective mapping class group representations given in
\cite{Lyubashenko:1994tm}, (also see \cite{DGP2,derenzi2021mapping}).  Related work of Chang~\cite{Chang} considers non-semisimple generalization of Turaev-Viro TQFTs and their lattice model realizations based on non-semisimple quantum groupoids.


In this article we define pseudo-Hermitian Levin-Wen models from relative $\Gr$-spherical categories satisfying certain Hermitian properties.  Any unitary modular tensor category provides an  example of a relative $\Gr$-spherical category satisfying our assumptions, where $\Gr$ is the trivial group.  We give an example with nontrivial $\Gr$ in Section \ref{sec:coeff} coming from the non-semisimple representation theory of quantum $\slt$ at a root of unity.

\subsection{Outline}
Section \ref{sec:system} contains the definition of a relative $\Gr$-spherical category along with extra data needed to construct a pseudo-Hermitian Hamiltonian from it.
We give a non-semisimple version of the Levin-Wen construction in Section \ref{sec:hamiltonian}.  Various properties of the Hamiltonian are proved here.
In Section \ref{sec:bistellar}, we construct a pseudo-unitary operator on our Hilbert space, which may be of independent interest.
Finally in Section \ref{sec:coeff} we provide an example of a relative $\Gr$-spherical category satisfying the extra data required to define a Levin-Wen model.  This category comes from quantum $\slt$, but is not semisimple.  This is a point of departure from other papers in the area.  The appendix contains some explicit formulas for morphisms in this category.

\subsection{Acknowledgements}
The authors are grateful to Sergei Gukov and  Zhenghan Wang for helpful comments on a preliminary version of this article.
N.G.\ is partially supported by NSF grants DMS-1664387 and DMS-2104497.
A.D.L.\ is partially supported by NSF grant DMS-1902092 and Army Research Office W911NF-20-1-0075.
J.S.\ is partially supported by the NSF grant DMS-1807161 and PSC CUNY Award 64012-00 52.

 \section{System Hilbert space} \label{sec:system}

The input for defining the system Hilbert space is a triangulated surface $\Sigma$ and certain categorical data organized in the notion of a  \emph{relative $\Gr$-spherical category}.
 The main examples of relative spherical categories are the categories of
finite-dimensional weight modules over semi-restricted quantum groups.  In contrast to the usual modular tensor categories used to study topological phases, these categories are not semisimple and have an infinite number of nonisomorphic irreducible modules, all having vanishing quantum dimensions.   Nevertheless, such categories have been shown to give rise to nonabelian braiding statistics and lead to new TQFTs.

We make some additional assumptions on the relative $\Gr$-spherical category.  We list these conditions below as `Hermitian structure'.  They are motivated by a Hermitian structure on the non-semisimple category of modules for unrolled quantum $\slt$ constructed in \cite{GLPMS} and recalled in Section \ref{sec:coeff} and the appendix.
We also assume the multiplicity spaces between generic string types are one dimensional  and that the group $\Gr$ is abelian.  It should be relatively straightforward to remove the first condition, but it is less clear how to handle non-abelian groups.
%

\subsection{Basic Input} \label{subsec:basic}

The categorical input described above is the following data.
\smallskip

\noindent  {\bf String data:}
\begin{itemize}
 \item $\Gr$ is a
 abelian group
 with identity $0$ containing  a small subset $\X\subset\Gr$, such that $\X$ is symmetric ($-\X= \X$).
 \item For each $g\in \Gr$ there is a finite set of string labels $I_g$.
\end{itemize}

Set $I=\sqcup I_g$ and  $\A=\cup_{g\in\Gr\setminus \X} I_g$.  We say a string type is  \emph{generic} if it is in $\A$.   If $i\in I_g$ we say the degree of $i$ is $g$ and write $\deg(i)=g$.   Each element  $j\in I_g$ has a \emph{conjugate} string type $j^{\ast}\in I_{-g^{}}$, which satisfies $j^{**}=j$. There is unique \emph{vacuum} string type
  $j=0$ in $I_{0}$ satisfying $0^{\ast}=0$.
\smallskip

\noindent  {\bf Branching Rules:}
 To each triple of strings $i, j, k\in \A$, we associate a branching rule $\delta_{ijk}$ that equals $1$
if the triple is allowed to meet at a vertex  and $0$ otherwise.
Here we consider the triple  $\{i,j,k\}$ up to cyclic ordering and require  $\delta_{ijk}$ is
symmetric under cyclic permutations of the three labels:
$\delta_{ijk}=\delta_{jki}=\delta_{kij}$.  To be
compatible with the conjugation structure of labels, the branching
rule satisfy $\delta_{0jj^{\ast}}=\delta_{0j^{\ast} j}=1$,
$\delta_{ijk}=\delta_{k^{\ast}j^{\ast}i^{\ast}}$ and $\delta_{ijk}=0$ if  $\deg(i) + \deg(j)+ \deg(k)\neq 0 \in \Gr$.
\smallskip

\noindent  {\bf Modified dimensions:}
 There exist functions $\md:\A\rightarrow \R^{\times}$ and  $\bb:\A\to \R$ satisfying $\md(i^{\ast}) = \md(i)$, $\bb(i^{\ast}) = \bb(i)$, and for $g,g_1, g_2\in \Gr\setminus \X$ with $g+g_1+g_2=0$ we have
 \begin{align} \label{eq:b}
   \bb(j)=\sum_{j_1\in \I_{g_1},\, j_2 \in \I_{g_2}}   \bb(j_1)\bb(j_2)\delta_{j^{\ast}j_1 j_2}
 \end{align}
 for all $j\in I_g$.
\smallskip

\noindent  {\bf Modified $6j$ symbols:}
There are symmetrized $6j$ symbols $N^{i j k}_{l m n}$  defined to be zero unless
\begin{align} \label{eq:6jadmissible}
  &\delta_{ij k^{\ast}}=\delta_{i^{\ast}m n^{\ast}}=\delta_{j^{\ast}nl^{\ast}}=\delta_{klm^{\ast}}=1.
\end{align}
These symbols are required to satisfy
\begin{align}
&N^{i j k}_{l m n} = N^{ j k^{\ast}  i^{\ast}}_{m n l}
= N^{k l m}_{n^{\ast} i j^{\ast}} ; \label{eq:symm} \\
&\sum_j \md(j) N^{j_1 j_2 j_5}_{j_3 j_6 j} N^{j_1 j j_6}_{j_4 j_0 j_7} N^{j_2 j_3 j}_{j_4 j_7 j_8}
=
N^{j_5 j_3 j_6}_{j_4 j_0 j_8} N^{j_1 j_2 j_5}_{j_8 j_0 j_7} ; \label{eq:pent} \\
&\sum_n \md(n) N^{i j p}_{l m n} N^{k j^{\ast} i}_{n m l}
= \frac{\delta_{k,p}}{\md(k)} \delta_{ij k^{\ast}} \delta_{k l m^{\ast}}
 . \label{eq-ortho}
\end{align}
We refer to these identities as \emph{tetrahedral symmetry}, \emph{pentagon identity}, and \emph{orthogonality} respectively.
\smallskip

\noindent  {\bf  Hermitian Structure:}
There exist maps $\beta: \A\to\R^{\times}$ and
$\gamma:\A^3\to\R^{\times}$ such that $\gamma$ is invariant
under cyclic permutations of its three arguments.  Furthermore,
\begin{align}
&\beta(j^{\ast})=\beta(j); \\
  &\gamma(i,j,k)\gamma(k^{\ast},j^{\ast},i^{\ast})\beta(i)\beta(j)\beta(k)=1
    \text{ if }\delta_{ijk}=1;
    \label{eq:gamma-theta} \\
& \wb{(N^{j_1 j_2 j_3 }_{j_4j_5j_6})}{=}N^{j_2^{\ast} j_1^{\ast} j_3^{\ast}}_{j_5 j_4j_6}\gamma(j_1,j_2,j_3^{\ast})\gamma(j_1^{\ast},j_5,j_6^{\ast})
 \nn\\
& \qquad \qquad \;\; \times \gamma(j_2^{\ast},j_6,j_4^{\ast})\gamma(j_3,j_4,j_5^{\ast})\prod_{i=1}^6\beta(j_i) ,\label{eq:6j}
\end{align}
where all string types are generic in the above formulas.

\subsection{Definition of the state space} \label{subsec:def-state}
Let $\Sigma$ be a compact, connected, oriented surface.
Let $\T$ be a triangulation of $\Sigma$ and $\Gamma$ be a finite
trivalent graph dual to $\T$.  Each vertex of the graph
$\Gamma$ acquires a cyclic ordering compatible with the orientation of
$\Sigma$.  A \emph{$\Gr$-coloring} of $\Gamma$ is 
a map $\col$ from the set of oriented edges
of $\Gamma$ to $\Gr$ such that
\begin{enumerate}
\item $\col(-e)=-\col(e)^{}$ for any oriented edge $e$ of $\Gamma$, where
  $-e$ is $e$ with opposite orientation, and
\item if $e_1, e_2, e_3$ are edges of a vertex $v$ of $\Gamma$ with a
  cyclic ordering compatible with the orientation of $\Sigma$ and each
  edge is oriented towards the vertex $v$, then
  $\col(e_1)+ \col(e_2)+ \col(e_3)=0$.
\end{enumerate}
The $\Gr$-colorings of $\Gamma$ form a group isomorphic via Poincar\'{e} duality to the group of $\Gr$-valued simplicial 1-cocycles on $\T$.  We denote by $[\col]\in H^1(\Sigma,\Gr)$ the associated cohomology class.
A $\Gr$-coloring of $\Gamma$ is admissible if
$\col(e)\in \Gr \setminus \X$ for any oriented edge $e$ of $\Gamma$.
A state of an admissible $\Gr$-coloring $\col$ is a map $\sigma$
assigning to every oriented edge $e$ of $\Gamma$ an element
$\sigma(e)\in I_{\col(e)}$ such that
$\sigma(-e)=\sigma(e)^{*}$. Denote by $\St(\col)$ the set of such
states.  For an admissible $\Gr$-coloring $\col$ we define the Hilbert
space $\HHH=\HHH(\Gamma,\col)$ as the span of all elements corresponding
to the states of $\col$
\[
\HHH(\Gamma,\col)=\bigoplus_{\sigma\in\St(\col)}\C\ket{\Gamma,\sigma}.
\]
We write $\sigma^{\ast}$ for the state of the $\Gr$-coloring $-\Phi$ assigning $\sigma(e)^{\ast}$ to each oriented edge $e$ of $\Gamma$.

\subsection{Inner products}
We equip $\HHH=\HHH(\Gamma,\col)$ with a complete Hilbert space structure by defining a positive definite Hermitian inner product
\begin{equation}
 \la \cdot \mid \cdot \ra_+ \maps \HHH \otimes \HHH \to \C
\end{equation}
in which the states $\ket{\Gamma,\sigma}$ form an orthonormal basis:
\begin{equation}
\la \Gamma,\sigma \mid \Gamma, \sigma' \ra_+ =  \delta_{\sigma, \sigma'}.
\end{equation}
We will see that from the TQFT perspective, this is not the most natural inner product.  The most natural inner product can be obtained from this one utilizing a Hermitian operator $\eta \maps \HHH \to \HHH$.

 Define an invertible operator $\eta$ by 
\begin{align}
  \eta \maps  \HHH(\Gamma,\col) & \; \longrightarrow \; \HHH(\Gamma,\col) \label{eq:def-eta} \\
    \ket{\Gamma, \sigma}   &\;\; \mapsto \;
   \frac{ \prod_{e\in\Gamma_1} \md(\sigma(e)) }{ \prod_{v\in \Gamma_0}\gamma(\sigma(v))\prod_{e\in \Gamma_1}\beta(\sigma(e)) }
   \ket{\Gamma, \sigma}. \nn
\end{align}
where $\sigma(v)=(j_1,j_2,j_3)\in I^3$, $j_k=\sigma(e_k)$, and
$e_1,e_2,e_3$ are the $3$ ordered edges adjacent to $v$ oriented
toward $v$.
It is clear from the orthogonality of states $\ket{\Gamma,\sigma}$ and the fact that $\gamma(\sigma(v))$, $\md(\sigma(e))$ and $\beta(\sigma(e))$ are all real that
$\eta$ is Hermitian with respect to the form $\la\cdot \mid \cdot \ra_+$
\begin{equation} \label{eq:skew-eta}
 \la \psi \mid \eta \phi \ra_+ = \la \eta \psi \mid \phi \ra_+
\end{equation}
for all $\psi, \phi \in \cal{H}$.

Define a new inner product on $\HHH=\HHH(\Gamma,\col)$  via
\begin{equation} \label{eq:Aindef-inner}
\la \cdot \mid \cdot \ra := \la \cdot \mid   \eta^{-1}(\cdot) \ra_+.
\end{equation}
This new pairing is Hermitian since
\begin{align*}
 &\la \psi \mid \phi  \ra
 = \la \psi \mid\eta^{-1} \phi \ra_+
= \overline{\la \eta^{-1} \phi \mid \psi\ra_+}
\\
&\qquad \refequal{\eqref{eq:skew-eta}}  \overline{\la\phi \mid \eta^{-1}\psi\ra_+} = \overline{\la\phi \mid \psi\ra} .
\end{align*}

However, it is indefinite since
\begin{align} \label{eq:eta-inner}
\la \Gamma, \sigma \mid \Gamma, \sigma' \ra
& = \;  \frac{ \prod_{v\in \Gamma_0}\gamma(\sigma(v))\prod_{e\in \Gamma_1}\beta(\sigma(e)) }{ \prod_{e\in\Gamma_1} \md(\sigma(e)) }\ \cdot \delta_{\sigma, \sigma'}.
\end{align}
Indeed, for the example studied in Section~\ref{sec:coeff}, the coefficient on the right-hand-side of \eqref{eq:eta-inner} takes both positive and negative values.   It follows that the $\eta$ is Hermitian with respect to the new form $\la \cdot \mid \cdot \ra$  since
\begin{align} \label{eq:eta-herm}
\la \psi \mid \eta \phi \ra &:=\la \psi \mid \eta^{-1} \eta \phi \ra_+
=\la \psi \mid \eta \eta^{-1}  \phi \ra_+
\\ \nn
&\refequal{\eqref{eq:skew-eta} }\la \eta \psi \mid \eta^{-1} \phi \ra_+
=:  \la\eta \psi \mid  \phi \ra~.
\end{align}

Observe that if a Hamiltonian $H$ is Hermitian with respect to the form $\la \cdot \mid \cdot \ra$, so that $\la \psi \mid H \phi \ra = \la H \psi \mid \phi \ra$, then we have
\begin{align*}
 \la \psi\mid  H    \phi \ra_+ &:=\la \psi\mid  \eta H    \phi \ra
 =\la  (\eta H)^{\dagger} \psi \mid \phi \ra
 \\
&\refequal{\eqref{eq:eta-herm}}\la  H\eta \psi \mid \phi \ra
:=\la  H\eta \psi \mid \eta^{-1} \phi \ra_+
\\
& \refequal{\eqref{eq:skew-eta}}\la \eta^{-1} H\eta \psi \mid \phi \ra_+~.
\end{align*}
Thus the Hermitian conjugate $H^{\sharp}$ of $H$ with respect to the form $\la \cdot \mid \cdot\ra_+$ satisfies
\begin{equation}
  H^{\sharp} = \eta^{-1} H \eta.
\end{equation}
Thus the  resulting Hamiltonian then becomes \emph{pseudo-Hermitian} with respect to the inner product $\la -,-\ra_+$.

\subsection{Topological origin of the indefinite inner product. }
In this section we show how the indefinite inner product from \eqref{eq:Aindef-inner} arises naturally from topological considerations.    This helps to illustrate the distinction between indefinite quantum mechanics and $\eta$-pseudo-Hermitian quantum mechanics articulated in \cite{Most-Is}.   The key distinction is that an explicit map $\eta$ as in \eqref{eq:def-eta} is chosen as part of the data.   From the TQFT perspective discussed in Section~\ref{sec:top}, this choice arises naturally.


Given a tuple $(\Sigma,\T,\Gamma,\col)$ of the objects described in Section~\ref{subsec:def-state}, consider the tuple $(\Sigma^{\ast},\T,\wb\Gamma,\wb\col)$ where $\Sigma^{\ast}$ is the surface $\Sigma$ with opposite orientation, $\wb\Gamma$ is the graph $\Gamma$ dual to the triangulation $\T$ whose cyclic order at each vertex had been reversed to be compatible with $\Sigma^{\ast}$ and where the choice of orientation of each edge has been reversed.
The $\Gr$-coloring $\wb\col(e)=-\col(e)^{}$ for any oriented edge $e$
of $\Gamma$.
In particular the
dual
of a state of $\col$ is a state of $\wb\col$.
For any oriented edge $e$ of $\Gamma$,
$\wb\col(e)=-\col(e)^{}$ but $\wb\col$ and $\col$ represent the same
$1$-cocycle of on the 1-skeleton of $\T$.

There is a natural pairing $\bp{\cdot,\cdot}:\HHH(\wb\Gamma,\wb\col)\otimes\HHH(\Gamma,\col)\to\C$ arising from ribbon graph evaluations associated to the relative $\Gr$-spherical category
defined on
\[
x=\!\!\!\sum_{\sigma\in\St(\col)}\!\!\!x_\sigma\ket{\wb\Gamma,\sigma^{\ast}}\text
{ and }
y=\!\!\!\sum_{\sigma\in\St(\col)}\!\!\!y_\sigma\ket{\Gamma,\sigma}
\]
by
\begin{align*}
&\bp{x,y} =   \sum_{\sigma\in\St(\col)} x_\sigma
y_\sigma \prod_{e\in\Gamma_1}\frac{1}{\md(\sigma(e))} ~.
\end{align*}
We then define
$\dagger:\HHH(\Gamma,\col)\to\HHH(\wb\Gamma,\wb\col)$ by
\begin{equation} \label{eq:bra}
\ket{\Gamma,\sigma}^\dagger=\prod_{v\in \Gamma_0}\gamma(\sigma(v))\prod_{e\in \Gamma_1}\beta(\sigma(e))\ket{\wb\Gamma,\sigma^{\ast}}
\end{equation}
where $\sigma(v)=(j_1,j_2,j_3)\in\R^3$, $j_k=\sigma(e_k)$, and
$e_1,e_2,e_3$ are the $3$ ordered edges adjacent to $v$ oriented
toward $v$.
Then the inner product from \eqref{eq:Aindef-inner} is given by
$
\braket {\psi | \phi} \;    :=\bp{\psi^\dagger,\phi}.
$

\section{ Non-semisimple Levin-Wen Hamiltonian } \label{sec:hamiltonian}

In this section we define operators acting on states associated to the dual graph $\Gamma$ of a triangulation $\T$ of the surface $\Sigma$.  Recall that  vertices of the graph $\Gamma$ correspond to triangles of $\T$, while regions (plaquettes), of the dual graph $\Gamma$ can be identified with vertices of the original triangulation.  More precisely, by a plaquette $p$, we mean a region of $\Sigma\setminus\Gamma$.
Its
boundary $\delta p$ is a union of oriented edges of $\Gamma$.  We write $\Gamma_0$ for the set vertices of the graph $\Gamma$.  Using operators associated with vertices and plaquettes, we introduce a pseudo-Hermitian Hamiltonian in Section~\ref{subsec:LevinWen}.

\subsection{Vertex operators}

Associated to each vertex $v \in \Gamma_0$ we have \emph{vertex operators} $Q_v \maps \HHH(\Gamma,\Phi) \to \HHH(\Gamma,\Phi)$ that act locally to impose the branching constraints from Section~\ref{subsec:basic}.
\[
 Q_v\left| \hackcenter{\begin{tikzpicture}[   decoration={markings, mark=at position 0.6 with {\arrow{>}};}, scale =0.8]
    \draw[thick, blue, postaction={decorate}] (0,1) -- (.5,0);
    \draw[thick, blue, postaction={decorate}] (0,-1) -- (.5,0);
     \draw[thick, blue, postaction={decorate}] (1.5,0) -- (.5,0);
    \node at (-.2,.75) {$\scs j_2$};
    \node at (-.2,-.75) {$\scs j_1$};
   \node at (1.5,.25){$\scs j_3$};
\end{tikzpicture}}
\right\rangle
\;\; = \;\;
\delta_{j_1 j_2 j_3}\left| \hackcenter{\begin{tikzpicture}[   decoration={markings, mark=at position 0.6 with {\arrow{>}};}, scale =0.8]
    \draw[thick, blue, postaction={decorate}] (0,1) -- (.5,0);
    \draw[thick, blue, postaction={decorate}] (0,-1) -- (.5,0);
     \draw[thick, blue, postaction={decorate}] (1.5,0) -- (.5,0);
    \node at (-.2,.75) {$\scs j_2$};
    \node at (-.2,-.75) {$\scs j_1$};
   \node at (1.5,.25){$\scs j_3$};
\end{tikzpicture}}
\right\rangle
\]
It is straightforward to see that these operators are mutually commuting projectors, so that
\[
Q_v^2 = Q_v, \quad Q_v Q_{v'} = Q_{v'} Q_v \qquad \text{for $v,v' \in \Gamma_0$.}
\]
Furthermore, since $Q_v$ is a delta function, it is immediate that these operators are Hermitian with respect to the inner product $\la \cdot \mid \cdot \ra$ from \eqref{eq:eta-inner}.

\subsection{Plaquette operators} \label{subsec:plaquette}
   If
$g\in \Gr$ and $p$ is a plaquette in $\Sigma\setminus\Gamma$, then we write $g.\delta p$ for  the coloring
that sends oriented edges of $\delta p$ to $g$, their opposites
to $-g^{}$, and other edges to $0$. 

For $g\in \Gr\setminus\X$ let $\col$ be an admissible $\Gr$-coloring such that $\col+g.\delta p$ is also admissible.
Let $s\in I_{g}$  and define the operator $B_p^s:\HHH(\Gamma,\col)\to\HHH(\Gamma,\col+g.\delta p)$ which acts on the boundary edges of the plaquette
$p$  and is given on a triangle plaquette by
\begin{widetext}
\begin{equation}
B_p^s
\left| \hackcenter{\begin{tikzpicture}[   decoration={markings, mark=at position 0.6 with {\arrow{>}};}, scale =0.8]
    \draw[thick, blue, postaction={decorate}] (.75,0) -- (-.75,0);
    \draw[thick, blue, postaction={decorate}] (-.75,0) -- (0,-1);
    \draw[thick, blue, postaction={decorate}]  (0,-1) --  (.75,0) ;
    \draw[thick, blue, postaction={decorate}]  (-.75,0) --  (-1.25,.5) ;
    \draw[thick, blue, postaction={decorate}]  (.75,0) --  (1.25,.5) ;
       \draw[thick, blue, postaction={decorate}]  (0,-1) --  (0,-1.5) ;
    \node at (0,.35) {$\scs j_3$};
     \node at (-.75,-.5) {$\scs j_1$};
     \node at (0,-.4) {$\scs p$};
     \node at (.75,-.5) {$\scs j_2$};
     \node at (-1.2,0) {$\scs k_3$};
      \node at (1.3,0) {$\scs k_2$};
       \node at (-.3,-1.25) {$\scs k_1$};
\end{tikzpicture}}
\right\rangle
\nonumber\\
=
\sum_{j_1',j_2',j_3'}
\md({j_1'})\md({j_2'})\md({j_3'})
N^{j_3' s j_3}_{j_1^{\ast} k_3 j_1'^{\ast}}
N^{j_1' s j_1}_{j_2^{\ast} k_1 j_2'^{\ast}}
N^{j_2' s j_2}_{j_3^{\ast} k_2 j_3'^{\ast}}
\left| \hackcenter{\begin{tikzpicture}[   decoration={markings, mark=at position 0.6 with {\arrow{>}};}, scale =0.8]
    \draw[thick, blue, postaction={decorate}] (.75,0) -- (-.75,0);
    \draw[thick, blue, postaction={decorate}] (-.75,0) -- (0,-1);
    \draw[thick, blue, postaction={decorate}]  (0,-1) --  (.75,0) ;
    \draw[thick, blue, postaction={decorate}]  (-.75,0) --  (-1.25,.5) ;
    \draw[thick, blue, postaction={decorate}]  (.75,0) --  (1.25,.5) ;
       \draw[thick, blue, postaction={decorate}]  (0,-1) --  (0,-1.5) ;
    \node at (0,.35) {$\scs j_3'$};
     \node at (-.75,-.5) {$\scs j_1'$};
     \node at (.75,-.5) {$\scs j_2'$};
     \node at (0,-.4) {$\scs p$};
     \node at (-1.2,0) {$\scs k_3$};
      \node at (1.3,0) {$\scs k_2$};
       \node at (-.3,-1.25) {$\scs k_1$};
\end{tikzpicture}}
\right\rangle.
\end{equation}
More generally, for a plaquette with $n$ sides, we define the plaquette operator by:
\begin{equation}
B_p^s
\left| \hackcenter{\begin{tikzpicture}[   decoration={markings, mark=at position 0.6 with {\arrow{>}};}, scale =0.8]
    \draw[thick, blue, postaction={decorate}] (-.75,-.25) -- (0,-1);
    \draw[thick, blue, postaction={decorate}]  (0,-1) --  (.75,-.25) ;
    \draw[thick, blue, postaction={decorate}]  (-.75,-.25) --  (-1.35,-.45) ;
    \draw[thick, blue, postaction={decorate}]  (.75,-.25) --  (1.35,-.45) ;
       \draw[thick, blue, postaction={decorate}]  (0,-1) --  (0,-1.5) ;
        \draw[thick, blue, postaction={decorate}]  (.75,-.25) --  (.75,.5) ;
         \draw[thick, blue, postaction={decorate}]  (-.75,.5) --  (-.75,-.25) ;
         \draw[thick, blue, postaction={decorate}]  (.75,.5) --  (1.25,.75) ;
           \draw[thick, blue, postaction={decorate}]  (-.75,.5) --  (-1.25,.75) ;
    \node at (0,.5) {$\dots$};
     \node at (-.65,-.85) {$\scs j_1$};
      \node at (-1.1,.25) {$\scs j_n$};
      \node at (1.1,.25) {$\scs j_3$};
     \node at (.65,-.85) {$\scs j_2'$};
     \node at (-1.3,-.65) {$\scs k_n$};
      \node at (1.3,-.65) {$\scs k_2$};
       \node at (-1.25,.95) {$\scs k_{n-1}$};
      \node at (1.25,.95) {$\scs k_3$};
       \node at (-.3,-1.45) {$\scs k_1$};
       \node at (0,0) {$\scs p$};
\end{tikzpicture}}
\right\rangle
\nonumber\\
 :=
\sum_{j_1',\ldots,j_n'}
\prod_{i=1}^n
\md({j_i'})
N^{j_i' s j_i}_{j_{i+1}^{\ast} k_i j_{i+1}'^{\ast}}
\left| \hackcenter{\begin{tikzpicture}[   decoration={markings, mark=at position 0.6 with {\arrow{>}};}, scale =0.8]
    \draw[thick, blue, postaction={decorate}] (-.75,-.25) -- (0,-1);
    \draw[thick, blue, postaction={decorate}]  (0,-1) --  (.75,-.25) ;
    \draw[thick, blue, postaction={decorate}]  (-.75,-.25) --  (-1.35,-.45) ;
    \draw[thick, blue, postaction={decorate}]  (.75,-.25) --  (1.35,-.45) ;
       \draw[thick, blue, postaction={decorate}]  (0,-1) --  (0,-1.5) ;
        \draw[thick, blue, postaction={decorate}]  (.75,-.25) --  (.75,.5) ;
         \draw[thick, blue, postaction={decorate}]  (-.75,.5) --  (-.75,-.25) ;
         \draw[thick, blue, postaction={decorate}]  (.75,.5) --  (1.25,.75) ;
           \draw[thick, blue, postaction={decorate}]  (-.75,.5) --  (-1.25,.75) ;
    \node at (0,.5) {$\dots$};
     \node at (-.65,-.85) {$\scs j_1'$};
      \node at (-1.1,.25) {$\scs j_n'$};
      \node at (1.1,.25) {$\scs j_3'$};
     \node at (.65,-.85) {$\scs j_2'$};
     \node at (-1.3,-.65) {$\scs k_n$};
      \node at (1.3,-.65) {$\scs k_2$};
       \node at (-1.25,.95) {$\scs k_{n-1}$};
      \node at (1.25,.95) {$\scs k_3$};
       \node at (-.3,-1.45) {$\scs k_1$};
       \node at (0,0) {$\scs p$};
\end{tikzpicture}}
\right\rangle .
\nonumber
\end{equation}
\end{widetext}
Note that the conditions on $6j$ symbols \eqref{eq:6jadmissible} ensure that $\deg(j_i')= \deg(j_i) - \deg(s)$ otherwise the right hand side is zero in the above formula.

Then we define $\Gr$-indexed plaquette operators
\begin{equation} \label{eq:Balpha}
  B_p^{g}=
  \sum_{s\in I_{g}}\bb(s)B_p^{s}:\HHH(\Gamma,\col)\to\HHH(\Gamma,\col+g.\delta p).
\end{equation}

\begin{proposition} \label{prop:plaqexp}
 If $g_1$, $g_2$  and $g_1+g_2$ are generic, then
  $$B_p^{g_1}B_p^{g_2}=B_p^{g_1+g_2}.$$
  Moreover, if $g_1$ and $g_2$ are generic then $B_p^{g_1}B_p^{-g_1^{}}=B_p^{g_2} B_p^{-g_2^{}}$.
\end{proposition}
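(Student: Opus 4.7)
The plan is to prove the first identity $B_p^{g_1}B_p^{g_2}=B_p^{g_1+g_2}$ by direct expansion on a basis state and repeated application of the pentagon identity \eqref{eq:pent} and the branching identity \eqref{eq:b}; the second identity then follows from part~1 together with commutativity (which itself follows from part~1 and abelianness of $\Gr$).

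Applying $B_p^{s_2}$ then $B_p^{s_1}$ to a basis state $\ket{\Gamma,\sigma}$ produces a double sum over intermediate labels $j_i''$ and new labels $j_i'$ around $\delta p$, with a product over the vertices $v_i$ of paired $6j$-symbols
\begin{equation*}
\md(j_i')\md(j_i'')\, N^{j_i' s_1 j_i''}_{j_{i+1}''^{\ast} k_i j_{i+1}'^{\ast}}\, N^{j_i'' s_2 j_i}_{j_{i+1}^{\ast} k_i j_{i+1}''^{\ast}}.
\end{equation*}
Topologically this represents two concentric loops around $\delta p$ labeled $s_1,s_2$ with intermediate edges $j_i''$. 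At each vertex I would apply the pentagon identity \eqref{eq:pent} read right to left, which rewrites this pair as a sum over a new label $t_i$ of three $6j$-symbols: a standard plaquette $6j$-symbol $N^{j_i' t_i j_i}_{j_{i+1}^{\ast} k_i j_{i+1}'^{\ast}}$ together with two \emph{fusion} symbols encoding $s_1\otimes s_2\to t_i$. Tetrahedral symmetry \eqref{eq:symm} puts these symbols into canonical form.

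Summing over the intermediate labels $j_i''$ and applying orthogonality \eqref{eq-ortho} to the fusion symbols across adjacent vertices forces the labels $t_i$ to coincide to a single global label $t$, and the residual fusion symbols collapse to the branching factor $\delta_{t^{\ast} s_1 s_2}$ up to scalars that are absorbed into the remaining product. The outcome has the shape $\sum_{t}\bigl(\sum_{s_1\in I_{g_1},\, s_2\in I_{g_2}}\bb(s_1)\bb(s_2)\delta_{t^{\ast} s_1 s_2}\bigr) B_p^{t}$, and the $\bb$-identity \eqref{eq:b} collapses the inner sum to $\bb(t)$ for each $t\in I_{g_1+g_2}$, yielding $B_p^{g_1+g_2}$ by \eqref{eq:Balpha}.

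For the second identity, part~1 combined with $h_1+h_2=h_2+h_1$ yields commutativity $B_p^{h_1}B_p^{h_2}=B_p^{h_2}B_p^{h_1}$ for generic $h_1,h_2,h_1+h_2$. Telescoping through $B_p^{-g_1}B_p^{g_2}=B_p^{g_2-g_1}$ and $B_p^{g_1}B_p^{g_2-g_1}=B_p^{g_2}$ (inserting an auxiliary generic $g_3$ with $g_1-g_3$ and $g_2-g_3$ generic if necessary, possible since $\X$ is small), the product $B_p^{g_1}B_p^{-g_1}B_p^{g_2}B_p^{-g_2}$ simplifies directly to $B_p^{g_2}B_p^{-g_2}$; commuting the two blocks past each other by the commutativity above yields $B_p^{g_1}B_p^{-g_1}$ for the same product, giving the equality. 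The main obstacle is the global consistency check in the pentagon step: the vertex labels $t_i$ introduced locally must be forced to coincide around the cyclic sequence of vertices of $p$, relying on careful matching of shared labels $j_i''$ between adjacent vertices together with correct handling of conjugate labels and cyclic orderings at each trivalent vertex.
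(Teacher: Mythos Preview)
Your approach to the first identity is essentially the same as the paper's: expand $B_p^{g_2}B_p^{g_1}$ on a basis state, apply the pentagon identity \eqref{eq:pent} at each vertex to introduce auxiliary labels (the paper calls them $a_i$, you call them $t_i$), use tetrahedral symmetry \eqref{eq:symm} to align adjacent factors, apply orthogonality \eqref{eq-ortho} across neighbouring vertices to force all $t_i$ to coincide to a single $t$, and finish with the $\bb$-identity \eqref{eq:b}. Your description of the ``global consistency check'' is exactly the reindexing/orthogonality step the paper carries out explicitly.

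For the second identity your argument is correct but more elaborate than necessary. The paper avoids commutativity and the four-fold product entirely: it simply picks an auxiliary generic $g_3$ (with $g_1-g_3$ generic), writes $B_p^{g_1}=B_p^{g_3}B_p^{g_1-g_3}$ by part~1, and then telescopes
\[
B_p^{g_1}B_p^{-g_1}=B_p^{g_3}B_p^{g_1-g_3}B_p^{-g_1}=B_p^{g_3}B_p^{-g_3}.
\]
The same chain with $g_2$ in place of $g_1$ gives $B_p^{g_2}B_p^{-g_2}=B_p^{g_3}B_p^{-g_3}$, and equality follows. This uses only associativity and part~1, with a single auxiliary element, whereas your route requires checking genericity of $g_2-g_1$ (or further auxiliaries) and invoking commutativity of the blocks. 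Both work, but the paper's two-line telescope is the cleaner path.
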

\begin{widetext}
\begin{proof}
For a plaquette $p$ and elements $g_2$ and $g_1$, we compute the composition of plaquette operators:

\begin{align} \nn
B_p^{g_2} B_p^{g_1}(x) &\; =
\sum_{\substack{s \in I_{g_1} \\ t \in I_{g_2}}} \bb(s) \bb(t) \sum_{j_1',\ldots, j_n'} \sum_{j_1'',\ldots, j_n''} \prod_{i=1}^n
 \md(j_i') \md(j_i'')
 N^{j_i' s j_i}_{j_{i+1}^{\ast} k_i j_{i+1}'^{\ast}}  N^{j_i'' t j_i'}_{j_{i+1}'^{\ast} k_i j_{i+1}''^{\ast}} .
\\
& \refequal{\eqref{eq:pent}} \;\;
\nn 
\sum_{\substack{s \in I_{g_1} \\ t \in I_{g_2}}} \bb(s) \bb(t) \sum_{j_1',\ldots, j_n'} \sum_{j_1'',\ldots, j_n''} \prod_{i=1}^n
 \md(j_i') \md(j_i'') \sum_{a_i} \md(a_i)
 N_{s j_i a_i}^{j_i'' t j_i'}
 N^{j_i'' a_i j_i}_{j_{i+1}^{\ast} k_i j_{i+1}''^{\ast}}
 N^{t s a_i}_{j_{i+1}^{\ast} j_{i+1}''^{\ast} j_{i+1}'^{\ast}}
\\
& \refequal{\eqref{eq:symm}} \; \;
\label{plaqexp3}
\sum_{\substack{s \in I_{g_1} \\ t \in I_{g_2}}} \bb(s) \bb(t) \sum_{j_1',\ldots, j_n'} \sum_{j_1'',\ldots, j_n''} \prod_{i=1}^n
 \md(j_i') \md(j_i'') \sum_{a_i} \md(a_i)
 N^{j_i'' a_i j_i}_{j_{i+1}^{\ast} k_i j_{i+1}''^{\ast}}
 N^{t s a_i}_{j_{i+1}^{\ast} j_{i+1}''^{\ast} j_{i+1}'^{\ast}}
  N_{j_{i}'^{\ast} j_{i}''^{\ast} j_{i}^{\ast}}^{a_i s^{\ast} t}  .
\end{align}
Reindexing the product for the last factor in \eqref{plaqexp3} yields that $B_p^{g_2} B_p^{g_1}(x) $ is equal to
\begin{equation} \label{plaqexp4}
\sum_{\substack{s \in I_{g_1} \\ t \in I_{g_2}}} \bb(s) \bb(t) \sum_{j_1',\ldots, j_n'} \sum_{j_1'',\ldots, j_n''} \prod_{i=1}^n
 \md(j_i') \md(j_i'') \sum_{a_i} \md(a_i)
 N^{j_i'' a_i j_i}_{j_{i+1}^{\ast} k_i j_{i+1}''^{\ast}}
 N^{t s a_i}_{j_{i+1}^{\ast} j_{i+1}''^{\ast} j_{i+1}'^{\ast}}
  N_{j_{i+1}'^{\ast} j_{i+1}''^{\ast} j_{i+1}^{\ast}}^{a_{i+1} s^{\ast} t}  .
\end{equation}
Using the orthogonality property \eqref{eq-ortho} for the last two factors in \eqref{plaqexp4} yields
\begin{align} \label{plaqexp5}
B_p^{g_2} B_p^{g_1}(x) &\;=\;
\sum_{\substack{s \in I_{g_1} \\ t \in I_{g_2}}} \bb(s) \bb(t)
\sum_{j_1'',\ldots, j_n''} \prod_{i=1}^n
 \md(j_i'') \sum_{a_i} \md(a_i)
  N^{j_i'' a_i j_i}_{j_{i+1}^{\ast} k_i j_{i+1}''^{\ast}} \frac{\delta_{a_i,a_{i+1}}}{\md(a_{i+1})}
  \delta_{t s a_{i+1}^{\ast}} \delta_{a_{i+1} j_{i+1}^{\ast} j_i''} \\
  &\;=\;
\sum_{\substack{s \in I_{g_1} \\ t \in I_{g_2}}} \bb(s) \bb(t)
\sum_{j_1'',\ldots, j_n''} \prod_{i=1}^n
 \md(j_i'') \sum_{a}
  N^{j_i'' a j_i}_{j_{i+1}^{\ast} k_i j_{i+1}''^{\ast}}
  \delta_{t s a_{}^{\ast}} \delta_{a_{} j_{i+1}^{\ast} j_i''}
\nn \\
&\refequal{\eqref{eq:b}}
\sum_{u \in I_{g_1+g_2}} \bb(u) \sum_{j_1'',\ldots,j_n''} \prod_i \md(j_i'')   N^{j_i'' a j_i}_{j_{i+1}^{\ast} k_i j_{i+1}''^{\ast}}
\;\; =\;\;
B_p^{g_1 +g_2}.
\end{align}

For the second statement, first note that for generic colors $B_p^{g_1}=B_p^{g_3}B_p^{g_1-g_3}$. Then
$$B_p^{g_1}B_p^{-g_1^{}}=B_p^{g_3}B_p^{g_1 - g_3^{} }B_p^{-g_1^{}}=B_p^{g_3}B_p^{-g_3^{}}. $$
\end{proof}

\subsection{Commutativity of operators}
In this subsection we show that plaquette operators commute following a proof given in \cite{HW}.
\begin{proposition} \label{1edgecomm}
Let $p$ and $p'$ be two plaquettes which share exactly one common edge $e_1$ in a state $\ket{\Gamma,\sigma}$ as in \eqref{adjplaq1}.
Then $B_{p'}^t B_{p}^s=B_{p}^s B_{p'}^t $.
\begin{equation} \label{adjplaq1}
\ket{\Gamma, \sigma} \;\; := \;\;
\hackcenter{\begin{tikzpicture}[ decoration={markings, mark=at position 0.6 with {\arrow{>}};}, scale =0.9]
    \draw[thick, blue, postaction={decorate}] (0,-.5) -- (0,.5);
    \draw[thick, blue, postaction={decorate}] (-.75,-1) -- (0,-.5);
    \draw[thick, blue, postaction={decorate}] (.75,-1) -- (0,-.5);
     \draw[thick, blue, postaction={decorate}] (0,.5) -- (-.75,1);
     \draw[thick, blue, postaction={decorate}] (0,.5) -- (.75,1);
      \draw[thick, blue, postaction={decorate}] (-.75,1) -- (-.75,1.5);
      \draw[thick, blue, postaction={decorate}] (.75,1) -- (.75,1.5);
       \draw[thick, blue, postaction={decorate}] (-1.75,1) -- (-2,1.5);
      \draw[thick, blue, postaction={decorate}] (1.75,1) -- (2,1.5);
     \draw[thick, blue, postaction={decorate}] (-.75,-1.5) -- (-.75,-1);
     \draw[thick, blue, postaction={decorate}] (.75,-1.5) -- (.75,-1);
     \draw[thick, blue, postaction={decorate}] (.75,1) -- (1.75,1);
       \draw[thick, blue, postaction={decorate}] (1.75,1) -- (2.25,.5);
      \draw[thick, blue, postaction={decorate}] (1.75,-1) -- (.75,-1);
       \draw[thick, blue, postaction={decorate}] (-1.75,-1) -- (-.75,-1);
       \draw[thick, blue, postaction={decorate}] (-.75,1) -- (-1.75,1);
       \draw[thick, blue, postaction={decorate}] (-1.75,1) -- (-2.25,.5);
      \node at (-2.25,0){$\ddots$};
     \node at (-1.25,0){$p$};
    \node at (1.25,0){$p'$};
   \node at (.3,-.2){$\scs e_1$};
   \node at (-.15,-1.05){$\scs f_{a+1}$};
   \node at (.85,-.75){$\scs g_{b+1}$};
    \node at (1.45,-1.3){$\scs g_{b}$};
    \node at (-1.45,-1.3){$\scs f_a$};
     \node at (-.3,1.1){$\scs f_2$};
     \node at (-2.3,.9){$\scs f_4$};
      \node at (2.3,.9){$\scs g_4$};
   \node at (.3,1.1){$\scs g_2$};
   \node at (.9,-1.65){$\scs \ell_b$};
   \node at (1,1.65){$\scs \ell_2$};
   \node at (2.25,1.5){$\scs \ell_3$};
   \node at (-.9,-1.65){$\scs k_a$};
   \node at (-.95,1.65){$\scs k_2$};    \node at (-2.2,1.65){$\scs k_3$};
    \node at (-1.4,1.35){$\scs f_3$};
      \node at (1.4,1.35){$\scs g_3$};
      \node at (2.25,-.25){$\vdots $};
\end{tikzpicture}}
\end{equation}
\end{proposition}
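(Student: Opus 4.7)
The plan is to expand both compositions $B_p^s B_{p'}^t \ket{\Gamma,\sigma}$ and $B_{p'}^t B_p^s \ket{\Gamma,\sigma}$ using the explicit formula from Section~\ref{subsec:plaquette} and compare the coefficient of each final basis state. Each plaquette operator produces a sum over new labels on its boundary edges, weighted by a product of modified dimensions and one $6j$-symbol per vertex of the plaquette. Because $p$ and $p'$ share only the single edge $e_1$, the contributions from any vertex not incident to $e_1$ split cleanly: at a vertex of $\delta p$ away from $e_1$ only $B_p^s$ produces a $6j$-symbol, and similarly at a vertex of $\delta p'$ away from $e_1$ only $B_{p'}^t$ does. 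These factors appear identically in both orderings and may be pulled outside the comparison.

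The analysis then reduces to the two endpoints $v, v'$ of $e_1$ together with the label carried by $e_1$. At each of these endpoints, both operators contribute a $6j$-symbol, and the intermediate label that $e_1$ carries between the two applications is summed over. In the order $B_{p'}^t \circ B_p^s$, the edge is relabeled $j \to j' \to j''$, producing a sum of the schematic form $\sum_{j'} \md(j') \, N^{j' s j}_{\bullet} \, N^{j'' t j'}_{\bullet}$ at each of $v$ and $v'$. In the opposite order, the intermediate label has a different degree but the initial and final labels are the same, yielding $\sum_{\tilde j} \md(\tilde j) \, N^{\tilde j t j}_{\bullet} \, N^{j'' s \tilde j}_{\bullet}$. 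The equality of these two sums at each vertex is precisely the kind of associativity encoded by the pentagon identity \eqref{eq:pent}, and I expect it to follow after an appropriate use of tetrahedral symmetry \eqref{eq:symm} to align the indices, in the same spirit as the manipulation carried out in the proof of Proposition~\ref{prop:plaqexp}.

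The main obstacle will be careful index bookkeeping. The shared edge $e_1$ appears with opposite orientations in $\delta p$ and $\delta p'$, so the color shift contributed by each operator carries a relative sign and the relevant $6j$-symbols involve conjugate string types; the cyclic orderings at $v$ and $v'$ must also be tracked. Verifying that the admissibility conditions \eqref{eq:6jadmissible} hold throughout the reindexing, and that after a single application of tetrahedral symmetry and the pentagon identity the resulting expression matches the opposite ordering exactly, is the delicate step. Since the same local manipulation is required independently at the two endpoints, success at one endpoint transfers immediately to the other, and the two orderings then produce identical coefficients for each final basis state, giving $B_{p'}^t B_p^s = B_p^s B_{p'}^t$.
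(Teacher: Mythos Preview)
Your overall strategy matches the paper's: expand both orderings, cancel the factors coming from vertices away from $e_1$, and reduce to an identity among $6j$-symbols at the two endpoints of $e_1$, to be verified via tetrahedral symmetry and the pentagon identity. That much is correct and is exactly what the paper does.

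However, your final paragraph contains a genuine misstep. You assert that ``the same local manipulation is required independently at the two endpoints'' and that ``success at one endpoint transfers immediately to the other.'' This is not how the reduction works. The two endpoints of $e_1$ are coupled through the \emph{single} intermediate label carried by $e_1$ between the two applications: in the order $B_{p'}^t B_p^s$ one sums over an intermediate $e_1'$ of degree $\deg(e_1)-\deg(s)$, while in the order $B_p^s B_{p'}^t$ one sums over an intermediate $r_1'$ of a different degree. There is no term-by-term matching and no factorization into two independent vertex-local identities. In the paper the reduced identity has four $6j$-symbols on each side (two from each endpoint) and one summation variable; the proof then applies the pentagon identity \emph{once on each side} (to a pair of factors on the left and to a different pair on the right), introducing a new summation variable on each side, and only then do the resulting five-factor expressions match after a relabeling. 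So ``a single application'' of pentagon is also not enough: you need one on each side, and the matching is global across both endpoints, not local to each.
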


\begin{proof}
First we compute
$B_{p'}^t B_{p}^s$:
\begin{align} 
&B_{p'}^t B_{p}^s \ket{\Gamma, \Phi} = \nn
\\
& \nn \quad \sum_{e_1''} \sum_{g_2',\ldots, g_{b+1}'}
\sum_{e_1'} \sum_{f_2', \ldots, f_{a+1}'}
\left[\md(e_1') N^{e_1' s e_1}_{f_2^{\ast} g_2 f_2'^{\ast}}
\md(f_{a+1}') N^{f_{a+1}' s f_{a+1}}_{e_1^{\ast} g_{b+1}^{\ast} e_1'^{\ast}}
 \prod_{i=1}^{a-1} \md(f_{i+1}') N^{f_{i+1}' s f_{i+1}}_{f_{i+2}^{\ast} k_{i+1} f_{i+2}'^{\ast}}\right]
 \\
& \;\; \times \left[
\md(e_1''^{\ast}) N^{e_1''^{\ast} t e_1'^{\ast}}_{g_{b+1} f_{a+1}'^{\ast} g_{b+1}'}
\md(g_{2}'^{\ast}) N^{g_2'^{\ast} t g_2^{\ast}}_{e_1' f_2' e_1''}
\prod_{i=2}^{b} \md(g_{i+1}'^{\ast}) N^{g_{i+1}'^{\ast} t g_{i+1}}_{g_i' \ell_i g_i} \right]
\nonumber
\Biggl|
\hackcenter{\begin{tikzpicture}[ decoration={markings, mark=at position 0.6 with {\arrow{>}};}, scale =0.9]
    \draw[thick, blue, postaction={decorate}] (0,-.5) -- (0,.5);
    \draw[thick, blue, postaction={decorate}] (-.75,-1) -- (0,-.5);
    \draw[thick, blue, postaction={decorate}] (.75,-1) -- (0,-.5);
     \draw[thick, blue, postaction={decorate}] (0,.5) -- (-.75,1);
     \draw[thick, blue, postaction={decorate}] (0,.5) -- (.75,1);
      \draw[thick, blue, postaction={decorate}] (-.75,1) -- (-.75,1.5);
      \draw[thick, blue, postaction={decorate}] (.75,1) -- (.75,1.5);
       \draw[thick, blue, postaction={decorate}] (-1.75,1) -- (-2,1.5);
      \draw[thick, blue, postaction={decorate}] (1.75,1) -- (2,1.5);
     \draw[thick, blue, postaction={decorate}] (-.75,-1.5) -- (-.75,-1);
     \draw[thick, blue, postaction={decorate}] (.75,-1.5) -- (.75,-1);
     \draw[thick, blue, postaction={decorate}] (.75,1) -- (1.75,1);
       \draw[thick, blue, postaction={decorate}] (1.75,1) -- (2.25,.5);
      \draw[thick, blue, postaction={decorate}] (1.75,-1) -- (.75,-1);
       \draw[thick, blue, postaction={decorate}] (-1.75,-1) -- (-.75,-1);
       \draw[thick, blue, postaction={decorate}] (-.75,1) -- (-1.75,1);
       \draw[thick, blue, postaction={decorate}] (-1.75,1) -- (-2.25,.5);
     \node at (-1.25,0){$p$};
    \node at (1.25,0){$p'$};
   \node at (.3,-.2){$\scs e_1''$};
   \node at (-.05,-1.05){$\scs f_{a+1}'$};
   \node at (.85,-.7){$\scs g_{b+1}'$};
    \node at (1.45,-1.3){$\scs g_b'$};
    \node at (-1.45,-1.3){$\scs f_a'$};
     \node at (-.3,1.1){$\scs f_2'$};
     \node at (-2.3,.9){$\scs f_4'$};
      \node at (2.3,.9){$\scs g_4'$};
   \node at (.3,1.1){$\scs g_2'$};
   \node at (.9,-1.65){$\scs \ell_b$};
   \node at (1,1.65){$\scs \ell_2$};
   \node at (2.25,1.5){$\scs \ell_3$};
   \node at (-.9,-1.65){$\scs k_a$};
   \node at (-.95,1.65){$\scs k_2$};    \node at (-2.2,1.65){$\scs k_3$};
    \node at (-1.4,1.35){$\scs f_3'$};
      \node at (1.4,1.35){$\scs g_3'$};
      \node at (2.25,-.25){$\vdots $};
\end{tikzpicture}}
\Biggr\ra  .
\end{align}
Compare with the computation of $B_{p}^s B_{p'}^t$:
\begin{align}  
&B_{p}^s B_{p'}^t \ket{\Gamma, \Phi}= \nn
\\
&\nn \quad \sum_{e_1''} \sum_{f_2',\ldots,f_{a+1}'}  \sum_{r_1'} \sum_{g_2',\ldots, g_{b+1}'}
 \left[\md(r_1'^{\ast}) N^{r_1'^{\ast} t e_1^{\ast}}_{g_{b+1} f_{a+1}^{\ast} g_{b+1}'}
\md(g_2'^{\ast}) N^{g_2'^{\ast} t g_2^{\ast}}_{e_1 f_2 r_1'}
\prod_{i=2}^b \md(g_{i+1}'^{\ast}) N^{g_{i+1}'^{\ast} t g_{i+1}^{\ast}}_{g_{i}' \ell_{i} g_i} \right] \\
& \;\; \times\left[\md(e_1'') N^{e_1'' s r_1'}_{f_2^{\ast} g_2' f_2'^{\ast}}
\md(f_{a+1}') N^{f_{a+1}' s f_{a+1}}_{r_1'^{\ast} g_{b+1}'^{\ast} e_1''^{\ast}}
\prod_{i=1}^{a-1} \md(f_{i+1}') N^{f_{i+1}' s f_{i+1}}_{f_{i+2}^{\ast} k_{i+1} f_{i+2}'^{\ast}}
\right] \nonumber
\Biggl|
\hackcenter{\begin{tikzpicture}[ decoration={markings, mark=at position 0.6 with {\arrow{>}};}, scale =0.9]
    \draw[thick, blue, postaction={decorate}] (0,-.5) -- (0,.5);
    \draw[thick, blue, postaction={decorate}] (-.75,-1) -- (0,-.5);
    \draw[thick, blue, postaction={decorate}] (.75,-1) -- (0,-.5);
     \draw[thick, blue, postaction={decorate}] (0,.5) -- (-.75,1);
     \draw[thick, blue, postaction={decorate}] (0,.5) -- (.75,1);
      \draw[thick, blue, postaction={decorate}] (-.75,1) -- (-.75,1.5);
      \draw[thick, blue, postaction={decorate}] (.75,1) -- (.75,1.5);
       \draw[thick, blue, postaction={decorate}] (-1.75,1) -- (-2,1.5);
      \draw[thick, blue, postaction={decorate}] (1.75,1) -- (2,1.5);
     \draw[thick, blue, postaction={decorate}] (-.75,-1.5) -- (-.75,-1);
     \draw[thick, blue, postaction={decorate}] (.75,-1.5) -- (.75,-1);
     \draw[thick, blue, postaction={decorate}] (.75,1) -- (1.75,1);
       \draw[thick, blue, postaction={decorate}] (1.75,1) -- (2.25,.5);
      \draw[thick, blue, postaction={decorate}] (1.75,-1) -- (.75,-1);
       \draw[thick, blue, postaction={decorate}] (-1.75,-1) -- (-.75,-1);
       \draw[thick, blue, postaction={decorate}] (-.75,1) -- (-1.75,1);
       \draw[thick, blue, postaction={decorate}] (-1.75,1) -- (-2.25,.5);
     \node at (-1.25,0){$p$};
    \node at (1.25,0){$p'$};
   \node at (.3,-.2){$\scs e_1''$};
   \node at (-.25,-1.2){$\scs f_{a+1}'$};
   \node at (.35,-1.1){$\scs g_{b+1}'$};
    \node at (1.45,-1.3){$\scs g_b'$};
    \node at (-1.45,-1.3){$\scs f_a'$};
     \node at (-.3,1.1){$\scs f_2'$};
     \node at (-2.3,.9){$\scs f_4'$};
      \node at (2.3,.9){$\scs g_4'$};
   \node at (.3,1.1){$\scs g_2'$};
   \node at (.9,-1.65){$\scs \ell_b$};
   \node at (1,1.65){$\scs \ell_2$};
   \node at (2.25,1.5){$\scs \ell_3$};
   \node at (-.9,-1.65){$\scs k_a$};
   \node at (-.95,1.65){$\scs k_2$};    \node at (-2.2,1.65){$\scs k_3$};
    \node at (-1.4,1.35){$\scs f_3'$};
      \node at (1.4,1.35){$\scs g_3'$};
      \node at (2.25,-.25){$\vdots $};
\end{tikzpicture}}
\Biggr\ra.
\end{align}

Fixing $e_1'', f_2', \ldots, f_{a+1}', g_2', \ldots, g_{b+1}'$, and canceling identical factors 
leaves us with the task of verifying:
\begin{align} \label{pIpII3}
&\sum_{e_1'}
(\md(f_{a+1}') N^{f_{a+1}' s f_{a+1}}_{e_1^{\ast} g_{b+1}^{\ast} e_1'^{\ast}})
(\md(e_1') N^{e_1' s e_1}_{f_2^{\ast} g_2 f_2'^{\ast}})
(\md(e_1''^{\ast}) N^{e_1''^{\ast} t e_1'^{\ast}}_{g_{b+1} f_{a+1}'^{\ast} g_{b+1}'})
(\md(g_{2}'^{\ast}) N^{g_2'^{\ast} t g_2^{\ast}}_{e_1' f_2' e_1''})
\\
=&\sum_{r_1'}
(\md(r_1'^{\ast}) N^{r_1'^{\ast} t e_1^{\ast}}_{g_{b+1} f_{a+1}^{\ast} g_{b+1}'})
(\md(g_2'^{\ast}) N^{g_2'^{\ast} t g_2^{\ast}}_{e_1 f_2 r_1'})
(\md(f_{a+1}') N^{f_{a+1}' s f_{a+1}}_{r_1'^{\ast} g_{b+1}'^{\ast} e_1''^{\ast}})
(\md(e_1'') N^{e_1'' s r_1'}_{f_2^{\ast} g_2' f_2'^{\ast}}) . \nonumber
\end{align}
Using tetrahedral symmetry, we need to verify
\begin{align} \label{pIpII4}
&\sum_{e_1'}
 \md(f_{a+1}') \md(e_1') \md(e_1''^{\ast}) \md(g_{2}'^{\ast})
 N^{e_1'^{\ast} e_1 s}_{f_{a+1}^{\ast} f_{a+1}'^{\ast} g_{b+1}}
 N^{e_1''^{\ast} t e_1'^{\ast}}_{g_{b+1} f_{a+1}'^{\ast} g_{b+1}'}
 N^{g_2^{\ast} e_1 f_2}_{s^{\ast} f_2' e_1'}
 N^{g_2'^{\ast} t g_2^{\ast}}_{e_1' f_2' e_1''}
 \\
=&\sum_{r_1'}
\md(r_1'^{\ast}) \md(g_2'^{\ast}) \md(f_{a+1}') \md(e_1'')
N^{r_1'^{\ast} t e_1^{\ast}}_{g_{b+1} f_{a+1}^{\ast} g_{b+1}'}
N^{s^{\ast} e_1''^{\ast} r_1'^{\ast}}_{g_{b+1}' f_{a+1'}^{\ast} f_{a+1}'^{\ast}}
N^{g_2'^{\ast} t g_2^{\ast}}_{e_1 f_2 r_1'}
N^{f_2' e_1''^{\ast} g_2'^{\ast}}_{r_1' f_2 s} .
\nonumber
\end{align}
Now we perform the pentagon identity on the last two factors of the left-hand side of \eqref{pIpII4} and on the first two factors of the right-hand side of \eqref{pIpII4}, we obtain the following equality to be verified.
\begin{align} \label{pIpII5}
&\sum_{e_1', z_2}
 \md(f_{a+1}') \md(e_1') \md(e_1''^{\ast}) \md(g_{2}'^{\ast}) \md(z_2)
 N^{e_1'^{\ast} e_1 s}_{f_{a+1}^{\ast} f_{a+1}'^{\ast} g_{b+1}}
 N^{e_1''^{\ast} t e_1'^{\ast}}_{g_{b+1} f_{a+1}'^{\ast} g_{b+1}'}
N^{g_2'^{\ast} t g_2^{\ast}}_{e_1 f_2 z_2}
N^{g_2'^{\ast} z_2 f_2}_{s^{\ast} f_2' e_1''}
N^{t e_1 z_2}_{s^{\ast} e_1'' e_1'}
 \\
& \;\; = \sum_{r_1',z_1}
\md(r_1'^{\ast}) \md(g_2'^{\ast}) \md(f_{a+1}') \md(e_1'') \md(z_1)
N^{s^{\ast} e_1''^{\ast} r_1'^{\ast}}_{t e_1^{\ast} z_1}
N^{s^{\ast} z_1 e_1^{\ast}}_{g_{b+1} f_{a+1}^{\ast} f_{a+1}'^{\ast}}
N^{e_1''^{\ast} t z_1}_{g_{b+1} f_{a+1}'^{\ast} g_{b+1}'}
N^{g_2'^{\ast} t g_2^{\ast}}_{e_1 f_2 r_1'}
N^{f_2' e_1''^{\ast} g_2'^{\ast}}_{r_1' f_2 s} .
\nonumber
\end{align}
Letting $z_1=e_1'^{\ast}$, and $z_2=r_1'$, and using symmetry, we see that \eqref{pIpII5} holds by matching up the
first, second third, fourth, and fifth factors on the left-hand side with the
second, third, fourth, fifth, and first factors on the right-hand side respectively.
\end{proof}
\end{widetext}

\begin{corollary} \label{corplaqcomm}
Let $p$ and $p'$ be two plaquettes.
Then $B_{p'}^t B_{p}^s=B_{p}^s B_{p'}^t $.
\end{corollary}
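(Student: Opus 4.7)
The proof plan is a short case analysis on how the boundaries $\partial p$ and $\partial p'$ intersect. Plaquettes of the dual graph $\Gamma$ correspond bijectively to vertices of the triangulation $\T$, and a boundary edge of $\Gamma$ lies in both $\partial p$ and $\partial p'$ exactly when the associated vertices of $\T$ are joined by an edge of $\T$. For a simple triangulated surface, this happens for at most one edge.

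If $\partial p \cap \partial p' = \emptyset$, the formula for $B_p^s$ in Section~\ref{subsec:plaquette} shows that the operator acts nontrivially only on the labels attached to edges of $\partial p$, and symmetrically $B_{p'}^t$ modifies only labels on $\partial p'$. With these label sets disjoint, the two operators act on independent tensor factors of $\HHH(\Gamma,\col)$ and commute on the nose. If $\partial p \cap \partial p'$ consists of exactly one edge, the result is precisely Proposition~\ref{1edgecomm}.

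The only remaining scenario is that $\partial p$ and $\partial p'$ share more than one edge, which does not happen for a simple triangulated surface. If the combinatorial setup is relaxed to permit this, the strategy is to iterate the combination of tetrahedral symmetry \eqref{eq:symm}, the pentagon identity \eqref{eq:pent} and orthogonality \eqref{eq-ortho} once per shared edge, propagating virtual intermediate labels along the shared path in the same way that the labels $z_1,z_2$ were matched at the end of the proof of Proposition~\ref{1edgecomm}. The main obstacle in that extension is bookkeeping: one must track how the intermediate labels introduced at one shared edge feed into the $6j$-factors at neighbouring shared edges, and verify that the resulting chain of substitutions matches on the two orderings of $B_p^s$ and $B_{p'}^t$. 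A conceptually cleaner alternative invokes the TQFT picture from Section~\ref{sec:top}, where each plaquette operator realizes the attachment of a tetrahedral cobordism above its plaquette, and commutativity then expresses the order-independence of the glued $3$-manifold.
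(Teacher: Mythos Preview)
Your approach is essentially the paper's: a case split on the number of shared boundary edges, with the disjoint case obvious and the one-edge case deferred to Proposition~\ref{1edgecomm}. You even supply more justification than the paper does for why disjoint supports commute and why at most one edge can be shared.

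The one omission is the case $p=p'$. The paper lists it separately as ``obvious''; your argument that two plaquettes share at most one edge tacitly assumes $p\neq p'$, so the identical-plaquette case falls through the cracks of your trichotomy. For the application to $B_p$ this is harmless (trivially $B_pB_p=B_pB_p$), but the corollary as stated concerns $B_p^s$ and $B_p^t$ with possibly different $s,t$, so a sentence covering $p=p'$ is needed for completeness.
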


\begin{proof}
If $p$ and $p'$ are identical, or if they share no edges, the result is obvious.
The case that they share exactly one edge is proved in Proposition \ref{1edgecomm}.
\end{proof}

\subsection{Hermitian adjoints of plaquette operators}

See \cite{PhysRevB.103.195155} for a closely related calculation in the semisimple case.  In the computations below, we assume that the states are identical except in a neighborhood of the graph shown.

\begin{proposition} \label{plaqhermprop}
The Hermitian adjoint of $B_p^s$ is $B_p^{s^{\ast}}$.  In particular,
\[
\left(B_p^{g}\right)^{\dagger} = B_p^{-g} .
\]
\end{proposition}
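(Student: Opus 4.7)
The plan is to verify the adjoint identity directly on matrix elements in the basis $\{\ket{\Gamma, \sigma}\}$, which is orthogonal with respect to both $\la\cdot\mid\cdot\ra_+$ and the indefinite form $\la\cdot\mid\cdot\ra$ by \eqref{eq:eta-inner}. From the definition of $B_p^s$ one obtains
\[
\la \Gamma, \sigma' \mid B_p^s \Gamma, \sigma \ra \;=\; w(\sigma')\prod_{i=1}^n \md(j_i')\, N^{j_i' s j_i}_{j_{i+1}^{\ast} k_i j_{i+1}'^{\ast}},
\]
where $w(\sigma') = \prod_v \gamma(\sigma'(v))\prod_e \beta(\sigma'(e))/\prod_e \md(\sigma'(e))$. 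Exchanging the roles of $\sigma,\sigma'$ and of $s,s^\ast$ yields an analogous formula for $\la B_p^{s^\ast}\Gamma,\sigma'\mid \Gamma,\sigma\ra$. Using Hermiticity of $\la\cdot\mid\cdot\ra$, the reality of $\md$, $\beta$, $\gamma$, and cancelling the $\md$-factors on edges unchanged by the plaquette move, the desired identity $(B_p^s)^\dagger = B_p^{s^\ast}$ becomes
\[
w(\sigma')\prod_i N^{j_i' s j_i}_{j_{i+1}^{\ast} k_i j_{i+1}'^{\ast}} \;=\; w(\sigma)\prod_i \overline{N^{j_i s^{\ast} j_i'}_{j_{i+1}'^{\ast} k_i j_{i+1}^{\ast}}}.
\]

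Next I would apply the Hermitian structure formula \eqref{eq:6j} to each conjugated $6j$-symbol on the right, which produces four $\gamma$-factors and six $\beta$-factors per factor. Two applications of tetrahedral symmetry \eqref{eq:symm} then identify the resulting $6j$-symbol with $N^{j_i' s j_i}_{j_{i+1}^{\ast} k_i j_{i+1}'^{\ast}}$, matching the left-hand side. After these $6j$-symbols cancel, the required identity reduces to a purely combinatorial one in $\gamma$ and $\beta$ factors around the boundary cycle $\delta p$: the $\gamma$-$\beta$ product produced by \eqref{eq:6j} must equal $w(\sigma)/w(\sigma')$ restricted to the boundary edges and vertices.

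The key mechanism for this last step is \eqref{eq:gamma-theta}, which absorbs a triple $\beta(a)\beta(b)\beta(c)$ into the inverse of two $\gamma$-factors; combined with $\beta(j^\ast)=\beta(j)$ and the fact that each boundary edge $e_i$ contributes its $\beta$-factor to two adjacent vertex contributions in \eqref{eq:6j}, the $\beta$-factors pair up naturally under cyclic reindexing around $\delta p$. At each vertex $v_i$ of $\delta p$, two of the four $\gamma$-factors from \eqref{eq:6j} then combine with the ratio $\gamma(\sigma(v_i))/\gamma(\sigma'(v_i))$ coming from $w(\sigma)/w(\sigma')$, while the remaining two telescope with $\gamma$-factors at the neighboring vertex $v_{i\pm 1}$ using the cyclic invariance of $\gamma$. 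I expect the main obstacle to be the careful bookkeeping of edge orientations and cyclic indices: one must track which of the four $\gamma$'s produced by \eqref{eq:6j} at $v_i$ sits on which triple, and verify compatibility with the orientation conventions of the diagram in Section~\ref{subsec:plaquette}. Finally, the second statement $(B_p^g)^{\dagger}=B_p^{-g}$ follows from $B_p^g = \sum_{s\in I_g}\bb(s)B_p^s$, linearity of the adjoint, reality of $\bb$ with $\bb(s^\ast)=\bb(s)$, and the bijection $s\mapsto s^\ast \colon I_g \to I_{-g}$ in the sum \eqref{eq:Balpha}.
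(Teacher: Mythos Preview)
Your approach is essentially the same as the paper's: reduce to matrix elements in the $\ket{\Gamma,\sigma}$ basis, apply the Hermitian structure identity \eqref{eq:6j} to each conjugated $6j$-symbol, use tetrahedral symmetry \eqref{eq:symm} to match the $6j$-symbols on both sides, and then collapse the residual $\gamma$-$\beta$ factors via \eqref{eq:gamma-theta} after a cyclic reindexing around $\delta p$. The only cosmetic difference is that the paper conjugates the $B_p^s$ side rather than the $B_p^{s^\ast}$ side, and in fact a single application of \eqref{eq:symm} (not two) suffices to identify the $6j$-symbols; otherwise your outline and the paper's proof coincide, including the final step for $(B_p^g)^\dagger=B_p^{-g}$.
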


\begin{proof}
We begin by computing matrix elements of $(B_p^s)^{\dagger}$, where we ignore the contributions to the inner product from edges not in the neighborhood of the plaquette, as these will cancel with the identical contributions in \eqref{plaqherm4}.
\begin{align}
& \Biggl\langle
\hackcenter{\begin{tikzpicture}[   decoration={markings, mark=at position 0.6 with {\arrow{>}};}, scale =0.7]
    \draw[thick, blue, postaction={decorate}] (-.75,-.25) -- (0,-1);
    \draw[thick, blue, postaction={decorate}]  (0,-1) --  (.75,-.25) ;
    \draw[thick, blue, postaction={decorate}]  (-.75,-.25) --  (-1.35,-.45) ;
    \draw[thick, blue, postaction={decorate}]  (.75,-.25) --  (1.35,-.45) ;
       \draw[thick, blue, postaction={decorate}]  (0,-1) --  (0,-1.5) ;
        \draw[thick, blue, postaction={decorate}]  (.75,-.25) --  (.75,.5) ;
         \draw[thick, blue, postaction={decorate}]  (-.75,.5) --  (-.75,-.25) ;
         \draw[thick, blue, postaction={decorate}]  (.75,.5) --  (1.25,.75) ;
           \draw[thick, blue, postaction={decorate}]  (-.75,.5) --  (-1.25,.75) ;
    \node at (0,.5) {$\dots$};
     \node at (-.65,-.85) {$\scs j_1$};
      \node at (-1.1,.25) {$\scs j_n$};
      \node at (1.1,.25) {$\scs j_3$};
     \node at (.65,-.85) {$\scs j_2$};
     \node at (-1.3,-.65) {$\scs k_n$};
      \node at (1.3,-.65) {$\scs k_2$};
       \node at (-1.25,.95) {$\scs k_{n-1}$};
      \node at (1.25,.95) {$\scs k_3$};
       \node at (-.3,-1.45) {$\scs k_1$};
       \node at (0,0) {$\scs p$};
\end{tikzpicture}}
\Biggl|
(B_p^s)^{\dagger}
\Biggl|
\hackcenter{\begin{tikzpicture}[   decoration={markings, mark=at position 0.6 with {\arrow{>}};}, scale =0.7]
    \draw[thick, blue, postaction={decorate}] (-.75,-.25) -- (0,-1);
    \draw[thick, blue, postaction={decorate}]  (0,-1) --  (.75,-.25) ;
    \draw[thick, blue, postaction={decorate}]  (-.75,-.25) --  (-1.35,-.45) ;
    \draw[thick, blue, postaction={decorate}]  (.75,-.25) --  (1.35,-.45) ;
       \draw[thick, blue, postaction={decorate}]  (0,-1) --  (0,-1.5) ;
        \draw[thick, blue, postaction={decorate}]  (.75,-.25) --  (.75,.5) ;
         \draw[thick, blue, postaction={decorate}]  (-.75,.5) --  (-.75,-.25) ;
         \draw[thick, blue, postaction={decorate}]  (.75,.5) --  (1.25,.75) ;
           \draw[thick, blue, postaction={decorate}]  (-.75,.5) --  (-1.25,.75) ;
    \node at (0,.5) {$\dots$};
     \node at (-.65,-.85) {$\scs j_1'$};
      \node at (-1.1,.25) {$\scs j_n'$};
      \node at (1.1,.25) {$\scs j_3'$};
     \node at (.65,-.85) {$\scs j_2'$};
     \node at (-1.3,-.65) {$\scs k_n$};
      \node at (1.3,-.65) {$\scs k_2$};
       \node at (-1.25,.95) {$\scs k_{n-1}$};
      \node at (1.25,.95) {$\scs k_3$};
       \node at (-.3,-1.45) {$\scs k_1$};
       \node at (0,0) {$\scs p$};
\end{tikzpicture}}
\Biggr\rangle
\nn\\ \label{plaqherm1}
&=
\overline{
\Biggl\langle
\hackcenter{\begin{tikzpicture}[   decoration={markings, mark=at position 0.6 with {\arrow{>}};}, scale =0.7]
    \draw[thick, blue, postaction={decorate}] (-.75,-.25) -- (0,-1);
    \draw[thick, blue, postaction={decorate}]  (0,-1) --  (.75,-.25) ;
    \draw[thick, blue, postaction={decorate}]  (-.75,-.25) --  (-1.35,-.45) ;
    \draw[thick, blue, postaction={decorate}]  (.75,-.25) --  (1.35,-.45) ;
       \draw[thick, blue, postaction={decorate}]  (0,-1) --  (0,-1.5) ;
        \draw[thick, blue, postaction={decorate}]  (.75,-.25) --  (.75,.5) ;
         \draw[thick, blue, postaction={decorate}]  (-.75,.5) --  (-.75,-.25) ;
         \draw[thick, blue, postaction={decorate}]  (.75,.5) --  (1.25,.75) ;
           \draw[thick, blue, postaction={decorate}]  (-.75,.5) --  (-1.25,.75) ;
    \node at (0,.5) {$\dots$};
     \node at (-.65,-.85) {$\scs j_1'$};
      \node at (-1.1,.25) {$\scs j_n'$};
      \node at (1.1,.25) {$\scs j_3'$};
     \node at (.65,-.85) {$\scs j_2'$};
     \node at (-1.3,-.65) {$\scs k_n$};
      \node at (1.3,-.65) {$\scs k_2$};
       \node at (-1.25,.95) {$\scs k_{n-1}$};
      \node at (1.25,.95) {$\scs k_3$};
       \node at (-.3,-1.45) {$\scs k_1$};
       \node at (0,0) {$\scs p$};
\end{tikzpicture}}
\Biggl|
B_p^s
\Biggl|
\hackcenter{\begin{tikzpicture}[   decoration={markings, mark=at position 0.6 with {\arrow{>}};}, scale =0.7]
    \draw[thick, blue, postaction={decorate}] (-.75,-.25) -- (0,-1);
    \draw[thick, blue, postaction={decorate}]  (0,-1) --  (.75,-.25) ;
    \draw[thick, blue, postaction={decorate}]  (-.75,-.25) --  (-1.35,-.45) ;
    \draw[thick, blue, postaction={decorate}]  (.75,-.25) --  (1.35,-.45) ;
       \draw[thick, blue, postaction={decorate}]  (0,-1) --  (0,-1.5) ;
        \draw[thick, blue, postaction={decorate}]  (.75,-.25) --  (.75,.5) ;
         \draw[thick, blue, postaction={decorate}]  (-.75,.5) --  (-.75,-.25) ;
         \draw[thick, blue, postaction={decorate}]  (.75,.5) --  (1.25,.75) ;
           \draw[thick, blue, postaction={decorate}]  (-.75,.5) --  (-1.25,.75) ;
    \node at (0,.5) {$\dots$};
     \node at (-.65,-.85) {$\scs j_1$};
      \node at (-1.1,.25) {$\scs j_n$};
      \node at (1.1,.25) {$\scs j_3$};
     \node at (.65,-.85) {$\scs j_2$};
     \node at (-1.3,-.65) {$\scs k_n$};
      \node at (1.3,-.65) {$\scs k_2$};
       \node at (-1.25,.95) {$\scs k_{n-1}$};
      \node at (1.25,.95) {$\scs k_3$};
       \node at (-.3,-1.45) {$\scs k_1$};
       \node at (0,0) {$\scs p$};
\end{tikzpicture}}
\Biggr\rangle}
\end{align}
\begin{equation} \label{plaqherm2}
\refequal{\eqref{eq:eta-inner}}
\prod_{i=1}^n \overline{\md(j_i')} \overline{N^{j_i' s j_i}_{j_{i+1}^{\ast} k_i j_{i+1}'^{\ast}}} \frac{\beta(j_i') \beta(k_i)}{\md(k_i) \md(j_i')} \gamma(j_{i+1}'^{\ast}, k_i^{\ast}, j_i')
\end{equation}
This simplifies to
\begin{equation} \label{plaqherm3}
\prod_{i=1}^n  \overline{N^{j_i' s j_i}_{j_{i+1}^{\ast} k_i j_{i+1}'^{\ast}}} \frac{\beta(j_i') \beta(k_i)}{\md(k_i)} \gamma(j_{i+1}'^{\ast}, k_i^{\ast}, j_i').
\end{equation}

On the other hand, the matrix elements of $B_p^{s^{\ast}}$ are given by
\begin{align}
&\Biggl\langle
\hackcenter{\begin{tikzpicture}[   decoration={markings, mark=at position 0.6 with {\arrow{>}};}, scale =0.7]
    \draw[thick, blue, postaction={decorate}] (-.75,-.25) -- (0,-1);
    \draw[thick, blue, postaction={decorate}]  (0,-1) --  (.75,-.25) ;
    \draw[thick, blue, postaction={decorate}]  (-.75,-.25) --  (-1.35,-.45) ;
    \draw[thick, blue, postaction={decorate}]  (.75,-.25) --  (1.35,-.45) ;
       \draw[thick, blue, postaction={decorate}]  (0,-1) --  (0,-1.5) ;
        \draw[thick, blue, postaction={decorate}]  (.75,-.25) --  (.75,.5) ;
         \draw[thick, blue, postaction={decorate}]  (-.75,.5) --  (-.75,-.25) ;
         \draw[thick, blue, postaction={decorate}]  (.75,.5) --  (1.25,.75) ;
           \draw[thick, blue, postaction={decorate}]  (-.75,.5) --  (-1.25,.75) ;
    \node at (0,.5) {$\dots$};
     \node at (-.65,-.85) {$\scs j_1$};
      \node at (-1.1,.25) {$\scs j_n$};
      \node at (1.1,.25) {$\scs j_3$};
     \node at (.65,-.85) {$\scs j_2$};
     \node at (-1.3,-.65) {$\scs k_n$};
      \node at (1.3,-.65) {$\scs k_2$};
       \node at (-1.25,.95) {$\scs k_{n-1}$};
      \node at (1.25,.95) {$\scs k_3$};
       \node at (-.3,-1.45) {$\scs k_1$};
       \node at (0,0) {$\scs p$};
\end{tikzpicture}}
\Biggl|
B_p^{s^{\ast}}
\Biggl|
\hackcenter{\begin{tikzpicture}[   decoration={markings, mark=at position 0.6 with {\arrow{>}};}, scale =0.7]
    \draw[thick, blue, postaction={decorate}] (-.75,-.25) -- (0,-1);
    \draw[thick, blue, postaction={decorate}]  (0,-1) --  (.75,-.25) ;
    \draw[thick, blue, postaction={decorate}]  (-.75,-.25) --  (-1.35,-.45) ;
    \draw[thick, blue, postaction={decorate}]  (.75,-.25) --  (1.35,-.45) ;
       \draw[thick, blue, postaction={decorate}]  (0,-1) --  (0,-1.5) ;
        \draw[thick, blue, postaction={decorate}]  (.75,-.25) --  (.75,.5) ;
         \draw[thick, blue, postaction={decorate}]  (-.75,.5) --  (-.75,-.25) ;
         \draw[thick, blue, postaction={decorate}]  (.75,.5) --  (1.25,.75) ;
           \draw[thick, blue, postaction={decorate}]  (-.75,.5) --  (-1.25,.75) ;
    \node at (0,.5) {$\dots$};
     \node at (-.65,-.85) {$\scs j_1'$};
      \node at (-1.1,.25) {$\scs j_n'$};
      \node at (1.1,.25) {$\scs j_3'$};
     \node at (.65,-.85) {$\scs j_2'$};
     \node at (-1.3,-.65) {$\scs k_n$};
      \node at (1.3,-.65) {$\scs k_2$};
       \node at (-1.25,.95) {$\scs k_{n-1}$};
      \node at (1.25,.95) {$\scs k_3$};
       \node at (-.3,-1.45) {$\scs k_1$};
       \node at (0,0) {$\scs p$};
\end{tikzpicture}}
\Biggr\rangle \nn
\nn \\ \label{plaqherm4}
& \;\; =\;\;
\prod_{i=1}^n \frac{\beta(j_i) \beta(k_i)}{\md(k_i)} \gamma(j_{i+1}^{\ast}, k_i^{\ast}, j_i)
N^{j_i s^{\ast} j_i'}_{j_{i+1}'^{\ast} k_i j_{i+1}^{\ast}}
\end{align}
Showing that \eqref{plaqherm3} and \eqref{plaqherm4} are equal  reduces to checking the following equality:
\begin{align}
&\prod_{i=1}^n  \overline{N^{j_i' s j_i}_{j_{i+1}^{\ast} k_i j_{i+1}'^{\ast}}} \beta(j_i') \gamma(j_{i+1}'^{\ast}, k_i^{\ast}, j_i')
\nn \\ \label{plaqherm6} &=
\prod_{i=1}^n \beta(j_i)  \gamma(j_{i+1}^{\ast}, k_i^{\ast}, j_i)
N^{j_i s^{\ast} j_i'}_{j_{i+1}'^{\ast} k_i j_{i+1}^{\ast}}
\end{align}
Using \eqref{eq:6j}, and a symmetry from \eqref{eq:symm} on $N^{j_i s j_i'}_{j_{i+1}'^{\ast} k_i j_{i+1}^{\ast}}$, this amounts to checking:
\begin{align}\nonumber
\prod_{i=1}^n &[\beta(j_i')
\gamma(j_{i+1}'^{\ast},k_i^{\ast},j_i')]
\gamma(j_i',s,j_i^{\ast})
\gamma(j_i'^{\ast}, k_i, j_{i+1}')
\\ \nn & \times  \gamma(s^{\ast}, j_{i+1}'^{\ast},j_{i+1}) \gamma(j_i,j_{i+1}^{\ast}, k_i^{\ast})
 \\ \nn
& \quad \times \beta(j_i') \beta(s) \beta(j_i) \beta(j_{i+1}^{\ast}) \beta(k_i) \beta(j_{i+1}'^{\ast})
N^{s^{\ast} j_i'^{\ast} j_i^{\ast}}_{k_i j_{i+1}^{\ast} j_{i+1}'^{\ast}}
\\
&\;\;= \;\;\label{plaqherm7}
\prod_{i=1}^n \beta(j_i) \gamma(j_{i+1}^{\ast}, k_i^{\ast}, j_i)
N^{s^{\ast} j_i'^{\ast} j_i^{\ast}}_{k_i j_{i+1}^{\ast} j_{i+1}'^{\ast}}
\end{align}
Canceling $\beta(j_i)$ from both sides of \eqref{plaqherm7} and reorganizing factors, \eqref{plaqherm7} is equivalent to
\begin{align}\nonumber
\prod_{i=1}^n
&[\gamma(j_{i+1}'^{\ast},k_i^{\ast},j_i') \gamma(j_i'^{\ast}, k_i, j_{i+1}') \beta(j_i') \beta(j_{i+1}^{\ast}) \beta(k_i)]
\\ \nn
& \times [\gamma(j_i',s,j_i^{\ast}) \gamma(s^{\ast}, j_{i+1}'^{\ast},j_{i+1}) \beta(j_i') \beta(s) \beta(j_{i+1}^{\ast})]  \\
& \quad \times \gamma(j_i,j_{i+1}^{\ast}, k_i^{\ast}) N^{s^{\ast} j_i'^{\ast} j_i^{\ast}}_{k_i j_{i+1}^{\ast} j_{i+1}'^{\ast}}
\nn \\
&\;\;= \;\; \label{plaqherm8}
\prod_{i=1}^n \gamma(j_{i+1}^{\ast}, k_i^{\ast}, j_i)
N^{s^{\ast} j_i'^{\ast} j_i^{\ast}}_{k_i j_{i+1}^{\ast} j_{i+1}'^{\ast}}
\end{align}
Using \eqref{eq:gamma-theta} on each of the brackets in the first line of \eqref{plaqherm8} verifies that equality.
This proves the first part of the proposition.
For the second part, recall that
$B_p^{g}=
  \sum_{s\in I_{g}}\bb(s)B_p^{s}$.
Thus
\begin{equation*}
(B_p^{g})^{\dagger}= \sum_{s\in I_{g}}\bb(s)B_p^{s^{\ast}} = B_p^{-g^{}} .
\end{equation*}
\end{proof}

\subsection{Properties of the non-semisimple Levin-Wen Hamiltonian} \label{subsec:LevinWen}

Define the \emph{plaquette operator}
\begin{equation}
B_p:= B_p^{g}B_p^{-g^{}}:\HHH(\Gamma,\col)\to\HHH(\Gamma,\col).
\end{equation}
 A priori, this operator depends on ${g}$, but the following lemma shows that it does not.

\begin{lemma}
The plaquette operator defined by $B_p=B_p^{g}B_p^{-g^{}}:\HHH(\Gamma,\col)\to\HHH(\Gamma,\col)$   does not depend on $g\in \Gr\setminus \X$.
\end{lemma}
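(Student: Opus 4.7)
The plan is to observe that this lemma is essentially a restatement of the second assertion of Proposition \ref{prop:plaqexp}, which says that for any generic $g_1, g_2 \in \Gr \setminus \X$, we have $B_p^{g_1} B_p^{-g_1} = B_p^{g_2} B_p^{-g_2}$. This is exactly the independence claim we wish to establish, so the proof reduces to citing that earlier proposition.

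For completeness, I would sketch the chain of equalities directly. Since $\X$ is a small subset of $\Gr$, for any two given generic elements $g_1$ and $g_2$ I can choose a generic $g_3$ such that $g_1 - g_3$ and $g_2 - g_3$ are also generic. Applying the multiplicativity from the first statement of Proposition \ref{prop:plaqexp}, which gives $B_p^{g_1} = B_p^{g_3} B_p^{g_1 - g_3}$ whenever the exponents and their sum are generic, I compute
\begin{equation*}
B_p^{g_1} B_p^{-g_1} \;=\; B_p^{g_3} B_p^{g_1 - g_3} B_p^{-g_1} \;=\; B_p^{g_3} B_p^{-g_3},
\end{equation*}
where the second equality again uses the multiplicativity $B_p^{g_1 - g_3} B_p^{-g_1} = B_p^{-g_3}$. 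Running the same argument with $g_2$ in place of $g_1$ yields $B_p^{g_2} B_p^{-g_2} = B_p^{g_3} B_p^{-g_3}$, and comparing the two gives the independence of the definition.

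I do not anticipate any real obstacle: the only point requiring care is choosing the auxiliary element $g_3$ so that all intermediate colorings $\col + h.\delta p$ appearing in the composition are admissible, which is guaranteed by the fact that $\Gr \setminus \X$ is stable under the small translations needed here. Thus $B_p$ is a well-defined endomorphism of $\HHH(\Gamma, \col)$, independent of the choice of $g$.
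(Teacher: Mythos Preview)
Your proposal is correct and matches the paper's approach exactly: the paper's proof simply cites Proposition~\ref{prop:plaqexp}, and your sketch of the chain of equalities reproduces precisely the argument given there for the second assertion. The only additional content you provide is the explicit remark about choosing $g_3$ so that all intermediate colorings remain admissible, which is a welcome clarification.
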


\begin{proof}
This follows from Proposition~\ref{prop:plaqexp}.
\end{proof}


\begin{theorem} \label{prop:plaquette}
The plaquette operators have the following properties.
\begin{enumerate}
    \item The plaquette operators are projectors $B_p^2 = B_p$.
    \item The plaquette operators are mutually commuting:
\[
 B_p B_{p'} = B_{p'} B_p .
\]

    \item The operators $B_p$ are Hermitian with respect to the indefinite norm \eqref{eq:eta-inner}.
\end{enumerate}
\end{theorem}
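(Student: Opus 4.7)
The plan is to deduce each of the three properties from results already established in the excerpt: Proposition~\ref{prop:plaqexp} for the algebra of the $B_p^g$'s, Corollary~\ref{corplaqcomm} for commutativity of individual $B_p^s$'s across distinct plaquettes, and Proposition~\ref{plaqhermprop} for the adjoint formula $(B_p^g)^\dagger = B_p^{-g}$. Parts (2) and (3) will be essentially one-line consequences; the main obstacle is (1), because the naive attempt to fuse the middle two factors of $B_p^g B_p^{-g} B_p^g B_p^{-g}$ into $B_p^0$ is not licensed — the first identity of Proposition~\ref{prop:plaqexp} demands that $g_1+g_2$ be generic, and $0\in\X$ violates this.

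To handle (1), I would introduce an auxiliary element $h\in\Gr\setminus\X$ chosen so that $-h$ and $h-g$ also lie in $\Gr\setminus\X$; such an $h$ exists since $\X$ is small. Using the independence statement of Proposition~\ref{prop:plaqexp} (equivalently, the lemma preceding the theorem), replace the \emph{second} occurrence of $B_p^g B_p^{-g}$ by $B_p^h B_p^{-h}$ to get
\[
B_p^2 \;=\; B_p^g\, B_p^{-g}\, B_p^h\, B_p^{-h}.
\]
The first identity of Proposition~\ref{prop:plaqexp} now applies to the middle two factors with all intermediate degrees generic, giving $B_p^{-g}B_p^h = B_p^{h-g}$; applying it once more to $B_p^g B_p^{h-g}=B_p^h$ collapses the expression to $B_p^h B_p^{-h}$, which is $B_p$ by the independence lemma.

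For part (2), the defining expansion \eqref{eq:Balpha} presents $B_p$ as a finite linear combination of products $\bb(s)\bb(t)\,B_p^s B_p^t$ with $s\in I_g$, $t\in I_{-g}$, and similarly for $B_{p'}$. Since Corollary~\ref{corplaqcomm} asserts that every $B_p^s$ commutes with every $B_{p'}^{s'}$, this commutativity passes to arbitrary linear combinations, yielding $B_p B_{p'} = B_{p'} B_p$. For part (3), the identity $(AB)^\dagger = B^\dagger A^\dagger$ combined with Proposition~\ref{plaqhermprop} gives
\[
B_p^\dagger \;=\; \bigl(B_p^g B_p^{-g}\bigr)^\dagger \;=\; (B_p^{-g})^\dagger (B_p^g)^\dagger \;=\; B_p^g B_p^{-g} \;=\; B_p,
\]
establishing Hermiticity with respect to the indefinite form.
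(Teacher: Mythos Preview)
Your proof is correct and follows essentially the same approach as the paper's. The paper's argument for (1) is the same genericity trick with a slightly different labeling: it writes $B_p^2 = B_p^{g_1}B_p^{-g_1}B_p^{-g_2}B_p^{g_2}$ for $g_1,g_2,g_1+g_2$ generic, fuses the middle pair to $B_p^{-g_1-g_2}$, then the first pair to $B_p^{-g_2}$, which is your computation with $(g,h)\leftrightarrow(g_1,-g_2)$; parts (2) and (3) are likewise one-line consequences of Corollary~\ref{corplaqcomm} and Proposition~\ref{plaqhermprop} in the paper, just as you have them.
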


\begin{proof}
For the first item, recall that
$B_p=B_p^{g}B_p^{-g^{}}$, so if we assume $g_1, g_2, g_1 +  g_2$  are generic, by Proposition \ref{prop:plaqexp} we have
\begin{align*}
B_p^2 &= B_p^{g_1}B_p^{-g_1^{}} B_p^{-g_2^{}}B_p^{g_2}
=B_p^{g_1} B_p^{-g_1^{} -g_2^{}} B_p^{g_2} \\
&=B_p^{-g_2^{}}  B_p^{g_2}
=B_p .
\end{align*}

The second item follows from Corollary \ref{corplaqcomm}.
The third item is a consequence of Proposition \ref{plaqhermprop}.

\end{proof}

\begin{remark}
The second claim above follows immediately from the topological interpretation of the plaquette operators in Section~\ref{sec:top}.
\end{remark}

Define a Hamiltonian
\begin{equation} \label{eq:Hamiltonian}
H =   - \sum_{p} B_p-\sum_{v} Q_v,
\end{equation}
or we can shift this Hamiltonian so that the ground state has zero energy.
\begin{equation} \label{eq:Hamiltonian0}
H =   \sum_{p} (1-B_p)+ \sum_{v} (1-Q_v).
\end{equation}

\begin{theorem}
The Hamiltonian from \eqref{eq:Hamiltonian0} has the following properties.
\begin{enumerate}
  \item By Theorem~\ref{prop:plaquette}, the Hamiltonian is Hermitian with respect to the indefinite inner product.

  \item The Hamiltonian is gapped and local.

  \item Since the Hamiltonian is comprised of mutually commuting projectors, it is clear that the spectrum of $H$ is $\Z_{\geq 0}$, with ground state  corresponding to the simultaneous +1-eigenspace of the operators $B_p$ and $Q_v$ over all vertices $v$ and plaquettes $p$.
\end{enumerate}
\end{theorem}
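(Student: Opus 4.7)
The plan is to handle the three claims as essentially corollaries of the results already established, with the only mildly nontrivial content being to verify that the vertex operators commute with the plaquette operators so that everything can be simultaneously diagonalized.

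For (1), observe that each summand is separately Hermitian with respect to $\langle \cdot \mid \cdot \rangle$. The operators $1-B_p$ are Hermitian by Theorem~\ref{prop:plaquette}(3), and the vertex operators $Q_v$ are Hermitian because they act diagonally in the basis $\ket{\Gamma,\sigma}$ with real eigenvalues $\delta_{j_1 j_2 j_3}\in\{0,1\}$, so $1-Q_v$ is Hermitian with respect to the diagonal form \eqref{eq:eta-inner}. A sum of Hermitian operators is Hermitian, giving $H^{\dagger}=H$.

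For (2), locality is immediate from the definitions: $Q_v$ acts only on the three edges incident to $v$, and $B_p$ acts only on edges in $\delta p$ together with the small collar of neighboring edges $k_i$ appearing in the $6j$-symbols. The spectral gap claim follows from (3), since the spectrum is contained in $\mathbb{Z}_{\geq 0}$ and hence has a gap of at least $1$ between the ground state and first excited state.

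For (3), the key step is mutual commutativity of the full set $\{B_p, Q_v\}_{p,v}$. The $B_p$ commute among themselves by Theorem~\ref{prop:plaquette}(2), the $Q_v$ commute among themselves trivially since they are simultaneously diagonal in the state basis, so the remaining task is to verify $[B_p,Q_v]=0$. This follows because the $6j$-symbols $N^{ijk}_{lmn}$ vanish unless the admissibility conditions \eqref{eq:6jadmissible} hold, so $B_p$ maps branching-admissible states to branching-admissible states and kills any non-admissible state; equivalently $B_p$ preserves the image and the kernel of each $Q_v$, which combined with $Q_v^2=Q_v$ implies commutativity. Once commutativity is in hand, $\{B_p,Q_v\}$ can be simultaneously diagonalized, and since each is a projector, every simultaneous eigenvector has each $B_p$-eigenvalue and each $Q_v$-eigenvalue in $\{0,1\}$; the Hamiltonian eigenvalue is then the number of operators acting as $0$, which lies in $\mathbb{Z}_{\geq 0}$. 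The zero-energy eigenspace is precisely the simultaneous $+1$-eigenspace of all $B_p$ and $Q_v$.

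The main obstacle is the commutativity $[B_p,Q_v]=0$ for a vertex $v$ incident to the plaquette $p$; all other steps are bookkeeping. The argument above reduces this to the admissibility vanishing of $6j$-symbols, which is built into the basic input of Section~\ref{subsec:basic}, so the verification is straightforward but should be spelled out carefully for vertices on $\delta p$ where $B_p$ modifies the edge labels $j_i\mapsto j_i'$.
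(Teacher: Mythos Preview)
Your proposal is correct and in fact more complete than what the paper offers: the paper provides no separate proof for this theorem, embedding the justifications directly in the statement (``By Theorem~\ref{prop:plaquette}\ldots'', ``Since the Hamiltonian is comprised of mutually commuting projectors, it is clear that\ldots''). Your explicit verification that $[B_p,Q_v]=0$ via the admissibility constraints \eqref{eq:6jadmissible} on the $6j$-symbols fills a small gap the paper leaves implicit, and your observation that projectors are automatically diagonalizable (so the commuting family can be simultaneously diagonalized despite the indefinite inner product) is a point worth making that the paper does not.
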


Observe, that the simultaneous $+1$-eigenspace of all the vertex operators $Q_v$ over all vertices defines a subspace of the Hilbert space $\HHH(\Gamma,\Phi)$ consisting of $\Gr$-colored graphs where each labelled trivalent graph is consistent with the branching rules $\delta_{ijk}$.
The simultaneous $+1$-eigenspace $B_p$ operators consists of those basis vectors $\ket{\psi}$ in the image of the projector $\Pi_p B_p$.  Indeed,  if any $B_p \ket{\psi} = 0$, this operator maps $\ket{\psi} \mapsto 0$ and fixed any vector with $B_p \ket{\psi} = \ket{\psi}$ for all plaquettes $p$.

We now show that the ground state of the Hamiltonian is described by the modified Turaev-Viro invariant coming from the
relative $\Gr$-spherical category.
\bigskip

\subsection{Topological invariance of ground state} \label{sec:top} In
this section we explain the topological origins of the plaquette
operators defined in Section~\ref{subsec:plaquette}.  The modified
Turaev-Viro TQFT from \cite{GP3} is a 2+1-dimensional TQFT defined on
a 3-manifold $M$ with the additional data of an isotopy class of an
embedded graph $Y$ in $M$ and a cohomology class
$[\col]\in H^1(M,\Gr)$.  Given a triangulation $\cal{T}_M$ of $M$, the
graph $Y$ is assumed to be Hamiltonian, meaning that it intersects
every vertex of the triangulation exactly once, and its edges are part
of the triangulation.

When the manifold $M$ has boundary, the triangulation of $M$ induces a
triangulation $\cal{T}$ on the boundary $\Sigma$ and the cohomology
class $[\col]\in H^1(M,\Gr)$ restricts
to $[\col_{|\Sigma}]\in H^1(\Sigma,\Gr)$.
In \cite{GP3}, it is assumed the graph $Y$ is embedded transversely to
the boundary surface.  Thus, the modified Turaev-Viro invariant for
the surface $\Sigma$ includes the additional data of a finite set of
marked points ${\mathsf m} \subset \Sigma$ on the surface
corresponding to where the graph intersects the surface.  Since the
graph must go through every vertex of the triangulation, in
\cite[Section 4]{GP3}, the notion of a ``oscillating path’’ on a
surface is used to define enriched cobordisms between triangulated
surfaces with oscillating paths.  The oscillating path allows vertices
of the triangulation of a surface to
be outside the set of marked points.
 In this paper we do not consider oscillating paths but instead assume that the set of vertices is exactly the set of marked points.   This seems natural from the perspective of the $B_p$ operators because as we will see, they generate the cylinder of the modified Turaev-Viro invariant, and they are only defined for a vertex at a marked point (also see Remark~\ref{rem:whatifd}).   It would be interesting to consider oscillating paths and graphs which are not transverse to the surface, but we leave this for future work.

 Next, we give a quick description of the modified Turaev-Viro
 invariant $TV$ with the assumptions and notation of this paper.  Let
 $(\Sigma,\T,\Gamma,\col)$ and $(\Sigma,\T',\Gamma',\col')$ be tuples of
 objects described in Section~\ref{subsec:def-state} with
 $[\col]=[\col']\in H^1(\Sigma,\G)$.  The vertices $\T_0 $ of the
 triangulation $\T$ and $\T'_0 $ of the triangulation $\T'$ are the
 marked points ${\mathsf m}$. Let $M$ be the cylinder
 $\Sigma\times [0,1]$.  Define a graph
 $Y=\T_0\times [0,1]={\mathsf m}\times [0,1]$ in $M$.  The cohomology
 class $[\col]=[\col'] \in H^1(\Sigma,\Gr)\simeq H^1(M,\Gr)$ induces a unique cohomology class on $M$.

 Let $(\T^M , \Y, \Phi^M)$ be an admissible Hamiltonian triangulation
 of $(M, Y)$ extending the triangulation $\T$ of $\Sigma\times\{0\}$
 and the triangulation $\T'$ of $\Sigma\times\{1\}$, see \cite[Section
 3.2]{GP3}.
 The cohomology class $[\col]\in H^1(M,\Gr)$ can always be represented
 by an admissible $\Gr$-coloring $\col^M$ of $\T^M$ which restricts to
 $\col$ on $\Sigma\times\{0\}$ and to $\col'$ on
 $\Sigma\times\{1\}$~\cite[Lemma 23]{GPT2}.  Likewise, every
 admissible $\Gr$-coloring of $\T^M$ gives rise to a representative of
 a cohomology class $[\col^M]\in H^1(M,\Gr)$.  In particular, $\Phi^M$
 is an admissible $\Gr$-coloring of $\T^M$ representing
 $[\Phi^M]\in H^1(M,\Gr)$ which is a map on the set of oriented edges
 of $\T^M$ such that for any oriented edge $e$ of $\T^M$,
 $\Phi^M(e)\in \Gr \setminus \X$, $\Phi^M(-e)=-\Phi^M(e)^{}$, and the
 sum of the values of $\Phi^M$ on the oriented edges forming the
 boundary of any face of $\T^M$ is zero. Also, $\Y$ is a set of edges
 of $\T^M$ such that the union of these edges is the graph $Y$ in $M$
 and all the vertices of $\T$ are contained in $Y$ (i.e.\ all the
 vertices are incident to an edge of $\Y$).

 A state $\phi$ of $\Phi^M$ is a map assigning to every oriented edge $e$ of $\T^M$ an element $\phi(e)\in I_{\Phi^M(e)}$ such that $\phi(-e) =-\phi(e)^{}$.
Given a state $\phi$ of $\Phi^M$, for each tetrahedron $T$ of $\T^M$ whose vertices $v_1$, $v_2$, $v_3$, $v_4$ are ordered so that the (ordered) triple of oriented edges $(\overrightarrow{v_1v_2}, \overrightarrow{v_1v_3}, \overrightarrow{v_1v_4})$ is positively oriented with respect to the orientation of $M$,  set $|T|_{\phi} := N^{ijk}_{lmn}$ where
\[
\left\{
  \begin{array}{lll}
    i = \phi(\overrightarrow{v_2v_1}) & j = \phi(\overrightarrow{v_3v_2}) & k = \phi(\overrightarrow{v_3v_1}) \\
    l = \phi(\overrightarrow{v_4v_3}) &  m = \phi(\overrightarrow{v_4v_1}) & n = \phi(\overrightarrow{v_4v_2})
  \end{array}
\right. .
\]

Let $\Phi_\T$ be a $\Gr$-coloring of $\T$ representing $[\Phi]\in H^1(\Sigma,\Gr)$.
Let $\HHH(\T,\Phi)=\bigoplus_{\sigma\in\St(\Phi)}\C\ket{\T,\sigma} $ be the span of all colorings corresponding to the states $\St(\Phi)$ of $\Phi$.  Similarly, $\HHH(\T',\Phi')=\bigoplus_{\sigma\in\St(\Phi')}\C\ket{\T',\sigma} $

Now fix a state $\sigma$ of $\Phi$ and let $\St(\Phi^M,\sigma)$ be the set of states of $\Phi^M$ which restrict to $\sigma$ on $\T=\T^M\cap (\Sigma\times\{0\})$.  The modified Turaev-Viro invariant $TV(M, Y, [\Phi^M]): \HHH(\T,\Phi)\to \HHH(\T',\Phi')$
is defined on vectors as
\begin{align} \label{eq:modifiedTV}
  &\ket{\T,\sigma}
  \\ \nn & \;\; \mapsto \!\!\!\!\sum_{\phi\in \St(\Phi^M,\sigma)}\prod_{e\in\mathcal E}\md(\phi(e))\prod_{e\in  \Y}\bb(\phi(e)) \!\!\!\prod_{T\in\T_3^M}\! |T|_{\phi} \cdot
  \ket{\T',\phi_{\big|\T'}}
\end{align}
where $\mathcal E$ is the set of unoriented edges of $\T^M$ that are
not in $\T$ nor in $\Y$ and $\T_3^M$ is the set of tetrahedra of
$\T^M$.

In \cite{GP3} it is shown that the mapping \eqref{eq:modifiedTV} is independent   of triangulation, coloring, and basis. However, it does depend on $M$, $Y=\m\times[0,1]$, the cohomology class $[\Phi^M]=[\Phi]$ and the triangulation of $\partial M$. For a fixed triple  $(\Sigma,[\Phi],\m)$, let us denote the associated mapping $C_{\T,\Phi}^{\T',\Phi'}$.
Since the composition of two cylinders is topologically equivalent to a single cylinder, \cite{GP3} implies that $C_{\T',\Phi'}^{\T'',\Phi''}C_{\T,\Phi}^{\T',\Phi'}=C_{\T,\Phi}^{\T'',\Phi''}$.  In particular the maps $C_{\T,\Phi}^{\T,\Phi}$ are projectors with canonically isomorphic image. 
Then the definition of the modified Turaev-Viro invariant on surfaces with marked points is
\begin{equation}\label{E:TVSurfaceImage}
TV(\Sigma,{\mathsf m},[\col]) := {\rm Im }  C_{\T,\Phi}^{\T,\Phi} ,
\end{equation}
for any choice of $\T,\Phi$.

A key distinction between the modified invariant \eqref{eq:modifiedTV} and the usual Turaev-Viro invariant is the use of a modified dimension $\md$ in place of the usual quantum dimension.  As noted, the modified dimension is real, but need not be positive.  Another difference is the role the Hamiltonian graph $Y$ plays, allowing the modified dimension $\md$ to be used on edges not contained in $Y$, while $\bb$ is used on those edges of $Y$.  This subtle difference is what enables non-semisimple categories to be used for state-sum topological invariants.

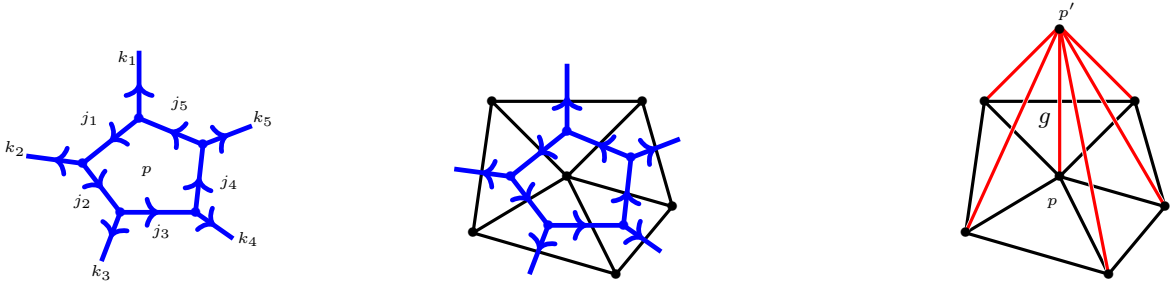
\begin{figure*}
\noindent
\subfloat[
 A plaquette $p$ in the dual graph $\Gamma$. \label{fig:a}]{
     \makebox[.28\textwidth]{ \begin{tikzpicture} [decoration={markings,
                        mark=at position 0.52 with {\arrow{>}};    }]
\node[blue,thick] at (-.75,0) {$\bullet$};
\node[blue,thick] at (.85,.25) {$\bullet$};
\node[blue,thick] at (0,.6) {$\bullet$};
\node[blue,thick] at (-.25,-.65) {$\bullet$};
\node[blue,thick] at (.75,-.65){$\bullet$};
    \draw[blue, very thick, line width=1.8pt,postaction={decorate}] (0,.6)to (-.75,0);
    \draw[blue, very thick, line width=1.8pt,postaction={decorate}]  (-.75,0)to (-.25,-.65);
    \draw[blue, very thick, line width=1.8pt,postaction={decorate}](-.25,-.65)to  (.75,-.65);
    \draw[blue, very thick, line width=1.8pt,postaction={decorate}] (.75,-.65)to (.85,.25);
    \draw[blue, very thick, line width=1.8pt,postaction={decorate}](.85,.25)to (0,.6);
    \draw[blue, very thick, line width=1.8pt,postaction={decorate}] (0,.6)to (0,1.5);
    \draw[blue, very thick, line width=1.8pt,postaction={decorate}] (.85,.25)to (1.5,.5);
    \draw[blue, very thick, line width=1.8pt,postaction={decorate}] (-.75,0) to (-1.5,.1);
    \draw[blue, very thick, line width=1.8pt,postaction={decorate}] (-.25,-.65) to (-.5,-1.3);
    \draw[blue, very thick, line width=1.8pt,postaction={decorate}] (.75,-.65) to (1.25,-1);
    \node at (.1,-.1) {$\scs p$};
    \node at (-.65,.6) {$\scs j_1$};
    \node at (-.75,-.5) {$\scs j_2$};
    \node at (.3,-.9) {$\scs j_3$};
    \node at (1.2,-.25) {$\scs j_4$};
    \node at (.55,.8) {$\scs j_5$};
        \node at (-.15,1.4) {$\scs k_1$};
        \node at (-1.65,.2) {$\scs k_2$};
        \node at (-.5,-1.45) {$\scs k_3$};
        \node at (1.45,-1) {$\scs k_4$};
        \node at (1.65,.6) {$\scs k_5$};
\node at (-2,0) {$\;$};
\node at (2,0) {$\;$};
\end{tikzpicture} }
}\hfill
\subfloat[The dual graph $\Gamma$ superimposed over the original triangulation $\cal{T}$. \label{fig:b}]{
 \makebox[.28\textwidth]{ \begin{tikzpicture} [decoration={markings,
                        mark=at position 0.52 with {\arrow{>}};    }]
\draw[very thick] (0,0) -- (-1,1);
\draw[very thick] (0,0) -- (1.4,-.4) ;
\draw[very thick] (0,0) -- (.65,-1.3) ;
\draw[very thick] (0,0) -- (-1.25,-.75);
\draw[very thick] (0,0) --  (1,1);
\draw[very thick] (-1,1) to (1,1) to (1.4,-.4) to (.65,-1.3) to (-1.25,-.75) to (-1,1);
\node at (0,0) {$\bullet$};
\node at (-1,1) {$\bullet$};
\node at (1,1) {$\bullet$};
\node at (1.4,-.4) {$\bullet$};
\node at (.65,-1.3) {$\bullet$};
\node at (-1.25,-.75) {$\bullet$};
\node[blue,thick] at (-.75,0) {$\bullet$};
\node[blue,thick] at (.85,.25) {$\bullet$};
\node[blue,thick] at (0,.6) {$\bullet$};
\node[blue,thick] at (-.25,-.65) {$\bullet$};
\node[blue,thick] at (.75,-.65){$\bullet$};
    \draw[blue, very thick, line width=1.8pt,postaction={decorate}] (0,.6)to (-.75,0);
    \draw[blue, very thick, line width=1.8pt,postaction={decorate}]  (-.75,0)to (-.25,-.65);
    \draw[blue, very thick, line width=1.8pt,postaction={decorate}](-.25,-.65)to  (.75,-.65);
    \draw[blue, very thick, line width=1.8pt,postaction={decorate}] (.75,-.65)to (.85,.25);
    \draw[blue, very thick, line width=1.8pt,postaction={decorate}](.85,.25)to (0,.6);
    \draw[blue, very thick, line width=1.8pt,postaction={decorate}] (0,.6)to (0,1.5);
    \draw[blue, very thick, line width=1.8pt,postaction={decorate}] (.85,.25)to (1.5,.5);
    \draw[blue, very thick, line width=1.8pt,postaction={decorate}] (-.75,0) to (-1.5,.1);
    \draw[blue, very thick, line width=1.8pt,postaction={decorate}] (-.25,-.65) to (-.5,-1.3);
    \draw[blue, very thick, line width=1.8pt,postaction={decorate}] (.75,-.65) to (1.25,-1);
\end{tikzpicture} }
} \hfill
\subfloat[ 3-dimensional tetrahedra obtained by adding a vertex over the vertex in $\cal{T}$ corresponding to the plaquette $p$ in $\Gamma$. \qquad \qquad \qquad \qquad \qquad \;  \label{fig:c}]{ \makebox[.35\textwidth]{
  \begin{tikzpicture}
\draw[very thick] (0,0) -- (-1,1);
\draw[very thick] (0,0) -- (1.4,-.4) ;
\draw[very thick] (0,0) -- (.65,-1.3) ;
\draw[very thick] (0,0) -- (-1.25,-.75);
\draw[very thick] (0,0) --  (1,1);
\draw[very thick] (-1,1) to (1,1) to (1.4,-.4) to (.65,-1.3) to (-1.25,-.75) to (-1,1);
\draw[line width=1.8pt, white] (0,0) -- (0,2);
\draw[line width=1.8pt, white] (.65,-1.3) -- (0,2);
\draw[line width=1.8pt, white] (-1.25,-.75) -- (0,2);
\draw[line width=1.8pt, white] (1.4,-.4) -- (0,2);
\draw[very thick, red] (0,0) -- (0,2);
\draw[very thick, red] (-1,1) -- (0,2);
\draw[very thick, red] (1.4,-.4) -- (0,2);
\draw[very thick, red] (.65,-1.3) -- (0,2);
\draw[very thick, red] (-1.25,-.75) -- (0,2);
\draw[very thick, red] (1,1) -- (0,2);
\node at (0,0) {$\bullet$};
\node at (0,1.95) {$\bullet$};
\node at (-1,1) {$\bullet$};
\node at (1,1) {$\bullet$};
\node at (1.4,-.4) {$\bullet$};
\node at (.65,-1.3) {$\bullet$};
\node at (-1.25,-.75) {$\bullet$};
\node at (-.1,-.35) {$\scs p$};
\node at (.1,2.2) {$\scs p'$};
\node at (-.2,.75) {$g$};
\node at (-3,0) {$\;$};
\node at (2,0) {$\;$};
\end{tikzpicture} \qquad   }
}
        \caption{The plaquette operator $B_p^{g}$ has a topological interpretation as the operator assigned to the tetrahedra in \eqref{fig:c} by the modified Turaev-Viro TQFT.  The initial edge labels in \eqref{fig:a} are mapped to new edge labels by the Turaev-Viro operator from \eqref{fig:c}.  The index $g\in \Gr\setminus\X$ labels the internal edge common in all the tetrahedra obtained by adding the point $p'$. }
        \label{fig:plaquette_operators}
\end{figure*}

Much like the semisimple case \cite{KKR,Kir-stringnet},
the plaquette operators $B_p^{g}$ from \eqref{eq:Balpha} used in the definition of the non-semisimple Levin-Wen Hamiltonian have a topological origin via the modified Turaev-Viro invariant.   Each plaquette in the dual graph $\Gamma$ corresponds to a vertex $p$ of the original triangulation see Figure~\eqref{fig:b}.   By adding an additional vertex $p'$ over the point $p$ and connecting it by an edge to $p$ and all vertices adjacent to $p$, we obtain a collection of tetrahedra $X_{p}$ over our original triangulation (see Figure~\eqref{fig:c}).  Labelling the edge connecting $p$ to $p'$ by $g$, we can then evaluate the tetrahedra $TV(X_{p})$ to produce an operator mapping the Hilbert space $\HHH(\Gamma,\col)$  to the Hilbert space $\HHH(\Gamma,\col+g.\delta p)$.  We have defined $B_p^{g}$ so that
\begin{equation}
B_p^{g} =  TV(X_{p}).
\end{equation}

Note that several of the key properties of the operators $B_p^{g}$ become apparent  from the topological perspective.  For example, the commutativity $[B_{p_1}^{g_1},B_{p_2}^{g_2}] =0$ follows from the independence of the modified Turaev-Viro invariant under change of triangulation.
\begin{equation}\label{E:BpBp'}
TV\left(
\hackcenter{ \begin{tikzpicture}[scale=0.7]
\draw[very thick] (0,0) -- (-1,1);
\draw[very thick] (0,0) -- (1.5,-.4) ;
\draw[very thick] (0,0) -- (.65,-1.3) ;
\draw[very thick] (0,0) -- (-1.25,-.75);
\draw[very thick] (0,0) --  (1,1);
\draw[very thick] (-1,1) to (1,1) to (1.5,-.4) to (.65,-1.3) to (-1.25,-.75) to (-1,1);
        \draw[very thick] (.65,-1.3) to (2.3,-.8) -- (1.5,-.4) ;
         \draw[very thick] (2.3,-.8) to (2.2,.6)  -- (1.5,-.4);
         \draw[very thick]   (2.2,.6) to (1,1) ;
\draw[line width=1.8pt, white] (0,0) -- (0,2);
\draw[line width=1.8pt, white] (.65,-1.3) -- (0,2);
\draw[line width=1.8pt, white] (-1.25,-.75) -- (0,2);
\draw[line width=1.8pt, white] (1.5,-.4) -- (0,2);
\draw[very thick, red] (0,0) -- (0,2);
\draw[very thick, red] (-1,1) -- (0,2);
\draw[very thick, red] (1.5,-.4) -- (0,2);
\draw[very thick, red] (.65,-1.3) -- (0,2);
\draw[very thick, red] (-1.25,-.75) -- (0,2);
\draw[very thick, red] (1,1) -- (0,2);
    \draw[line width=1.8pt, white]  (1.5,-.4) -- (1.5,2);
\draw[very thick, blue] (1.5,-.4) -- (1.5,2);
\draw[very thick, blue] (0,2) -- (1.5,2);
    \draw[line width=1.8pt, white] (.65,-1.3) -- (1.5,2);
\draw[very thick, blue] (.65,-1.3) -- (1.5,2);
    \draw[line width=1.8pt, white] (1,1) -- (1.5,2);
\draw[very thick, blue] (1,1) -- (1.5,2);
    \draw[line width=1.8pt, white] (2.3,-.8) -- (1.5,2);
\draw[very thick, blue] (2.3,-.8) -- (1.5,2);
    \draw[line width=1.8pt, white] (2.2,.6) -- (1.5,2);
\draw[very thick, blue] (2.2,.6) -- (1.5,2);
\node at (0,0) {$\bullet$};
\node at (0,1.95) {$\bullet$};
\node at (-1,1) {$\bullet$};
\node at (1,1) {$\bullet$};
\node at (1.5,-.4) {$\bullet$};
\node at (.65,-1.3) {$\bullet$};
\node at (-1.25,-.75) {$\bullet$};
         \node at (2.2,.6)  {$\bullet$};
        \node at (2.3,-.8)  {$\bullet$};
        \node at (1.5,2)  {$\bullet$};
\node at (-.1,-.35) {$\scs p_1$};
\node at (.1,2.3) {$\scs p_1'$};
\node at (1.55,2.3) {$\scs p_2'$};
\node at (1.5,-.75) {$\scs p_2$};
\node at (-.2,.75) {$g_1$};
\node at (1.7,.4) {$g_2$};
\end{tikzpicture}} \right)
 =
TV\!\left(
\hackcenter{ \begin{tikzpicture}[scale=0.7]
\draw[very thick] (0,0) -- (-1,1);
\draw[very thick] (0,0) -- (1.5,-.4) ;
\draw[very thick] (0,0) -- (.65,-1.3) ;
\draw[very thick] (0,0) -- (-1.25,-.75);
\draw[very thick] (0,0) --  (1,1);
\draw[very thick] (-1,1) to (1,1) to (1.5,-.4) to (.65,-1.3) to (-1.25,-.75) to (-1,1);
        \draw[very thick] (.65,-1.3) to (2.3,-.8) -- (1.5,-.4) ;
         \draw[very thick] (2.3,-.8) to (2.2,.6)  -- (1.5,-.4);
         \draw[very thick]   (2.2,.6) to (1,1) ;
    \draw[line width=1.8pt, white]  (1.5,-.4) -- (1.5,2);
\draw[very thick, blue] (1.5,-.4) -- (1.5,2);
    \draw[line width=1.8pt, white] (.65,-1.3) -- (1.5,2);
\draw[very thick, blue] (.65,-1.3) -- (1.5,2);
    \draw[line width=1.8pt, white] (1,1) -- (1.5,2);
\draw[very thick, blue] (1,1) -- (1.5,2);
    \draw[line width=1.8pt, white] (2.3,-.8) -- (1.5,2);
\draw[very thick, blue] (2.3,-.8) -- (1.5,2);
    \draw[line width=1.8pt, white] (2.2,.6) -- (1.5,2);
\draw[very thick, blue] (2.2,.6) -- (1.5,2);
        \draw[line width=1.8pt, white] (0,0) -- (0,2);
        \draw[line width=1.8pt, white] (-1.25,-.75) -- (0,2);
        \draw[line width=1.8pt, white] (1.5,2) -- (0,2);
\draw[very thick, red] (0,0) -- (0,2);
\draw[very thick, red] (-1,1) -- (0,2);
\draw[very thick, red] (1.5,2) -- (0,2);
\draw[very thick, red] (-1.25,-.75) -- (0,2);
\draw[very thick, red] (1,1) -- (0,2);
        \draw[line width=1.8pt, white] (0,0) -- (1.5,2);
\draw[very thick, blue] (0,0) -- (1.5,2);
    \draw[line width=1.8pt, white] (.65,-1.3) -- (0,2);
\draw[very thick, red] (.65,-1.3) -- (0,2);
\node at (0,0) {$\bullet$};
\node at (0,1.95) {$\bullet$};
\node at (-1,1) {$\bullet$};
\node at (1,1) {$\bullet$};
\node at (1.5,-.4) {$\bullet$};
\node at (.65,-1.3) {$\bullet$};
\node at (-1.25,-.75) {$\bullet$};
         \node at (2.2,.6)  {$\bullet$};
        \node at (2.3,-.8)  {$\bullet$};
        \node at (1.5,2)  {$\bullet$};
\node at (-.1,-.35) {$\scs p_1$};
\node at (.1,2.3) {$\scs p_1'$};
\node at (1.5,2.3) {$\scs p_2'$};
\node at (1.55,-.75) {$\scs p_2$};
\node at (-.2,.75) {$g_1$};
\node at (1.7,.4) {$g_2$};
\end{tikzpicture}} \right)
\end{equation}

%

\begin{remark} \label{rem:whatifd}
Those familiar with the usual Turaev-Viro construction and its connection to Levin-Wen models, might have expected the plaquette operators to be defined using the modified dimension in place of the usual quantum dimension so that $\tilde{B}_p^{g}=
  \sum_{s\in I_{g}}\md(s)B_p^{s}$, rather than \eqref{eq:Balpha}.  This is a key distinction in the non-semisimple version of the Turaev-Viro construction and is closely related to the role the embedded graph plays in the theory.  In Figure~\eqref{fig:c} the graph was assumed to be transverse to the surface at each vertex of the original triangulation and the resulting triangulation.  This means the new edge labelled $g$ now contains the embedded graph, and as such, the Turaev-Viro invariant utilized $\bb(s)$, rather than $\md(s)$ for $s\in I_g$.   The operators $\tilde{B}_p^{g}$ produce nilpotent operators that cannot be used to define projectors.
\end{remark}

\begin{theorem}
  The degenerate ground state of the Hamiltonian in \eqref{eq:Hamiltonian} defined on a surface $\Gamma$ is isomorphic to the vector space assigned to the surface by the modified TQFT from \cite{GP3} defined from the corresponding relative $\Gr$-spherical category:
  $$\operatorname{Ker} H\simeq TV(\Sigma,{\mathsf m},[\col])$$
  where $[\col]\in H^1(\Sigma,\Gr)$ and the cardinality of
  ${\mathsf m}$ is the number of vertices in the triangulation $\cal{T}$ of $\Sigma$.

 In particular, the ground state of the system is a topological invariant of the surface $\Sigma$ equipped with the set of points $\m$ and the cohomology class $[\col]\in H^1(\Sigma,\Gr)$.
\end{theorem}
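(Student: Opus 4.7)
The plan is to identify $\operatorname{Ker} H$ with the image of a commuting projector built from the $B_p$ and $Q_v$, and then recognize that projector as the cylinder operator $C_{\T,\col}^{\T,\col}$ of the modified Turaev-Viro TQFT, whose image is $TV(\Sigma,\m,[\col])$ by \eqref{E:TVSurfaceImage}. Since $H = \sum_p(1-B_p) + \sum_v(1-Q_v)$ is a sum of commuting projectors (Theorem~\ref{prop:plaquette} together with the elementary properties of the $Q_v$), its kernel is the joint $+1$-eigenspace of all the $Q_v$ and $B_p$. The factor $\prod_v Q_v$ restricts $\HHH(\Gamma,\col)$ to the subspace spanned by $\St(\col)$, on which the $B_p$ restrict to commuting idempotents; hence the task reduces to showing that $\prod_p B_p$ equals $C_{\T,\col}^{\T,\col}$ on this subspace.

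For the central topological step I would use Section~\ref{sec:top}. Each $B_p^g$ is the modified TQFT operator $TV(X_p)$ assigned to the cone $X_p$ of tetrahedra obtained by joining the star of $p \in \T_0$ to a new vertex $p'$ placed above $p$, with the vertical apex edge labelled by $g$. By Proposition~\ref{prop:plaqexp} the combination $B_p = B_p^g B_p^{-g}$ is independent of $g$ and preserves the cohomology class $[\col]$. Stacking these cones over every vertex $p \in \T_0$, in any order compatible with the commutation relation \eqref{E:BpBp'}, produces an admissible Hamiltonian triangulation $(\T^M,\Y,\col^M)$ of the cylinder $M = \Sigma\times[0,1]$ whose Hamiltonian graph is exactly $Y = \m\times[0,1]$ and whose $\Gr$-coloring $\col^M$ represents the pullback of $[\col]$ to $H^1(M,\Gr)$ and restricts to $\col$ on each boundary copy of $\Sigma$. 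The composition property $C_{\T_1,\col_1}^{\T_2,\col_2} C_{\T_0,\col_0}^{\T_1,\col_1} = C_{\T_0,\col_0}^{\T_2,\col_2}$ recalled after \eqref{eq:modifiedTV} then identifies the full product $\prod_p B_p$ with $C_{\T,\col}^{\T,\col}$.

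Combining the two steps gives $\operatorname{Ker} H = \operatorname{Im} C_{\T,\col}^{\T,\col} = TV(\Sigma,\m,[\col])$, and topological invariance is inherited from the corresponding property of the right-hand side proved in \cite{GP3}. The main obstacle I anticipate is the geometric bookkeeping in the topological step: verifying that the iterated cones $X_p$ really do assemble into a bona fide Hamiltonian triangulation of $\Sigma\times[0,1]$ with Hamiltonian graph $\m\times[0,1]$, and that the auxiliary $g$-labels from the pairs $B_p^g B_p^{-g}$ coalesce into a well-defined admissible $\Gr$-coloring $\col^M$ in the correct cohomology class. Once this combinatorial picture is arranged cleanly, the invariance of $TV$ under change of triangulation and cocycle representative from \cite{GP3} closes the argument.
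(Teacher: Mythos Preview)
Your proposal is correct and follows essentially the same route as the paper: reduce $\operatorname{Ker} H$ to $\operatorname{Im}\prod_p B_p$ on the $Q_v$-invariant subspace, then recognize that product as the cylinder operator $C_{\T,\col}^{\T,\col}$ by building the triangulation of $\Sigma\times[0,1]$ from cones over the vertices of $\T$. The paper's proof is terser and does not spell out the double-cone structure coming from $B_p = B_p^g B_p^{-g}$; your explicit acknowledgment that each $B_p$ contributes two stacked cones (so that the boundary coloring genuinely returns to $\col$) is in fact a more careful accounting of precisely the bookkeeping you flagged as the main obstacle.
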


\begin{proof}
Observe that the image of the operator $\prod_v Q_v$ on $\HHH(\Gamma,\Phi)$ consists of those states $\ket{\Gamma,\sigma}$ where the $\Gr$-coloring satisfies the additional constraint dictated by the branching rules $\delta_{ijk}$ at each vertex.    Denote this space by $\cal{H}'$.  The ground state of the Hamiltonian is then the   subspace of $\cal{H}'$  given by the image of $\prod B_p$.
 We identify the operator $\prod B_p$ with the operator associated to a cylinder $\Sigma \times [0,1]$ by the modified Turaev-Viro invariant.

As above,  let $M$ be the cylinder $\Sigma \times [0,1]$ with graph $Y={\mathsf m}\times [0,1]\subset M$.  The triangulation $\T$ on $\Sigma \times \{0\}$ can be extended to a  Hamiltonian  triangulation of $(M,Y)$.  First, add a new vertex $p'$ over a vertex $p$ of $\T$, making several  tetrahedra as in Figure \ref{fig:c} (here the edge between $p$ and $p'$ is in $\Y$).  Iterating this process (similar to the picture in \eqref{E:BpBp'}) for all vertices of $\T$, we obtain the desired triangulation.  Now computing the modified Turaev-Viro invariant with this triangulation, we have $TV(M,Y, [\Phi^M])=\prod B_p$, (see \eqref{eq:modifiedTV}).
 The theorem then follows from \eqref{E:TVSurfaceImage}.
\end{proof}

\section{Maps between Hilbert spaces for Bistellar operator} \label{sec:bistellar}
Here we give an example of a pseudo-unitary transformation   relating Hilbert spaces for different triangulations that share the same ground state.   These correspond to a change of triangulation according given by the 2-2 Pachner move.

\subsection{Local operators}

Let $(\Sigma,\{p\},\T,\Gamma,\col)$ be as above with $\col$ an
admissible coloring of $\Gamma$. Given an edge $e$ of the
triangulation, we can modify $\T$ and $\Gamma$ to $\T'$, $\Gamma'$ by
doing a bistellar move $I\leftrightarrow H$ on $e$.  The cocycle
condition of $\col$ implies that there is an unique coloring $\col'$
of $\Gamma'$ which coincide with $\col$ outside $\Gamma$. In
particular, it satisfies $[\col]=[\col']\in H^1(\Sigma,\Gr)$.  If
$\col'$ is admissible, we define the operator
$T:\H(\Gamma,\col)\to\H(\Gamma',\col')$ by
\begin{equation} \label{eq:Toper}
T\maps
\left| \hackcenter{\begin{tikzpicture}[   decoration={markings, mark=at position 0.6 with {\arrow{>}};}, scale =0.70]
    \draw[thick, blue, postaction={decorate}] (0,1) -- (.5,0);
    \draw[thick, blue, postaction={decorate}] (0,-1) -- (.5,0);
     \draw[thick, blue, postaction={decorate}] (.5,0) -- (1.5,0);
     \draw[thick, blue, postaction={decorate}] (2,1) -- (1.5,0);
     \draw[thick, blue, postaction={decorate}] (2,-1) -- (1.5,0);
    \node at (-.2,.75) {$\scs j_2$};
    \node at (-.2,-.75) {$\scs j_1$};
    \node at (2.2,.75) {$\scs j_3$};
    \node at (2.2,-.75) {$\scs j_4$};
   \node at (.75,.25){$\scs j_5$};
\end{tikzpicture}}
\right\rangle
\longrightarrow
\sum_{j_5'} \md({j_5'}) N^{j_1 j_2, j_5 }_{j_3 j_4^{\ast} j_5'}
\left| \hackcenter{\begin{tikzpicture}[  rotate=90, decoration={markings, mark=at position 0.6 with {\arrow{>}};}, scale =0.70]
    \draw[thick, blue, postaction={decorate}] (0,1) -- (.5,0);
    \draw[thick, blue, postaction={decorate}] (0,-1) -- (.5,0);
     \draw[thick, blue, postaction={decorate}] (1.5,0) -- (.5,0);
     \draw[thick, blue, postaction={decorate}] (2,1) -- (1.5,0);
     \draw[thick, blue, postaction={decorate}] (2,-1) -- (1.5,0);
    \node at (-.2,.75) {$\scs j_1$};
    \node at (-.2,-.75) {$\scs j_4$};
    \node at (2.2,.75) {$\scs j_2$};
    \node at (2.2,-.75) {$\scs j_3$};
   \node at (.75,.25){$\scs j_5'$};
\end{tikzpicture}} \;
\right\rangle
\end{equation}
In what follows below, we sometimes write $T_{j_5}$ for the operator above to clarify algebraic computations.

\subsection{Bistellar operators are Hermitian}
Much like the work in \cite{Hu_2012}, we can define certain local change of triangulation operators that act on our Hilbert space, though the computations in the non-semisimple case require modification.
%
%
%
Using the Hermitian inner product from \eqref{eq:eta-inner}, we have
\begin{align}
 & \left\langle \hackcenter{\begin{tikzpicture}[   decoration={markings, mark=at position 0.6 with {\arrow{>}};}, scale =0.70]
    \draw[thick, blue, postaction={decorate}] (0,1) -- (.5,0);
    \draw[thick, blue, postaction={decorate}] (0,-1) -- (.5,0);
     \draw[thick, blue, postaction={decorate}] (.5,0) -- (1.5,0);
     \draw[thick, blue, postaction={decorate}] (2,1) -- (1.5,0);
     \draw[thick, blue, postaction={decorate}] (2,-1) -- (1.5,0);
    \node at (-.2,.75) {$\scs j_2$};
    \node at (-.2,-.75) {$\scs j_1$};
    \node at (2.2,.75) {$\scs j_3$};
    \node at (2.2,-.75) {$\scs j_4$};
   \node at (.75,.25){$\scs j_5$};
\end{tikzpicture}}
\right.
\left| \hackcenter{\begin{tikzpicture}[   decoration={markings, mark=at position 0.6 with {\arrow{>}};}, scale =0.70]
    \draw[thick, blue, postaction={decorate}] (0,1) -- (.5,0);
    \draw[thick, blue, postaction={decorate}] (0,-1) -- (.5,0);
     \draw[thick, blue, postaction={decorate}] (.5,0) -- (1.5,0);
     \draw[thick, blue, postaction={decorate}] (2,1) -- (1.5,0);
     \draw[thick, blue, postaction={decorate}] (2,-1) -- (1.5,0);
    \node at (-.2,.75) {$\scs j_2$};
    \node at (-.2,-.75) {$\scs j_1$};
    \node at (2.2,.75) {$\scs j_3$};
    \node at (2.2,-.75) {$\scs j_4$};
   \node at (.75,.25){$\scs j_5$};
\end{tikzpicture}}
\right\rangle
\nn \\
\label{eq:inner-H}
&\;\; = \;\;
\gamma(j_1,j_2,j_5^{\ast}) \gamma(j_3,j_4,j_5) \beta(j_5) \frac{1}{\md(j_5)} , 
\end{align}
   Similarly, we can compute that for a state that is identical to the one above, except in a neighborhood of $j_5$ that we have
\begin{align}
& \left\langle
\hackcenter{\begin{tikzpicture}[  rotate=90, decoration={markings, mark=at position 0.6 with {\arrow{>}};}, scale =0.70]
    \draw[thick, blue, postaction={decorate}] (0,1) -- (.5,0);
    \draw[thick, blue, postaction={decorate}] (0,-1) -- (.5,0);
     \draw[thick, blue, postaction={decorate}] (1.5,0) -- (.5,0);
     \draw[thick, blue, postaction={decorate}] (2,1) -- (1.5,0);
     \draw[thick, blue, postaction={decorate}] (2,-1) -- (1.5,0);
    \node at (-.2,.75) {$\scs j_1$};
    \node at (-.2,-.75) {$\scs j_4$};
    \node at (2.2,.75) {$\scs j_2$};
    \node at (2.2,-.75) {$\scs j_3$};
   \node at (.75,.25){$\scs j_5'$};
\end{tikzpicture}}
\right|
\left.  \hackcenter{\begin{tikzpicture}[  rotate=90, decoration={markings, mark=at position 0.6 with {\arrow{>}};}, scale =0.70]
    \draw[thick, blue, postaction={decorate}] (0,1) -- (.5,0);
    \draw[thick, blue, postaction={decorate}] (0,-1) -- (.5,0);
     \draw[thick, blue, postaction={decorate}] (1.5,0) -- (.5,0);
     \draw[thick, blue, postaction={decorate}] (2,1) -- (1.5,0);
     \draw[thick, blue, postaction={decorate}] (2,-1) -- (1.5,0);
    \node at (-.2,.75) {$\scs j_1$};
    \node at (-.2,-.75) {$\scs j_4$};
    \node at (2.2,.75) {$\scs j_2$};
    \node at (2.2,-.75) {$\scs j_3$};
   \node at (.75,.25){$\scs j_5'$};
\end{tikzpicture}}
\right\rangle
\nn \\ \label{eq:I}
&= \gamma(j_2,j_3,j_5'^{\ast})\gamma(j_1,j_5'^{\ast}, j_4) \beta(j_5') \frac{1}{\md(j_5')} 
\end{align}


\begin{proposition}
 The operator $T$ is Hermitian and unitary.
\end{proposition}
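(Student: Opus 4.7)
The plan is to compute the Hermitian adjoint of $T$ with respect to the indefinite inner product \eqref{eq:Aindef-inner} and to verify two things: (i) that $T^{\dagger}\colon \HHH(\Gamma',\col')\to \HHH(\Gamma,\col)$ coincides with the reverse bistellar operator $T'$ obtained by applying \eqref{eq:Toper} to the $H$-configuration -- this is the Hermiticity statement, since the adjoint of the Pachner-move operator is then the inverse-direction Pachner-move operator; and (ii) that $T^{\dagger}T=\Id$ on $\HHH(\Gamma,\col)$, giving unitarity. Writing $e_{j_5}\in \HHH(\Gamma,\col)$ and $f_{j_5'}\in \HHH(\Gamma',\col')$ for the basis states indexed by their interior edge label, and $\|e_{j_5}\|^2$, $\|f_{j_5'}\|^2$ for the squared indefinite norms computed in \eqref{eq:inner-H} and \eqref{eq:I}, sesquilinearity together with \eqref{eq:Toper} yields
\begin{equation*}
(T^{\dagger})_{j_5,\,j_5'} \;=\; \md(j_5')\,\overline{N^{j_1 j_2 j_5}_{j_3 j_4^{\ast} j_5'}}\cdot \frac{\|f_{j_5'}\|^2}{\|e_{j_5}\|^2}.
\end{equation*}

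For Hermiticity, I would substitute the Hermitian structure identity \eqref{eq:6j} to express $\overline{N^{j_1 j_2 j_5}_{j_3 j_4^{\ast} j_5'}}$ as a real $N$-symbol with dualized and permuted indices multiplied by four $\gamma$-factors and six $\beta$-factors. One of the $\gamma$-factors is exactly $\gamma(j_1,j_2,j_5^{\ast})$, which cancels a matching factor in $\|e_{j_5}\|^2$; the remaining three $\gamma$-factors combine with the two $\gamma$-factors in $\|f_{j_5'}\|^2$ via cyclic invariance and the pair-cancellation identity \eqref{eq:gamma-theta}, $\gamma(i,j,k)\gamma(k^{\ast},j^{\ast},i^{\ast})=\beta(i)^{-1}\beta(j)^{-1}\beta(k)^{-1}$ on admissible triples, collapsing all remaining $\gamma$- and $\beta$-factors exactly. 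A tetrahedral symmetry \eqref{eq:symm} rewrite of the surviving $N$-symbol then produces $\md(j_5)$ times the $N$-symbol defining the reverse bistellar operator $T'$ via \eqref{eq:Toper}, giving $T^{\dagger}=T'$.

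For unitarity, the composition $T^{\dagger}T = T'T$ applied to $e_{j_5}$ expands as a double sum over interior edges $j_5',j_5''$ of a product $\md(j_5')\md(j_5'')$ with two $N$-symbols sharing the index $j_5'$. After a tetrahedral symmetry \eqref{eq:symm} aligns the shared label into the slot required by orthogonality, the inner sum over $j_5'$ becomes a direct instance of the orthogonality relation \eqref{eq-ortho}, which collapses to $\delta_{j_5,j_5''}/\md(j_5)$ together with admissibility deltas automatic from the branching rules on the outer labels. The $\md$-factors cancel and the composition reduces to the identity on $\HHH(\Gamma,\col)$; the identity $TT^{\dagger}=\Id$ follows symmetrically by running the same computation starting from $f_{j_5'}$.

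The main obstacle is the index bookkeeping: one must carefully track cyclic orderings of the $\gamma$-arguments and the starred/dualized versions of the $N$-labels so that the four $\gamma$-factors produced by \eqref{eq:6j} pair with those in the norm ratio under the correct cyclic rearrangements, and so that after the \eqref{eq:symm} rewrites the $N$-symbols in the unitarity computation take exactly the form required by \eqref{eq-ortho}. Once these conventions are fixed consistently, both statements reduce immediately to the Hermitian-structure axioms and the orthogonality of the $6j$-symbols.
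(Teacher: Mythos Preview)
Your proposal is correct and follows essentially the same approach as the paper: for Hermiticity, you expand $\overline{N}$ via \eqref{eq:6j}, cancel the resulting $\gamma$- and $\beta$-factors against those in the norm ratio using \eqref{eq:gamma-theta} and cyclic invariance, and apply tetrahedral symmetry \eqref{eq:symm} to identify the result with the reverse bistellar operator; for unitarity, you compose the two bistellar operators and reduce to orthogonality \eqref{eq-ortho} after a symmetry rewrite. The only cosmetic difference is that the paper works with matrix elements of the inner product directly rather than your norm-ratio formulation, but the computations are the same.
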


\begin{proof}
Consider the matrix entries of $T^{\dagger}$ with respect to the Hermitian form:
\begin{align}
 & \left\langle \hackcenter{\begin{tikzpicture}[   decoration={markings, mark=at position 0.6 with {\arrow{>}};}, scale =0.70]
    \draw[thick, blue, postaction={decorate}] (0,1) -- (.5,0);
    \draw[thick, blue, postaction={decorate}] (0,-1) -- (.5,0);
     \draw[thick, blue, postaction={decorate}] (.5,0) -- (1.5,0);
     \draw[thick, blue, postaction={decorate}] (2,1) -- (1.5,0);
     \draw[thick, blue, postaction={decorate}] (2,-1) -- (1.5,0);
    \node at (-.2,.75) {$\scs j_2$};
    \node at (-.2,-.75) {$\scs j_1$};
    \node at (2.2,.75) {$\scs j_3$};
    \node at (2.2,-.75) {$\scs j_4$};
   \node at (.75,.25){$\scs j_5$};
\end{tikzpicture}}
\right|
T^{\dagger}
\left| \hackcenter{\begin{tikzpicture}[  rotate=90, decoration={markings, mark=at position 0.6 with {\arrow{>}};}, scale =0.70]
    \draw[thick, blue, postaction={decorate}] (0,1) -- (.5,0);
    \draw[thick, blue, postaction={decorate}] (0,-1) -- (.5,0);
     \draw[thick, blue, postaction={decorate}] (1.5,0) -- (.5,0);
     \draw[thick, blue, postaction={decorate}] (2,1) -- (1.5,0);
     \draw[thick, blue, postaction={decorate}] (2,-1) -- (1.5,0);
    \node at (-.2,.75) {$\scs j_1$};
    \node at (-.2,-.75) {$\scs j_4$};
    \node at (2.2,.75) {$\scs j_2$};
    \node at (2.2,-.75) {$\scs j_3$};
   \node at (.75,.25){$\scs j_5'$};
\end{tikzpicture}} \;
\right\rangle
\nn
\\
& \;\; := \;\;
\overline{
  \left\langle\hackcenter{\begin{tikzpicture}[  rotate=90, decoration={markings, mark=at position 0.6 with {\arrow{>}};}, scale =0.70]
    \draw[thick, blue, postaction={decorate}] (0,1) -- (.5,0);
    \draw[thick, blue, postaction={decorate}] (0,-1) -- (.5,0);
     \draw[thick, blue, postaction={decorate}] (1.5,0) -- (.5,0);
     \draw[thick, blue, postaction={decorate}] (2,1) -- (1.5,0);
     \draw[thick, blue, postaction={decorate}] (2,-1) -- (1.5,0);
    \node at (-.2,.75) {$\scs j_1$};
    \node at (-.2,-.75) {$\scs j_4$};
    \node at (2.2,.75) {$\scs j_2$};
    \node at (2.2,-.75) {$\scs j_3$};
   \node at (.75,.25){$\scs j_5'$};
\end{tikzpicture}}
\right|
T
\left| \hackcenter{\begin{tikzpicture}[   decoration={markings, mark=at position 0.6 with {\arrow{>}};}, scale =0.70]
    \draw[thick, blue, postaction={decorate}] (0,1) -- (.5,0);
    \draw[thick, blue, postaction={decorate}] (0,-1) -- (.5,0);
     \draw[thick, blue, postaction={decorate}] (.5,0) -- (1.5,0);
     \draw[thick, blue, postaction={decorate}] (2,1) -- (1.5,0);
     \draw[thick, blue, postaction={decorate}] (2,-1) -- (1.5,0);
    \node at (-.2,.75) {$\scs j_2$};
    \node at (-.2,-.75) {$\scs j_1$};
    \node at (2.2,.75) {$\scs j_3$};
    \node at (2.2,-.75) {$\scs j_4$};
   \node at (.75,.25){$\scs j_5$};
\end{tikzpicture}} \;
\right\rangle } \nn
\\
&\nn \refequal{\eqref{eq:I}} \overline{\md({j_5'}} )  \overline{N^{j_1 j_2, j_5 }_{j_3 j_4^{\ast} j_5'}} \cdot
 \gamma(j_2,j_3,j_5'^{\ast})\gamma(j_1,j_5'^{\ast}, j_4) \beta(j_5') \frac{1}{\overline{ \md(j_5') } } 
\\ \nn
&\;\; =  \overline{N^{j_1 j_2, j_5 }_{j_3 j_4^{\ast} j_5'}}
 \gamma(j_2,j_3,j_5'^{\ast})\gamma(j_1,j_5'^{\ast}, j_4) \beta(j_5') 
\end{align}

Compare with the matrix coefficients of $T$ which you get by rotating \eqref{eq:Toper} 90 degrees
\begin{align}
 &  \left\langle \hackcenter{\begin{tikzpicture}[   decoration={markings, mark=at position 0.6 with {\arrow{>}};}, scale =0.70]
    \draw[thick, blue, postaction={decorate}] (0,1) -- (.5,0);
    \draw[thick, blue, postaction={decorate}] (0,-1) -- (.5,0);
     \draw[thick, blue, postaction={decorate}] (.5,0) -- (1.5,0);
     \draw[thick, blue, postaction={decorate}] (2,1) -- (1.5,0);
     \draw[thick, blue, postaction={decorate}] (2,-1) -- (1.5,0);
    \node at (-.2,.75) {$\scs j_2$};
    \node at (-.2,-.75) {$\scs j_1$};
    \node at (2.2,.75) {$\scs j_3$};
    \node at (2.2,-.75) {$\scs j_4$};
   \node at (.75,.25){$\scs j_5$};
\end{tikzpicture}}
\right|
T
\left| \hackcenter{\begin{tikzpicture}[  rotate=90, decoration={markings, mark=at position 0.6 with {\arrow{>}};}, scale =0.70]
    \draw[thick, blue, postaction={decorate}] (0,1) -- (.5,0);
    \draw[thick, blue, postaction={decorate}] (0,-1) -- (.5,0);
     \draw[thick, blue, postaction={decorate}] (1.5,0) -- (.5,0);
     \draw[thick, blue, postaction={decorate}] (2,1) -- (1.5,0);
     \draw[thick, blue, postaction={decorate}] (2,-1) -- (1.5,0);
    \node at (-.2,.75) {$\scs j_1$};
    \node at (-.2,-.75) {$\scs j_4$};
    \node at (2.2,.75) {$\scs j_2$};
    \node at (2.2,-.75) {$\scs j_3$};
   \node at (.75,.25){$\scs j_5'$};
\end{tikzpicture}} \;
\right\rangle
\nn \\ \nn
\;\; &= \;\;
\md({j_5})  N^{j_2 j_3, j_5' }_{j_4 j_1^{\ast} j_5^{\ast}}   \cdot \gamma(j_1,j_2,j_5^{\ast}) \gamma(j_3,j_4,j_5) \beta(j_5) \frac{1}{\md(j_5)}  
\end{align}

Therefore, we have shown that the $T$ operator is Hermitian provided the following equality holds.
\begin{align}
& \overline{N^{j_1 j_2, j_5 }_{j_3 j_4^{\ast} j_5'}}
 \gamma(j_2,j_3,j_5'^{\ast})\gamma(j_1,j_5'^{\ast}, j_4) \beta(j_5') \nn \\
 \label{eq:T1-want} &\;\; =\;\;
  N^{j_2 j_3, j_5' }_{j_4 j_1^{\ast} j_5^{\ast}}     \gamma(j_1,j_2,j_5^{\ast}) \gamma(j_3,j_4,j_5) \beta(j_5).
\end{align}
Recall from \eqref{eq:6j} that
\begin{align}
&\overline{N^{j_1 j_2, j_5 }_{j_3 j_4^{\ast} j_5'}}
 =
N^{j_2^{\ast} j_1^{\ast}j_5^{\ast}}_{j_4^{\ast} j_3 j_5'}
	\gamma(j_1,j_2,j_5^{\ast})   \gamma(j_1^{\ast}, j_4^{\ast}, j_5'^{\ast}) \gamma(j_2^{\ast} ,j_5', j_3^{\ast})
\nn \\
& \label{eq:Nbar-eq}
\times \gamma(j_5,j_4,j_4 )
\beta(j_1)\beta(j_2)\beta(j_1)\beta(j_3)\beta(j_4)\beta(j_5)\beta(j_5')
\end{align}
and using the symmetries of the tetrahedron we have
\begin{equation} \label{eq:Nsymm}
N^{j_2 j_3 j_5'}_{j_4 j_1^{\ast} j_5^{\ast}}
= N^{j_3 j_5' j_2}_{j_1^{\ast} j_5^{\ast} j_4} = N^{j_2^{\ast} j_1^{\ast} j_5^{\ast}}_{j_4^{\ast} j_4 j_5'}.
\end{equation}
Hence, we have
\begin{align}
&  \overline{N^{j_1 j_2, j_5 }_{j_3 j_4^{\ast} j_5'}}
 \gamma(j_2,j_3,j_5'^{\ast})\gamma(j_1,j_5'^{\ast}, j_4) \beta(j_5')
 \;\;  \;\;
  \nn
  \\
 & \quad \refequal{\eqref{eq:Nbar-eq}}
 N^{j_2^{\ast} j_1^{\ast}j_5^{\ast}}_{j_4^{\ast} j_3 j_5'}
	\gamma(j_1,j_2,j_5^{\ast})   \gamma(j_1^{\ast}, j_4^{\ast}, j_5'^{\ast}) \gamma(j_2^{\ast} ,j_5', j_3^{\ast}) \nn \\
& \qquad  \times  \gamma(j_5,j_3,j_4 )
  \cdot
 \gamma(j_2,j_3,j_5'^{\ast})\gamma(j_1,j_5'^{\ast}, j_4) \beta(j_5')
 \nn \\
 & \qquad  \times \beta(j_1)\beta(j_2)\beta(j_1)\beta(j_3)\beta(j_4)\beta(j_5)\beta(j_5')
\nn  \\ \nn
 & \quad \refequal{\eqref{eq:Nsymm}}      N^{j_2 j_3 j_5'}_{j_4 j_1^{\ast} j_5^{\ast}}
	\gamma(j_1,j_2,j_5^{\ast})  \gamma(j_5,j_3,j_4 )  \beta(j_5) \cdot
\nn \\
& \qquad \times \left(  \gamma(j_1^{\ast}, j_4^{\ast}, j_5'^{\ast}) \gamma(j_1,j_5'^{\ast}, j_4)   \beta(j_1) \beta(j_4) \beta(j_5')\right)
	\nn \\ \nn
	&\qquad  \times
	\left(  \gamma(j_2^{\ast} ,j_5', j_3^{\ast})  \gamma(j_2,j_3,j_5'^{\ast}) \beta(j_2)\beta(j_3)  \beta(j_5')  \right)
\nn  \\ \nn
 & \quad =    N^{j_2 j_3 j_5'}_{j_4 j_1^{\ast} j_5^{\ast}}
	\gamma(j_1,j_2,j_5^{\ast})  \gamma(j_5,j_3,j_4 )
	 \beta(j_5)
\end{align}
where we have used \eqref{eq:gamma-theta} twice in the last equality.  Thus,  \eqref{eq:T1-want} holds and   $T=T^{\dagger}$.

Next we will show that $T^{\dagger} T_1= \Id$.
\begin{align}
& T^{\dagger}T
\left| \hackcenter{\begin{tikzpicture}[   decoration={markings, mark=at position 0.6 with {\arrow{>}};}, scale =0.7]
    \draw[thick, blue, postaction={decorate}] (0,1) -- (.5,0);
    \draw[thick, blue, postaction={decorate}] (0,-1) -- (.5,0);
     \draw[thick, blue, postaction={decorate}] (.5,0) -- (1.5,0);
     \draw[thick, blue, postaction={decorate}] (2,1) -- (1.5,0);
     \draw[thick, blue, postaction={decorate}] (2,-1) -- (1.5,0);
    \node at (-.2,.75) {$\scs j_2$};
    \node at (-.2,-.75) {$\scs j_1$};
    \node at (2.2,.75) {$\scs j_3$};
    \node at (2.2,-.75) {$\scs j_4$};
   \node at (.75,.25){$\scs j_5$};
\end{tikzpicture}}
\right\rangle
 =
\sum_{j_5'} \md({j_5'}) N^{j_1 j_2 j_5 }_{j_3 j_4^{\ast} j_5'}
T^{\dagger} \left| \hackcenter{\begin{tikzpicture}[  rotate=90, decoration={markings, mark=at position 0.6 with {\arrow{>}};}, scale =0.7]
    \draw[thick, blue, postaction={decorate}] (0,1) -- (.5,0);
    \draw[thick, blue, postaction={decorate}] (0,-1) -- (.5,0);
     \draw[thick, blue, postaction={decorate}] (1.5,0) -- (.5,0);
     \draw[thick, blue, postaction={decorate}] (2,1) -- (1.5,0);
     \draw[thick, blue, postaction={decorate}] (2,-1) -- (1.5,0);
    \node at (-.2,.75) {$\scs j_1$};
    \node at (-.2,-.75) {$\scs j_4$};
    \node at (2.2,.75) {$\scs j_2$};
    \node at (2.2,-.75) {$\scs j_3$};
   \node at (.75,.25){$\scs j_5'$};
\end{tikzpicture}}
\right\rangle
\nn \\ \nn
&\; = \;
\sum_{j_5'} \md({j_5'}) N^{j_1 j_2 j_5 }_{j_3 j_4^{\ast} j_5'}\;
 \sum_{j_5''}\md({j_5''})  N^{j_2 j_3 j_5' }_{j_4 j_1^{\ast} j_5''^{\ast}}
\;
\left| \hackcenter{\begin{tikzpicture}[   decoration={markings, mark=at position 0.6 with {\arrow{>}};}, scale =0.7]
    \draw[thick, blue, postaction={decorate}] (0,1) -- (.5,0);
    \draw[thick, blue, postaction={decorate}] (0,-1) -- (.5,0);
     \draw[thick, blue, postaction={decorate}] (.5,0) -- (1.5,0);
     \draw[thick, blue, postaction={decorate}] (2,1) -- (1.5,0);
     \draw[thick, blue, postaction={decorate}] (2,-1) -- (1.5,0);
    \node at (-.2,.75) {$\scs j_2$};
    \node at (-.2,-.75) {$\scs j_1$};
    \node at (2.2,.75) {$\scs j_3$};
    \node at (2.2,-.75) {$\scs j_4$};
   \node at (.75,.25){$\scs j_5''$};
\end{tikzpicture}}
\right\rangle
\end{align}
(The above works since we have established that $T^{\dagger} = T$.)
Observe that using the symmetries from \eqref{eq:symm}
we have
\begin{equation}
  N^{j_2 j_3 j_5'}_{j_4 j_1^{\ast} j_5''^{\ast}} =
  N^{j_5' j_4 j_1^{\ast}}_{j_5'' j_2 j_3^{\ast}} =
  N^{j_1^{\ast} j_5'' j_2}_{j_3 j_5' j_4^{\ast}} =
  N^{j_5'' j_2^{\ast} j_1}_{j_5' j_4^{\ast} j_3}.
\end{equation}
So unitarity follows from \eqref{eq-ortho} since
\begin{align*}
 &\sum_{j_5''}\sum_{j_5'} \md({j_5''}) \md({j_5'}) N^{j_1 j_2 j_5 }_{j_3 j_4^{\ast} j_5'}\;
N^{j_2 j_3, j_5' }_{j_4 j_1^{\ast} j_5''^{\ast}}
\\ & \quad =  \;
 \sum_{j_5''}\sum_{j_5'} \md({j_5''}) \md({j_5'}) N^{j_1 j_2 j_5 }_{j_3 j_4^{\ast} j_5'}\;
N^{j_5'' j_2^{\ast} j_1}_{j_5' j_4^{\ast} j_3} \\
& \quad
 = \;
 \sum_{j_5''}\md({j_5''}) \left(\sum_{j_5'} \md({j_5'}) N^{j_1 j_2 j_5 }_{j_3 j_4^{\ast} j_5'}\;
 N^{j_5'' j_2^{\ast} j_1}_{j_5' j_4^{\ast} j_3} \right)
 \\
& \quad  = \;
\sum_{j_5''}\md({j_5''}) \left(  \frac{\delta_{j_5,j_5''}}{\md({j_5})} \delta_{j_1 j_2 j_5''^{\ast}} \delta_{j_5 j_3 j_4}\right) = \delta_{j_5,j_5''}.
\end{align*}
\end{proof}

\subsection{Local operators commute with the Hamiltonian}

\begin{lemma}
  When all involved $\Gr$-coloring are admissible, plaquette and
  bistellar operators commute:
  $$B_p^{g}T_{j_n}=T_{j_n'}B_p^{g}.$$
\end{lemma}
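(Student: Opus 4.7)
The plan is to prove the commutativity topologically by interpreting both sides as the operator assigned by the modified Turaev-Viro TQFT to the same 3-dimensional slab, and to fall back on a direct inspection in the cases where the relevant local moves happen to be separated.

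The easy geometric situation comes first. If the edge flipped by $T_{j_n}$ neither belongs to $\delta p$ nor has an endpoint that is a vertex of a triangle adjacent to $p$, then the two operators modify disjoint portions of the labelled dual graph and commute trivially. When the flipped edge shares only a vertex with $\delta p$, the labels on which $B_p^g$ acts are untouched by the Pachner move, and conversely, so commutativity again reduces to the trivial case.

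The substantive case is when the flipped edge $e$ lies on $\delta p$. Here I would use the topological interpretation from Section~\ref{sec:top}. The composition $B_p^g T_{j_n}$ is the operator assigned by the modified Turaev-Viro TQFT to the cylinder $\Sigma \times [0,1]$ triangulated by first performing a 2-2 Pachner move on $\Sigma \times \{0\}$ and then coning the vertex $p$ to an interior vertex $p'$ along an edge labelled $g$; the composition $T_{j_n'} B_p^g$ corresponds to the reverse order of the same two local moves. Since both are admissible Hamiltonian triangulations of the same 3-manifold $(\Sigma \times [0,1],\, \mathsf{m} \times [0,1])$ representing the same cohomology class $[\Phi] \in H^1(\Sigma,\Gr)$, the triangulation-independence of the modified Turaev-Viro invariant from \cite{GP3} delivers $B_p^g T_{j_n} = T_{j_n'} B_p^g$ immediately.

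The main technical obstacle is checking that the intermediate triangulations are admissible throughout the composition, i.e.\ that all edge colorings introduced lie in $\Gr \setminus \X$; this is essentially the hypothesis built into the statement of the lemma. A secondary check is that the Hamiltonian condition on $Y = \mathsf{m} \times [0,1]$ is preserved, which holds because the 2-2 Pachner move does not alter the vertex set of the boundary triangulation and the coning edge labelled $g$ does belong to the dual graph. If one preferred an algebraic verification, both sides can be expanded using \eqref{eq:Toper} and the definition of $B_p^g$, and the resulting identity reduces to iterated applications of the pentagon relation \eqref{eq:pent} together with tetrahedral symmetry \eqref{eq:symm}; the topological argument is conceptually cleaner and is the path I would take.
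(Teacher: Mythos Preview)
Your proposal is correct and takes a genuinely different route from the paper. The paper gives a direct algebraic argument: it expands both $B_p^{s}T_{j_n}$ and $T_{j_n'}B_p^{s}$ as sums of products of $6j$-symbols, cancels the common factors coming from the plaquette edges not adjacent to the flipped edge, and reduces the remaining identity to a single instance of the pentagon relation \eqref{eq:pent} (together with \eqref{eq:symm}). Your approach instead interprets $T$ as the modified Turaev--Viro operator for a single tetrahedron realizing the $2$-$2$ Pachner move and then appeals to the triangulation-independence result of \cite{GP3}, exactly parallel to the argument the paper itself sketches in \eqref{E:BpBp'} for commutativity of two plaquette operators. Your route is conceptually cleaner and explains \emph{why} the identity holds, while the paper's computation is self-contained and makes explicit that only one pentagon is needed. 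Two small points: your preliminary case split (disjoint versus adjacent to $\delta p$) is unnecessary, since the topological argument is uniform; and the phrase ``the coning edge labelled $g$ does belong to the dual graph'' is garbled---you mean that this edge lies in the Hamiltonian graph $\Y\subset M$, which is why it contributes a factor $\bb$ rather than $\md$ and matches the definition \eqref{eq:Balpha} of $B_p^g$. You should also state once, for completeness, that $T$ really is the operator \eqref{eq:modifiedTV} for a single tetrahedron: the only edge of $\T^M$ not in the incoming boundary is the new diagonal, it is not in $\Y$, and hence contributes exactly the $\md(j_5')$ appearing in \eqref{eq:Toper}.
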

\begin{proof}
By direct computation we have
\begin{align} \nn
   B_p^{s}T_{j_n} &=
 \sum_{z_{n-1}} \sum_{j_1',\ldots,j_{n-1}'} \prod_{i=1}^{n-2}
 \md(z_{n-1}) N^{k_{n-1}^{\ast} j_{n-1} j_n}_{j_1^{\ast} k_n z_{n-1}}
 \\ \nn & \qquad \times
\md(j_i') N^{j_i' s j_i}_{j_{i+1}^{\ast} k_i j_{i+1}'^{\ast}}
\md(j_{n-1}') N^{j_{n-1}' s j_{n-1}}_{j_1^{\ast} z_{n-1} j_1'^{\ast}}
\\ \nn
T_{j_n'} B_p^s &=
 \sum_{z_{n-1}} \sum_{j_1',\ldots,j_{n-1}', j_n'} \prod_{i=1}^{n-2}
 \md(z_{n-1})
\md(j_i') N^{j_i' s j_i}_{j_{i+1}^{\ast} k_i j_{i+1}'^{\ast}}
\\ \nn
& \nn \quad \times
\md(j_{n-1}')
\md(j_n') N^{j_{n-1}' s j_{n-1}}_{j_n^{\ast} k_{n-1} j_n'^{\ast}}
 N^{j_{n}' s j_{n}}_{j_1^{\ast} k_{n} j_1'^{\ast}}
 N^{k_{n-1}^{\ast} j_{n-1}' j_n'}_{j_1'^{\ast} k_n z_{n-1}}
\end{align}
 In order to verify the equality, we need to check
\begin{align*}
 & N^{k_{n-1}^{\ast} j_{n-1} j_n}_{j_1^{\ast} k_n z_{n-1}}
 N^{j_{n-1}' s j_{n-1}}_{j_1^{\ast} z_{n-1} j_1'^{\ast}}
\\ & \qquad  =
 \sum_{j_n'}
 \md(j_n')
 N^{j_{n-1}' s j_{n-1}}_{j_n^{\ast} k_{n-1} j_n'^{\ast}}
 N^{j_{n}' s j_{n}}_{j_1^{\ast} k_{n} j_1'^{\ast}}
 N^{k_{n-1}^{\ast} j_{n-1}' j_n'}_{j_1'^{\ast} k_n z_{n-1}}
\end{align*}
which follows from \eqref{eq:symm} and \eqref{eq:pent}.

\end{proof}

\section{Example: quantum $\slt$ at roots of unity} \label{sec:coeff}
In this section we provide a non-trivial example of a Hermitian relative $\Gr$-spherical category.  It is a certain category of representations of quantum $\slt$ where the quantum parameter $q$ is specialized to a root of unity.
There are two versions of the quantum group that we consider.  The first is the unrolled quantum group.  This larger algebra plays an auxiliary role here.  It was proved in \cite{GLPMS} that a category of representations for this algebra is Hermitian.  This leads to some nice properties about $6j$-symbols for the category of representations.  However, the category of representations for the unrolled quantum group does not have certain finiteness properties that are required for a relative $\Gr$-spherical category, which are necessary to construct the modified Turaev-Viro invariant.  The semi-restricted quantum group for $\slt$ does give rise to a relative $\Gr$-spherical category.
There is a forgetful functor between the categories allowing us to transport the desired properties of the $6j$-symbols for the unrolled quantum group over to the semi-restricted quantum group.

 Throughout the section, fix an odd positive integer $r=2r'+1$
and
 let $q=e^\frac{\pi\sqrt{-1}}{r}$ be a
 $2r^{th}$-root of unity.  Set
 \[
 [n]:=\frac{q^n-q^{-n}}{q-q^{-1}}, \quad \quad \{n\} := q^n - q^{-n} .
 \]

\subsection{Unrolled quantum group}

Let $\UsltH$ be the $\C$-algebra given by generators $E, F, K, K^{-1}, H$
and relations:
\begin{align}
  KK^{-1} &=K^{-1}K=1, \;\; & KEK^{-1} &=q^2E,  \;\; \nn \\
  [E,F] &=\frac{K-K^{-1}}{q-q^{-1}} &  KFK^{-1} &=q^{-2}F, \label{E:RelDCUqsl}\\
    HK &=KH, \;\;
  [H,E]=2E,    &[H,F]&=-2F. \label{Hrels}
\end{align}
The algebra $\UsltH$ is a Hopf algebra where the coproduct, counit and
antipode are defined by
\begin{align}
  \Delta(E)\!&=\! 1\otimes E + E\otimes K, \!\!\!\!
  &\varepsilon(E)\!&= 0,
  \!\!\!\!&S(E)&\!=\!-EK^{-1},  \label{Edata}
  \\
  \Delta(F)\!&=\!K^{-1} \otimes F + F\otimes 1,\!\!\!\!
  &\varepsilon(F)\!&=0, \!\!\!\!& S(F)&\!=\!-KF, \label{Fdata}
    \\
  \Delta(K)\!&=\!K\otimes K, \!\!\!\!
  &\varepsilon(K)\!&=1, \!\!\!\!
  & S(K)&\!=\!K^{-1}, \label{Kdata} \\
    \Delta(H)\!&=\!H\otimes 1 + 1 \otimes H, \!\!\!\!
  & \varepsilon(H)\!&=0, \!\!\!\!
  &S(H)&\!=\!-H. \label{Hdata}
\end{align}
Define the {\it unrolled quantum group} $\Ubar$ to be the Hopf algebra $\UsltH$ modulo the relations
$E^r=F^r=0$.

We now consider the following class of finite-dimensional highest weight modules for $\Ubar$.
For each $\alpha\in \R$, we let $V_\alpha$ be the $r$-dimensional
highest weight $\Ubar$-module of highest weight $\alpha + r-1$.  The
module $V_\alpha$ has a basis $\{v_0,\ldots,v_{r-1}\}$ whose action is
given by
\begin{gather}
H.v_d=(\alpha + r-1-2d) v_d,\quad E.v_d= \frac{\{ d \} \{d-\alpha\}}{\{1\}^2}v_{d-1} , \nn \\
\quad F.v_d=v_{d+1}. \label{E:BasisV}
\end{gather}
For all $\alpha\in \R$, the quantum dimension of $V_\alpha$ is zero, but
is possible to define a modified dimension that is nonzero.

For $\alpha$ and $\beta$ sufficiently generic, $V_{\alpha} \otimes V_{\beta}$ decomposes into a direct sum of irreducible modules.
\begin{proposition} \cite[Theorem 5.2]{CGP2} \label{decompprop}
If $\alpha, \beta \in (\R\setminus\Z) \cup r\Z$, and $\alpha+\beta \notin \Z$, then
\[
V_{\alpha} \otimes V_{\beta} \cong \bigoplus_{\gamma=-r'}^{r'} V_{\alpha+\beta+2\gamma}.
\]
\end{proposition}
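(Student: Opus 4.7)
The plan is to establish this Clebsch-Gordan style decomposition by combining a weight-space count with the explicit construction of highest weight vectors, then invoking typicality to upgrade the construction to a direct-sum decomposition.

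First I would verify the numerical data. The total dimensions agree: $\dim(V_\alpha \otimes V_\beta) = r^2 = (2r'+1)r$. Using the basis in \eqref{E:BasisV}, the vector $v_i^\alpha \otimes v_j^\beta$ has $H$-weight $\alpha + \beta + 2r - 2 - 2(i+j)$, so the weight $\alpha + \beta + 2r - 2 - 2k$ occurs with multiplicity equal to $\#\{(i,j) : 0 \le i,j \le r-1,\ i+j=k\}$. One checks that these multiplicities match the sum of weight-space multiplicities contributed by the summands $V_{\alpha+\beta+2\gamma}$ on the right-hand side as $\gamma$ varies over $\{-r', \ldots, r'\}$, which is the necessary combinatorial input.

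Next, for each $\gamma \in \{-r', \ldots, r'\}$ I would construct a nonzero highest weight vector $w_\gamma \in V_\alpha \otimes V_\beta$ of $H$-weight $\alpha + \beta + 2\gamma + r - 1$. The weight constraint forces $w_\gamma = \sum_i c_i\, v_i^\alpha \otimes v_{r'-\gamma-i}^\beta$ summed over valid $i$. Applying $E$ via $\Delta(E) = 1 \otimes E + E \otimes K$ together with the action formulas in \eqref{E:BasisV} reduces the condition $E \cdot w_\gamma = 0$ to a first-order recursion in which $c_{i+1}$ is a ratio of quantum-integer factors times $c_i$, with denominators involving $\{i+1\}\{i+1-\alpha\}$. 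The hypothesis $\alpha \in (\R \setminus \Z) \cup r\Z$ guarantees $\{i+1-\alpha\} \neq 0$ for the relevant range of $i$, so the recursion admits a nonzero solution starting from any nonzero $c_0$.

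Finally, the hypothesis $\alpha + \beta \notin \Z$ ensures that $\alpha + \beta + 2\gamma \notin \Z$ for every integer $\gamma$, placing each $V_{\alpha+\beta+2\gamma}$ in the typical range where it is simple and projective in the category of finite-dimensional weight modules. The cyclic submodule generated by $w_\gamma$ is therefore isomorphic to $V_{\alpha+\beta+2\gamma}$, and since these typical simples have distinct highest weights they are pairwise non-isomorphic. The sum of these submodules is consequently direct, and the dimension count from the first step shows it fills all of $V_\alpha \otimes V_\beta$. The main obstacle is the analysis of the recursion at the boundary of the relevant range: one must verify that $\{i+1-\alpha\}$ and $\{i+1\}$ remain nonzero for all required $i$, which in the delicate case $\alpha \in r\Z$ reduces, using $q^r = -1$, to showing $\{i+1\} \neq 0$ for $i+1 \in \{1, \ldots, r-1\}$, which holds because $r$ is odd and greater than one.
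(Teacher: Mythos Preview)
The paper does not give a proof of this proposition; it simply cites \cite[Theorem 5.2]{CGP2}. Your argument is the standard Clebsch--Gordan construction and is correct: the dimension and weight-multiplicity bookkeeping is right, the recursion for the highest weight vectors is solvable because $\{i+1\}$ and $\{i+1-\alpha\}$ are nonzero for $1\le i+1\le r-1$ under the stated hypotheses on $\alpha$, and the typicality hypothesis $\alpha+\beta\notin\Z$ makes each $V_{\alpha+\beta+2\gamma}$ simple so that the cyclic submodule generated by $w_\gamma$ is all of it. The final step, that pairwise non-isomorphic simple submodules have direct sum, is a standard fact, and the dimension count finishes the proof.
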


\begin{remark}
The modules $V_{\alpha}$ also exist when $\alpha \in \C$.  We restrict to $\R$ here in order to ensure that the required Hermitian structures exist.
\end{remark}

The category of $\Ubar$-modules generated by the $V_{\alpha}$ for $\alpha \in \R$ possesses a set of 6$j$-symbols $N^{ijk}_{klm}$ that have explicit formulas given in \cite[Theorem 29]{GP1}.

\begin{proposition} \label{prop:mod-6j}
The following identities hold. \hfill
\begin{enumerate}
 \item \label{prop:symm1}$ N^{i j k}_{l m n} = N^{ j k^{\ast}  i^{\ast}}_{m n l}
= N^{k l m}_{n^{\ast} i j^{\ast}}$;

 \item \label{prop:pent1}
$\sum_j \md(j) N^{j_1, j_2, j_5}_{j_3,j_6, j} N^{j_1 j j_6}_{j_4 j_0 j_7} N^{j_2 j_3 j}_{j_4 j_7 j_8}
=
N^{j_5, j_3, j_6}_{j_4 j_0 j_8} N^{j_1 j_2 j_5}_{j_8 j_0 j_7}$ ;

\item \label{prop-ortho1}
$\sum_n \md(n) N^{i j p}_{l m n} N^{k j^{\ast} i}_{n m l}
= \frac{\delta_{k,p}}{\md(k)} \delta_{ij k^{\ast}} \delta_{k l m^{\ast}}
$
where
$\delta_{ijk}$ is the branching rule that equals 1 if the triple is allowed and 0 otherwise.
\end{enumerate}
\end{proposition}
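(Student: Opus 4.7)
The plan is to deduce all three identities from the explicit description of $N^{ijk}_{lmn}$ as matrix coefficients of associators in the category of $\Ubar$-modules generated by the generic simples $V_\alpha$, together with the decomposition in Proposition~\ref{decompprop} and the existence of a modified trace giving $\md$.

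First I would set up the categorical interpretation. Since $\alpha, \beta$ generic implies $V_\alpha \otimes V_\beta$ decomposes as a multiplicity-free direct sum of simples $V_\gamma$ (Proposition~\ref{decompprop}), the spaces $\Hom(V_k, V_i \otimes V_j)$ are one-dimensional whenever $\delta_{ijk^\ast}=1$. Fix coupling morphisms and their duals (using the pivotal structure together with the modified trace so that the chosen pairings are well-behaved under cyclic symmetry). Then $N^{ijk}_{lmn}$ is, up to the normalization by $\md$, the scalar obtained by evaluating the tetrahedral spin network with these labels, equivalently the matrix coefficient of the associator $(V_i \otimes V_j)\otimes V_l \;\cong\; V_i \otimes (V_j \otimes V_l)$ in the chosen bases.

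With this interpretation in hand, the three identities follow from standard categorical manipulations:
\begin{enumerate}
\item Tetrahedral symmetry \eqref{eq:symm} is the statement that the scalar attached to a tetrahedral network depends only on the combinatorial tetrahedron. This is automatic once the coupling morphisms have been normalized using the pivotal/modified-trace structure so that rotations by $2\pi/3$ at a trivalent vertex act by the cyclic symmetry $\gamma$. The key input is \cite[Theorem 29]{GP1}, which provides explicit symmetric formulas; alternatively one derives the invariance from cyclicity of the modified trace together with the one-dimensionality of multiplicity spaces.
\item The pentagon identity \eqref{eq:pent} is the Biedenharn--Elliott identity. I would obtain it as the MacLane pentagon axiom for the associator applied to $V_{j_1}\otimes V_{j_2}\otimes V_{j_3}\otimes V_{j_4}$, expanded in two different ways into a sum over intermediate simples, matching the two triangulations of a pentagon. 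The modified dimension $\md(j)$ enters as the pairing scalar when closing the internal edge, analogous to how $[\dim V_j]$ enters in the semisimple setting.
\item Orthogonality \eqref{eq-ortho} expresses that the two compositions $(V_i\otimes V_j)\otimes V_l \to V_i\otimes(V_j\otimes V_l)\to (V_i\otimes V_j)\otimes V_l$ using the associator and its inverse equal the identity. Expanding in a basis indexed by the intermediate simple $V_n$ and using the multiplicity-freeness from Proposition~\ref{decompprop}, the delta-function right-hand side records the orthogonality of distinct simple summands, with the $1/\md(k)$ coming from the chosen normalization of coupling morphisms.
\end{enumerate}

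The main obstacle will be bookkeeping around the modified trace: unlike the semisimple case where quantum dimensions enter uniformly, here $\md$ plays the role of a categorical dimension only on generic objects, and one must verify that the partial-trace/cyclicity properties used in steps~(1) and~(3) actually hold for the chosen normalizations of coupling morphisms. Once this is in place, genericity of all labels involved (guaranteed by the admissibility hypothesis on the triples appearing in \eqref{eq:6jadmissible} and the constraint that summation indices $j$, $n$ in \eqref{eq:pent}, \eqref{eq-ortho} remain in $\A$) keeps every intermediate object in the well-behaved generic sector, so no pathology from the nilpotent/projective part of the category intrudes and the categorical identities transfer directly to the claimed identities on $6j$-symbols.
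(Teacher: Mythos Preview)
Your approach is correct in spirit, but it differs from the paper's own treatment: the paper does not prove these identities at all, instead simply citing \cite[Section~4, Theorem~7, Theorem~8]{GPT2} for parts (1), (2), (3) respectively. What you have sketched is essentially the argument carried out in that reference, so in effect you are reconstructing the cited proof rather than supplying an alternative.

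Two small comments on your sketch. First, since the category of $\Ubar$-modules is strict monoidal, the MacLane pentagon axiom for associators is trivially satisfied and does not by itself yield the Biedenharn--Elliott identity; what you actually use is associativity of the tensor product to compare the various bases of $\Hom(V_{j_0}, V_{j_1}\otimes V_{j_2}\otimes V_{j_3}\otimes V_{j_4})$ obtained by different bracketings, and the compatibility of these change-of-basis matrices is what gives \eqref{eq:pent}. Your informal description points at the right picture, but the phrase ``MacLane pentagon axiom for the associator'' is misleading in this strict setting. Second, your identification of the ``main obstacle'' is exactly right: the cyclicity and partial-trace properties of the modified trace on the ideal of projective objects are precisely what \cite{GPT2} establishes in order to make the tetrahedral symmetry and the normalization of couplings go through, and this is the nontrivial technical content behind the citations.
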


\begin{proof}
 The first claim is from \cite[Section 4]{GPT2}.  The second claim is \cite[Theorem 7]{GPT2}.   The last claim is \cite[Theorem 8]{GPT2}.
\end{proof}

For $i,j,k \in (\R - \Z) \cup r\Z$, with $i+j+k=2F $, define coefficients
\begin{align}
\md(i) &=\md_0\frac{q^i-q^{-i}}{q^{ri}-q^{-ri}} \\
   \beta(i) &=  \frac{ \{1 \}^{2(r-1)} \md(i)}{r \md_0}
\\
\gamma(i,j,k) &=
\beta(-k)
\prod_{f=1}^{r'+F} \{i-f\} \{j -f\} \nn \\ & \quad \times  \prod_{g=1}^{r'+F} \frac{\{2r'-g+1\}\{-k+g\}}{\{1\}},
\end{align}
where we set $\md_0 = r$.

Using the explicit formulas for the 6$j$-symbols from  \cite[Theorem 29]{GP1} and the definitions above, one can verify the following result.

\begin{proposition} \label{prop:herm-6j}
The following identities hold. \hfill
\begin{enumerate}
 \item 
 $\gamma(i,j,k)\gamma(k^{\ast},j^{\ast},i^{\ast})\beta(i)\beta(j)\beta(k)=1$;
\item 
 $\wb{(N^{j_1 j_2 j_3 }_{j_4j_5j_6})} = N^{j_2^{\ast} j_1^{\ast} j_3^{\ast}}_{j_5 j_4j_6}\gamma(j_1,j_2,j_3^{\ast})\gamma(j_1^{\ast},j_5,j_6^{\ast})$ 
  $  \times\gamma(j_2^{\ast},j_6,j_4^{\ast})
     \gamma(j_3,j_4,j_5^{\ast})\prod_i\beta(j_i)$.
\end{enumerate}
\end{proposition}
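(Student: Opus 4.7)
The plan is to verify both identities by direct computation from the explicit formulas for $\md$, $\beta$, $\gamma$ together with the closed-form expression for $N^{ijk}_{lmn}$ from \cite[Theorem 29]{GP1}. The key starting observation is that $\bar q = q^{-1}$ (since $q = e^{i\pi/r}$), so for any real number $n$ we have $\overline{\{n\}} = -\{n\}$, and complex conjugation of a rational function of $q^{\pm 1}$ acts by inverting $q$ with a sign contribution from each bracket. In particular $\md(i)$ and $\beta(i)$ are real because the sign from the numerator $\{i\}$ cancels against the sign from the denominator $\{ri\}$, consistent with the reality assumed for the Hermitian structure.

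For part (1), I would expand $\gamma(i,j,k)$ and $\gamma(k^{\ast},j^{\ast},i^{\ast})$ according to their definitions. The weight constraint $i+j+k = 2F$ becomes $k^{\ast}+j^{\ast}+i^{\ast} = -2F$, so the product in $\gamma(k^{\ast},j^{\ast},i^{\ast})$ runs up to $r'-F$ instead of $r'+F$. After invoking $\beta(-k) = \beta(k)$ (immediate from $\md(-x)=\md(x)$), the product $\gamma(i,j,k)\gamma(k^{\ast},j^{\ast},i^{\ast})$ reduces to a product of brackets $\{\cdot\}$. Reindexing the $g$-products by a substitution of the form $g \mapsto r'-g+1$ (and similarly for the $f$-products), the factors pair off with those of $\gamma(i,j,k)$, collapsing to an expression that combines with $\beta(i)\beta(j)\beta(k)$ to yield $1$.

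For part (2), I would take the closed-form $N^{j_1 j_2 j_3}_{j_4 j_5 j_6}$ as a finite sum of products of $q$-factorials and bracket factors. Applying $\overline{\{n\}} = -\{n\}$ replaces the sum by the same-shape sum with $q$ inverted, up to a global sign depending on the summation index. Reorganizing, the conjugated sum has the form of a $6j$-symbol with permuted and conjugated weights; the tetrahedral symmetry from Proposition~\ref{prop:mod-6j}(\ref{prop:symm1}) then identifies it with $N^{j_2^{\ast} j_1^{\ast} j_3^{\ast}}_{j_5 j_4 j_6}$. The residual scalar prefactor from the conjugation is expanded as a product of brackets and compared term-by-term with $\prod_i \beta(j_i)$ times the four $\gamma$ factors on the right-hand side, using part (1) to handle one of the $\gamma$ pairings.

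The main obstacle is the combinatorial bookkeeping of bracket factors, namely matching the summation bounds after conjugation and after the tetrahedral permutation, and tracking the many signs contributed by $\overline{\{n\}} = -\{n\}$ so that the total sign matches the product of the four $\gamma$ factors. This is the same style of check carried out in \cite{GLPMS} to establish the Hermitian structure on the category of unrolled quantum group modules, and the argument here is the extension of that bookkeeping from intertwiner spaces to the full $6j$-symbol.
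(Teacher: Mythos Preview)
Your direct-computation approach is viable---indeed the paper remarks just before the proposition that the identities can be verified from the explicit $6j$-symbol formulas of \cite[Theorem~29]{GP1}---but the proof actually supplied (in Appendix~\ref{app:hermitian}, restated as Proposition~\ref{prop:herm-6jgraph}) takes a different and more conceptual route. There $\beta(i)$ and $\gamma(i,j,k)$ are characterized not by their closed forms but as the scalars relating Hermitian adjoints in the Hermitian ribbon category $\mathcal{D}^\dagger$ of \cite{GLPMS}: $w_i^\dagger = \beta(i)\,w_i^{-1}$ and $h_{ijk}^\dagger = \gamma(i,j,k)\,h^{ijk}$. One then defines a conjugation $\Gamma\mapsto\Gamma^\ddagger$ on colored planar networks that multiplies by the appropriate product of $\gamma$'s and $\beta$'s and passes to the mirror with conjugated colors; the Hermitian structure yields $\langle\Gamma^\ddagger\rangle=\overline{\langle\Gamma\rangle}$ for any closed network. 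Identity (1) drops out by applying this to the $\Theta$-graph (whose evaluation is $1$), and identity (2) by applying it to the tetrahedron defining $N^{j_1 j_2 j_3}_{j_4 j_5 j_6}$ and recognizing its mirror, after an isotopy in $S^2$, as the tetrahedron for $N^{j_2^\ast j_1^\ast j_3^\ast}_{j_5 j_4 j_6}$. Your route is self-contained and avoids invoking \cite{GLPMS}, at the cost of the heavy bracket bookkeeping you anticipate; the paper's route offloads all of that onto the single categorical fact that $\mathcal{D}^\dagger$ is Hermitian, and in doing so explains \emph{why} the four $\gamma$'s and six $\beta$'s appear---one per vertex and one per edge of the tetrahedron.
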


In Appendix~\ref{app:hermitian} we give a conceptual derivation of Proposition~\ref{prop:herm-6j} by studying elementary intertwiners between $\Ubar$-modules. There is a ribbon category $\mathcal{D}^{\dagger}$ of $\Ubar$-modules that are generated by a certain set
of objects $A$ containing the $V_{i}$ for $i \in \R$ (see \cite[Equation 18]{GLPMS}).  In \cite[Theorem 4.18]{GLPMS} it is shown that the  category $\mathcal{D}^{\dagger}$ is a Hermitian ribbon category, so it possesses an indefinite inner product $\la \cdot, \cdot \ra$ allowing one to compute Hermitian adjoints $f^{\dagger}$ for any morphism in $\mathcal{D}^{\dagger}$.

 In this formulation the coefficients $\beta(i)$ and $\gamma(i,j,k)$ have natural interpretations via the Hermitian structure.  There is an isomorphism
$w_{i} \colon V_{i} \rightarrow V_{-i}^{\ast}$
determined by mapping a highest weight vector
$v_0 \in V_{i}$ by $w_{i}(v_0)=-q^{-1} y_{r-1}^{\ast}$,
where $y_{r-1}$ is a lowest weight vector of $V_{-i}$.
The coefficient $\beta(i)$ is the unique constant satisfying
\begin{equation} 
w^{\dagger} = \beta(i) w^{-1} .
\end{equation}

Likewise, it is possible to specify a specific basis vector $h_{ijk}$ in the 1-dimensional space $\Hom(1, V_i \otimes V_j \otimes V_k)$.  Then using the pivotal structure and the isomorphisms $w_{\alpha}$, these distinguished basis vectors determine basis vectors
\begin{align*}
h^{ijk} &\in  \Hom(V_i \otimes V_j \otimes V_k, 1)\cong \Hom(V_i \otimes V_j \otimes V_k, 1)
\\& \cong
\Hom(1, V_k^{\ast} \otimes V_j^{\ast} \otimes V_i^{\ast} )
\\ &  \cong
\Hom(1, V_{k^{\ast}} \otimes V_{j^{\ast}} \otimes V_{i^{\ast}} ).
\end{align*}
The coefficients $\gamma(i,j,k)$ are the unique coefficients satisfying
\begin{equation}
  h_{ijk}^{\dagger} = \gamma(i,j,k)h^{ijk}.
\end{equation}

\subsection{Semi-restricted quantum group and relative spherical data}

 Let $\Gr$ be the abelian group
 $\R/2\Z$.  Let $\X$ be the subset
 $\Z/2\Z$ in $\Gr$.  Consider the natural morphisms
 $\R \to \Gr$.  We
 denote the image of $\alpha \in \R$ in
 $ \Gr$ as  $\bar\alpha$.

Let $\tilde{\Gr}$ be the group
  $\R/2r\Z$ and $\tilde{\X}$ be the subset
 $\Z/2r\Z$ in $\tilde{\Gr}$.  Consider the natural morphisms
 $\R \to \tilde{\Gr} \to \Gr$.  We
 denote the image of $\alpha \in \R$ in
 $\tilde{\Gr}$ as $\tilde\alpha$.
 For $\bar\alpha\in \Gr $  let
\begin{align}
  I_{\bar\alpha}&=\{\tilde{\alpha} \in \tilde{\Gr} \; | \; \bar{\alpha} = {\rm im}\tilde\alpha \text{  under the morphism } \tilde{\Gr}\to\Gr    \}  \nn\\
\nn
&=\{\tilde{\alpha}, \tilde{\alpha} +\tilde{2}, \dots, \tilde{\alpha} +\tilde{2(r-1)}     \} .
\end{align}
Note that $\tilde{\Gr}$ comes with an involution $\tilde\alpha\mapsto \tilde\alpha^{\ast}=-\tilde\alpha$ with $I_{-\bar\alpha}=(I_{\bar\alpha})^{\ast}$.

Define
$$\md(\tilde\alpha)=\md_0\frac{q^\alpha-q^{-\alpha}}{q^{r\alpha}-q^{-r\alpha}}$$
for a fixed real number $\md_0$.
If $\tilde i \in  I_{\bar\alpha}$, then we define the constant quantity
\[
 \bb(\tilde i) := \frac{1}{r^2}.
\]

Let $\Uq$ be the $\C$-algebra generated by $E,F,K,K^{-1}$ subject to relations \eqref{E:RelDCUqsl}.
Let $\UqMed$ be the algebra $\Uq$ modulo the relations
$E^r=F^r=0$.  This algebra becomes a Hopf algebra via \eqref{Edata}--\eqref{Kdata}.

The category $\UqMed \dmod$ of weight $\UqMed$-modules is a relative $\Gr$-spherical category with basic data leading to  $6j$-symbols
$ N^{\tilde{i} \tilde{j} \tilde{k}}_{\tilde{l} \tilde{m} \tilde{n}} $, see Section 9 of \cite{GPT2}.  As we now explain, these $6j$-symbols are essentially the same as the values of the
$6j$-symbols derived from the category $\Ubar \dmod$.
Consider the highest weight $\UqMed$-module $V_{\tilde i}$ determined by a highest weight vector of weight $\tilde i +\tilde{r-1}\in \R /2r\Z$.  This module is simple if $\tilde i \in \tilde{\Gr}\setminus \tilde{\X}$.

There is a forgetful functor
\begin{equation} \label{forgetfunctor}
\mathcal{F} \colon \Ubar \dmod \rightarrow \UqMed \dmod .
\end{equation}
This functor maps $V_{i+2rk}$ to $V_{\tilde i}$ for all $k\in \Z$.  The branching rules $\delta_{\tilde i \tilde j  \tilde k}$ are defined to be
\begin{align*}
\delta_{\tilde i \tilde j \tilde k} &=  \dim \Hom_{\UqMed \dmod}(1, V_{\tilde i} \otimes V_{\tilde j}\otimes V_{\tilde k}
)
\\
&= \left\{
    \begin{array}{ll}
      1, & \hbox{$\tilde i   + \tilde j +\tilde k \in \{ -2r', -2r'+2, \dots, 2r'\}$,} \\
      0, & \hbox{otherwise.}
    \end{array}
  \right.
\end{align*}
Note it is possible that $\delta_{\tilde i \tilde j  \tilde k}=1$ but
$\delta_{i j k}=0$.  However, in such a case there exists lifts (via $\mathcal{F} $) of $ V_{\tilde i}, V_{\tilde j}$, and  $V_{\tilde k}$ such that the corresponding hom-space in $\Ubar \dmod$ is non-zero and $\delta_{i j k}=1$.  Moreover, there exists lifts of $i, j, k, l, m, n$ of $\tilde{i}, \tilde{j}, \tilde{k}, \tilde{l}, \tilde{m}, \tilde{n}$ such that
\[
 N^{\tilde{i} \tilde{j} \tilde{k}}_{\tilde{l} \tilde{m} \tilde{n}}=
 N^{i j k}_{l m n}
\]
see    \cite[Remark 21]{GPT2}.

Also note that the quantities $\beta(i)$ and $\gamma(i,j,k)$ computed in Sections \ref{subsecbeta} and \ref{gammasec} are independent of  the pre-images of $i$,$j$, and $k$ in $\R / 2r\Z$.
Thus, by choosing any lifts, these formulas define $\beta(\tilde{i})$ and $\gamma(\tilde{i}, \tilde{j}, \tilde{k})$.  Combining this discussion with Proposition \ref{prop:herm-6j} we have
$$
 \gamma(\tilde{i},\tilde{j},\tilde{k})\gamma(\tilde{k}^{\ast},\tilde{j}^{\ast},\tilde{i}^{\ast})\beta(\tilde{i})\beta(\tilde{j})\beta(\tilde{k})=1;
$$
\begin{align*}
 \wb{(N^{\tilde{j_1} \tilde{j_2} \tilde{j_3} }_{\tilde{j_4}\tilde{j_5}\tilde{j_6}})}
& =
N^{\tilde{j_2}^{\ast} \tilde{j_1}^{\ast} \tilde{j_3}^{\ast}}_{\tilde{j_5} \tilde{j_4}\tilde{j_6}}
\gamma(\tilde{j_1},\tilde{j_2},\tilde{j_3}^{\ast})
\gamma(\tilde{j_1}^{\ast},\tilde{j_5},\tilde{j_6}^{\ast})
\\ & \qquad \times \gamma(\tilde{j_2}^{\ast},\tilde{j_6},\tilde{j_4}^{\ast})
\gamma(\tilde{j_3},\tilde{j_4},\tilde{j_5}^{\ast})\prod_i\beta(\tilde{j_i}).
\end{align*}

The discussion of this subsection can be summarized in the following theorem.
\begin{theorem}
The category of weight modules $\UqMed \dmod$ is a relative $\Gr$-spherical category giving rise to input data
$$
\md(\tilde{i}), \bb(\tilde{i}),
N^{\tilde{i} \tilde{j} \tilde{k}}_{\tilde{l} \tilde{m} \tilde{n}},
\beta(\tilde{i}),
\gamma(\tilde{i}, \tilde{j}, \tilde{k}), \delta_{\tilde i \tilde j  \tilde k}$$
for $\tilde{i}, \tilde{j},\tilde{k}\in \A=\cup_{\bar\alpha\in\Gr\setminus \X} I_{\bar\alpha}$  satisfying the axioms of Section \ref{subsec:basic}.
\end{theorem}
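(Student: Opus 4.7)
The plan is to reduce each axiom of Section~\ref{subsec:basic} to either a property already proved (in Proposition~\ref{prop:mod-6j} or Proposition~\ref{prop:herm-6j} for the unrolled quantum group), a direct counting argument, or an elementary symmetry check on the definitions. The forgetful functor $\mathcal{F}\colon \Ubar\dmod \to \UqMed\dmod$ of \eqref{forgetfunctor} is the key tool: for every tuple of labels $(\tilde{i}_1,\dots,\tilde{i}_n)$ appearing in a relation, one selects a compatible lift $(i_1,\dots,i_n)$ in $\R$ for which the corresponding $6j$-symbol in $\Ubar\dmod$ agrees with the one in $\UqMed\dmod$ (using \cite[Remark 21]{GPT2}), and invokes the already-established identity in $\Ubar\dmod$. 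Since $\beta$, $\gamma$, and the combinatorial data are defined by formulas independent of the choice of lift, the identity descends verbatim.

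More concretely, the steps I would carry out are as follows. First, I would record the string data: $I_{\bar\alpha}$ is finite of size $r$, the involution $\tilde\alpha\mapsto -\tilde\alpha$ is visibly compatible with the degree map to $\Gr$, and the vacuum is $\tilde 0$. The branching rules $\delta_{\tilde i\tilde j\tilde k}$ are defined as a dimension of a $\UqMed$-invariant hom-space, which is automatically cyclic; the compatibility $\delta_{\tilde 0\tilde j\tilde j^{\ast}}=1$ and $\delta_{\tilde i\tilde j\tilde k}=\delta_{\tilde k^{\ast}\tilde j^{\ast}\tilde i^{\ast}}$ follow from the explicit inequality characterization of $\delta$, and the degree condition $\deg(\tilde i)+\deg(\tilde j)+\deg(\tilde k)=0\in\Gr$ is forced by the same inequality (sum lies in $2\Z$). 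Second, tetrahedral symmetry \eqref{eq:symm}, the pentagon identity \eqref{eq:pent}, and orthogonality \eqref{eq-ortho} are direct pullbacks of parts \ref{prop:symm1}, \ref{prop:pent1}, \ref{prop-ortho1} of Proposition~\ref{prop:mod-6j} via a common lift; likewise, \eqref{eq:gamma-theta} and \eqref{eq:6j} are direct pullbacks of the two parts of Proposition~\ref{prop:herm-6j}, since $\beta$ and $\gamma$ are lift-independent.

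The one axiom requiring a separate argument is identity \eqref{eq:b} for $\bb$. Here, because $\bb(\tilde i)=1/r^2$ is constant, the claim reduces to the combinatorial identity
\begin{equation*}
\#\bigl\{(\tilde j_1,\tilde j_2)\in I_{\bar g_1}\times I_{\bar g_2} \,\bigl|\, \delta_{\tilde j^{\ast}\tilde j_1\tilde j_2}=1\bigr\}=r^2,
\end{equation*}
whenever $\bar g+\bar g_1+\bar g_2=0$ in $\Gr$ and all three elements are generic. I would verify this by fixing $\tilde j_1\in I_{\bar g_1}$ and counting the admissible $\tilde j_2\in I_{\bar g_2}$: the constraint $\tilde j^{\ast}+\tilde j_1+\tilde j_2\in\{-2r',-2r'+2,\dots,2r'\}$ pinpoints exactly $r=2r'+1$ values in $\R/2r\Z$, whose image in $\Gr$ is precisely $-\bar g-\bar g_1=\bar g_2$, so every one of them lies in $I_{\bar g_2}$. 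Multiplying by the $r$ choices of $\tilde j_1$ yields the count $r^2$, which cancels the prefactor $1/r^4$ and gives $\bb(\tilde j)=1/r^2$ as required.

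The main (and only real) obstacle is ensuring that the lifting procedure can be carried out consistently: given symbols $\tilde i,\tilde j,\tilde k,\tilde l,\tilde m,\tilde n\in\tilde{\Gr}\setminus\tilde{\X}$ entering a tetrahedral, pentagon or orthogonality relation, one must choose lifts in $\R\setminus\Z$ that are simultaneously admissible for every $6j$-symbol appearing in the identity and for which the forgetful functor maps the corresponding $\Ubar$-morphism spaces isomorphically onto the $\UqMed$-morphism spaces used to define $N^{\tilde i\tilde j\tilde k}_{\tilde l\tilde m\tilde n}$. Genericity of $\bar g\in\Gr\setminus\X$ guarantees that such simultaneous generic real lifts exist (shifting by integer multiples of $2r$ keeps the downstairs data fixed while adjusting genericity upstairs), so Propositions~\ref{prop:mod-6j} and~\ref{prop:herm-6j} apply and the stated axioms all hold, completing the proof.
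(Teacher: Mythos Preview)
Your approach is essentially that of the paper: the theorem is presented there as a summary of the preceding discussion in Section~\ref{sec:coeff}, which cites \cite{GPT2} for the relative $\Gr$-spherical structure on $\UqMed\dmod$ and then transports the Hermitian axioms from $\Ubar\dmod$ via the forgetful functor~\eqref{forgetfunctor} and the lift-independence of $\beta$ and $\gamma$, exactly as you outline. Your explicit counting verification of axiom~\eqref{eq:b} goes beyond what the paper spells out; just watch the sign---the image of $\tilde j_2=\tilde j-\tilde j_1+2k$ in $\Gr$ is $\bar g-\bar g_1$, not $-\bar g-\bar g_1$, so for your count to land in $I_{\bar g_2}$ the degree compatibility must read $g=g_1+g_2$ rather than $g+g_1+g_2=0$ (this appears to be a sign-convention issue in the axiom as stated).
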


\appendix
\section{Some graphical calculus for the unrolled quantum group} \label{app:hermitian}
The aim of this section is to give representation-theoretic proofs for Proposition \ref{prop:herm-6j}.
Here we recall the evaluation of a network and extend the definition
to uni-trivalent graphs.   Let $\Gamma$ be an admissible uni-trivalent
network in a disc with  its $n$ univalent vertices on the boundary
circle. One can associate to $\Gamma$ a $\mathcal D^\dagger$-ribbon
graph which can be evaluated with the Turaev functor to some invariant
tensor
$\in\Hom(1,V_{i_1}\otimes V_{i_2}\otimes \cdots\otimes
V_{i_n})$.
Note that we read our graphs from bottom to top.

\subsection{The module ${\mathsf v} $}
Let ${\mathsf v}$ denote the two-dimensional irreducible highest weight module of $\Ubar$ with basis $\{v_0,v_1\}$.
The algebra acts as follows:
\[
Ev_1=v_0, \quad, F v_0 = v_1, \quad H v_0 = v_0, \quad H v_1 = -v_1.
\]

Using definitions from \cite[Section 3.3]{GP1}, one could deduce that the following formulas determine maps of $\Ubar$-modules:
\begin{align}
\label{+def1}
\hackcenter{ \begin{tikzpicture} [scale=.70, decoration={markings,
                        mark=at position 0.45 with {\arrow{>}};    }]
\draw[very thick, ->]   (0,-.25) --(0,-1.25) ;
\draw[very thick, postaction={decorate} ] (.5,1.25) .. controls ++(0,-.5) and ++(.25,.35) .. (.0,0);
\draw[very thick, postaction={decorate}]  (-.5,1.25).. controls ++(0,-.5) and ++(-.25,.35)   .. (0,0) ;
\node at (-.5,1.5) { $\scs i+1$};
\node at (0,-1.5) {$\scs i$};
\node at (.5,1.5) {$\scs {\mathsf v}$};
\node[draw, fill=white!20 ,rounded corners ] at (0,0) {$ +$};
\end{tikzpicture}}
\!\!&\colon
\begin{array}{l}
V_i \rightarrow V_{i+1} \otimes \mathsf{v}, \\
v_d \mapsto -q^d \{i-d\} (v_d \otimes v_1) -q^{-1} \{1\} v_{d+1} \otimes v_0 ,
\end{array}
%
\\
\label{+def2}
\hackcenter{ \begin{tikzpicture} [scale=.70, decoration={markings,
                        mark=at position 0.45 with {\arrow{>}};    }]
\draw[very thick, ->]   (0,-.25) --(0,-1.25) ;
\draw[very thick, postaction={decorate} ] (.5,1.25) .. controls ++(0,-.5) and ++(.25,.35) .. (.0,0);
\draw[very thick, postaction={decorate}]  (-.5,1.25).. controls ++(0,-.5) and ++(-.25,.35)   .. (0,0) ;
\node at (.5,1.5) { $\scs i+1$};
\node at (0,-1.5) {$\scs i$};
\node at (-.5,1.5) {$\scs {\mathsf v}$};
\node[draw, fill=white!20 ,rounded corners ] at (0,0) {$ +$};
\end{tikzpicture}}
\!\!& \colon
\begin{array}{l}
V_i \rightarrow \mathsf{v} \otimes V_{i+1}, \\
  v_d \mapsto \frac{1}{q} \{i-d\} (v_1 \otimes v_d) \!-\! q^{i-d-1} \{1\} v_{0} \otimes v_{d+1},
\end{array}
%
\\
  \hackcenter{ \begin{tikzpicture} [scale=.70, decoration={markings,
                        mark=at position 0.6 with {\arrow{>}};    }]
\draw[very thick, postaction={decorate}] (0,1.25) -- (0,.25);
\draw[very thick, -> ] (0,0) .. controls ++(0,.35) and ++(0,.5) .. (.5,-1.25);
\draw[very thick, ->] (0,0) .. controls ++(0,.35) and ++(0,.5) ..  (-.5,-1.25);
\node at (-.5,-1.5) { $\scs i+1$};
\node at (0,1.5) {$\scs i $};
\node at (.5,-1.5) {$\scs \mathsf{v}$};
\node[draw, fill=white!20 ,rounded corners ] at (0,0) {$ -$};
\end{tikzpicture}}
\!\!  &\colon
\begin{array}{l}
V_{i+1} \otimes \mathsf{v} \rightarrow V_{i}, \\
  v_d \otimes v_0 \mapsto [2r'-d-1] q^{-2r'+d+1-i} v_{d-1}, \\
   v_d \otimes v_1 \mapsto -q v_d,
\end{array}
\\
  \hackcenter{ \begin{tikzpicture} [scale=.70, decoration={markings,
                        mark=at position 0.6 with {\arrow{>}};    }]
\draw[very thick, postaction={decorate}] (0,1.25) -- (0,.25);
\draw[very thick, -> ] (0,0) .. controls ++(0,.35) and ++(0,.5) .. (.5,-1.25);
\draw[very thick, ->] (0,0) .. controls ++(0,.35) and ++(0,.5) ..  (-.5,-1.25);
\node at (.5,-1.5) { $\scs i+1$};
\node at (0,1.5) {$\scs i $};
\node at (-.5,-1.5) {$\scs \mathsf{v}$};
\node[draw, fill=white!20 ,rounded corners ] at (0,0) {$ -$};
\end{tikzpicture}}
\!\!&\colon
\begin{array}{l}
\mathsf{v} \otimes V_{i+1} \rightarrow V_i, \\
v_0 \otimes v_d \mapsto  -q[d] v_{d-1}, \\
 v_1 \otimes v_d \mapsto q^{-d} v_d .
\end{array}
\end{align}

\begin{proposition} \cite[Equation 13]{GP1}
For $i \in \R\setminus\Z $, there is an isomorphism ${\mathsf v} \otimes V_{i} \cong V_{i+1} \oplus V_{i-1}$.
\end{proposition}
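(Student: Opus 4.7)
\emph{Proof plan.} Since $\dim(\mathsf{v}\otimes V_i)=2r=\dim(V_{i+1}\oplus V_{i-1})$, it will suffice to produce two linearly independent $E$-annihilated vectors in $\mathsf{v}\otimes V_i$ whose $H$-weights match the highest weights of $V_{i+1}$ and $V_{i-1}$, and then invoke the simplicity of the two summands.

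Write $\{w_0,w_1\}$ for the basis of $\mathsf{v}$ with $Hw_0=w_0$, $Hw_1=-w_1$, $Ew_1=w_0$, and keep the basis $\{v_0,\dots,v_{r-1}\}$ of $V_i$ from \eqref{E:BasisV}. Using $\Delta(E)=1\otimes E+E\otimes K$, the vector $w_0\otimes v_0$ is killed by $E$ (since $Ew_0=Ev_0=0$) and has $H$-weight $1+(i+r-1)=(i+1)+r-1$, which is exactly the highest weight of $V_{i+1}$. For the other summand I will take the ansatz $\xi=\alpha\, w_0\otimes v_1+\beta\, w_1\otimes v_0$, which has weight $(i-1)+r-1$; a direct computation using \eqref{E:BasisV} and the coproduct yields
\[
E\xi=\Bigl(\alpha\frac{\{1-i\}}{\{1\}}+\beta\, q^{i+r-1}\Bigr)\, w_0\otimes v_0.
\]
The hypothesis $i\in\R\setminus\Z$ forces $\{1-i\}\neq 0$, so there is a one-dimensional family of $(\alpha,\beta)$ producing a highest weight vector of weight $(i-1)+r-1$.

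To finish, since $i\pm 1\in\R\setminus\Z$ the modules $V_{i\pm 1}$ are simple, so the $\Ubar$-submodules of $\mathsf{v}\otimes V_i$ generated by the two highest weight vectors above are nonzero quotients of $V_{i+1}$ and $V_{i-1}$ respectively, hence isomorphic to them. These two submodules have distinct top weights, so their sum is direct, and a dimension count forces equality with $\mathsf{v}\otimes V_i$. An alternative, more conceptual route is to use the explicit intertwiners ``$+$'' and ``$-$'' defined above the proposition: the shift $+\colon V_{i-1}\to\mathsf{v}\otimes V_i$ gives an embedding and $-\colon\mathsf{v}\otimes V_i\to V_{i-1}$ a splitting (their composition being a nonzero scalar by the formulas in \eqref{+def1}--\eqref{+def2}), with an analogous pair handling $V_{i+1}$. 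The only delicate point is tracking the genericity assumption $i\in\R\setminus\Z$, which is what simultaneously ensures $\{1-i\}\neq 0$ and the simplicity of $V_{i\pm 1}$.
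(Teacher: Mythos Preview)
Your argument is correct. The paper itself gives no proof of this proposition---it is simply cited from \cite[Equation 13]{GP1}---so there is nothing to compare your approach against within the present paper.

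A few small remarks. Your claim that the cyclic submodule generated by a highest weight vector of weight $(i\pm 1)+r-1$ is a \emph{quotient} of $V_{i\pm 1}$ is correct because in $\Ubar$ one has $F^r=0$, so the universal highest weight module of a given weight is already $r$-dimensional and coincides with $V_{i\pm 1}$; you might state this explicitly. The simplicity of $V_{i\pm 1}$ for $i\pm 1\notin\Z$ that you invoke is standard but is not proved in this paper either (it too lives in the references), so strictly speaking you are trading one citation for another; if you want a fully self-contained argument, you can instead observe that the two cyclic submodules have no common $H$-weight at the top, each has dimension at most $r$, and their weight supports are disjoint enough to force a direct sum of total dimension $2r$. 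Your alternative route via the explicit intertwiners is also valid: the composite $(-)\circ(+)\colon V_{i-1}\to V_{i-1}$ is multiplication by $\{i\}$ (as one checks on $v_0$), which is nonzero precisely when $i\notin r\Z$, so in particular when $i\notin\Z$.
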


\begin{proposition} \cite[Section 3.2]{GP1}
There is an isomorphism $w_{\mathsf{v}} \colon \mathsf{v} \rightarrow \mathsf{v}^{\ast}$, where
\[
v_0 \mapsto -q v_1^{\ast} , \quad v_1 \mapsto v_0^{\ast} .
\]
\end{proposition}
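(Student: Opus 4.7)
The plan is to verify directly that the stated formula defines a $\Ubar$-module isomorphism. Since both $\mathsf{v}$ and $\mathsf{v}^{\ast}$ are two-dimensional, bijectivity will be immediate once I observe that $\{-q v_1^{\ast}, v_0^{\ast}\}$ is linearly independent in $\mathsf{v}^{\ast}$. The substantive content is therefore $\Ubar$-linearity.

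First I will compute the action of the generators $E, F, H, K$ on the dual basis $\{v_0^{\ast}, v_1^{\ast}\}$. Recall that the dual action is defined via the antipode by $(a \cdot f)(v) = f(S(a) v)$, and from \eqref{Edata}--\eqref{Kdata} we have $S(E) = -EK^{-1}$, $S(F) = -KF$, $S(K) = K^{-1}$, $S(H) = -H$. Combining this with the explicit action on $\mathsf{v}$ recalled in the excerpt, together with the fact that $K$ acts as $q^H$ on weight vectors (so $K v_0 = q v_0$ and $K v_1 = q^{-1} v_1$), a short calculation yields $E \cdot v_0^{\ast} = -q\, v_1^{\ast}$ and $F \cdot v_1^{\ast} = -q^{-1}\, v_0^{\ast}$, with $E \cdot v_1^{\ast} = 0$ and $F \cdot v_0^{\ast} = 0$; moreover $v_0^{\ast}$ has $H$-weight $-1$ and $v_1^{\ast}$ has $H$-weight $+1$.

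Next I will verify the intertwining property on each generator. Denoting the proposed map by $w$, the weight computation above already shows compatibility with $H$ (and hence with $K$): $v_0$ and $-q v_1^{\ast}$ both have weight $+1$, while $v_1$ and $v_0^{\ast}$ both have weight $-1$. For $E$ one checks $w(E v_1) = w(v_0) = -q v_1^{\ast}$ and $E \cdot w(v_1) = E \cdot v_0^{\ast} = -q v_1^{\ast}$, while $w(E v_0) = 0 = E \cdot w(v_0)$ since $E \cdot v_1^{\ast} = 0$. Similarly for $F$ one has $w(F v_0) = w(v_1) = v_0^{\ast}$, matching $F \cdot w(v_0) = F \cdot (-q v_1^{\ast}) = (-q)(-q^{-1}) v_0^{\ast} = v_0^{\ast}$, and $w(F v_1) = 0 = F \cdot w(v_1)$.

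The main obstacle in this plan is essentially nothing deep: the only potential pitfall is keeping track of the signs and $q$-powers arising from $S(E) = -EK^{-1}$ and $S(F) = -KF$ when paired with the non-trivial $K$-action on $\mathsf{v}$. Once the dual-module action has been tabulated, the remaining checks are purely mechanical; the particular scalar $-q$ in the formula is forced precisely by the coefficient appearing in $E \cdot v_0^{\ast} = -q v_1^{\ast}$, which is what makes the $E$-intertwining identity hold on the nose rather than up to a scalar.
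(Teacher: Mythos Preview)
Your verification is correct: the dual-module computations and the intertwining checks all go through as you wrote them. The paper itself does not supply a proof of this proposition, merely citing \cite[Section 3.2]{GP1}, so your direct verification is exactly the kind of argument one would expect here and there is nothing further to compare against.
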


\subsection{Isomorphisms $w_{i}$} \label{subsecbeta}
There are isomorphisms
$w_{i} \colon V_{i} \rightarrow V_{-i}^{\ast}$
determined by mapping a highest weight vector
$v_0 \in V_{i}$ by $w_{i}(v_0)=-q^{-1} y_{r-1}^{\ast}$,
where $y_{r-1}$ is a lowest weight vector of $V_{-i}$.
We depict $w_i$ in the graphical calculus as follows:
\begin{equation} \label{eq:w}
w_{i} \;\;  = \;\;
\hackcenter{ \begin{tikzpicture} [scale=.85]
\draw[very thick,  directed=.75] (0,0) to   (0,-.75);
\draw[very thick,  directed=.75] (0,0) to   (0,.75);
\node at (0,0) {$\bullet$};
\node at (-.3,-.65) {$\scs i $};
\node at (-.3,.65) {$\scs -i$};
\end{tikzpicture}} .
\end{equation}

 We want to compute $w_{i}^{\dagger}$.  In order to do this, consider the Hermitian pairing $( , )_{V_{i}}$ in \cite[Proposition 4.5]{GLPMS}  and the Hermitian pairing $(,)_{V_{-i}^{\ast}} $ in \cite[Proposition 4.16]{GLPMS}.  Using those results, we compute
 \begin{equation} \label{eq:A0}
 (y_b^{\ast}, w v_a)_{V_{-i}^{\ast}} = (w^{\dagger} y_b^{\ast}, v_a)_{V_{i}} .
 \end{equation}
 So we now need to recall the form on $V_{-i}^{\ast}$.  By \cite[Proposition 4.16]{GLPMS},
 \begin{equation} \label{eq:A1}
 (y_b^{\ast},y_b^{\ast})= \sum_k y_b^{\ast}(e_k) y_b^{\ast}(e_k')
 \end{equation}
 where $\{e_k \}$ and $\{e_k'\} $ are dual bases of $V_{-i} $.
 If $e_k=y_k$, we calculate $e_k' $ as follows.  Note that
 \begin{align} \label{defofck}
 c_k &:= (y_k, y_k)=(F^k y_{0},y_{k})=(y_0, E^k y_k) \\ \nonumber
 &=
 \prod_{s=1}^k \frac{\{ s \} \{s+i \}}{\{1\}^2}
 .
 \end{align}
 Then set $e_{k}' := \frac{1}{\overline{c}_{k}} e_{k}$.
 Now a specialization of \eqref{eq:A0} is:
  \begin{equation} \label{betadef1}
 (y_{r-1}^{\ast}, w v_0)_{V_{-i}^{\ast}} = (w^{\dagger} y_{r-1}^{\ast}, v_0)_{V_{i}} .
 \end{equation}

 Now we define $\beta(i)$ through the equation
\begin{equation} \label{defofbeta}
w^{\dagger} = \beta(i) w^{-1} .
\end{equation}
Then $ w^{\dagger} y_{r-1}^{\ast} = -q \beta(i) v_0$.
 Then the righthand side of \eqref{betadef1} is $-q^{-1} \overline{\beta(i)} $.
The lefthand side of \eqref{betadef1} is
\begin{align} \nn
& -q^{-1}(y_{r-1}^{\ast}, y_{r-1}^{\ast})
=
-q^{-1}\sum_k y_{r-1}^{\ast}(e_k) y_{r-1}^{\ast}(e_k') \\
\nn &= \quad
-q^{-1}y_{r-1}^{\ast}(y_{r-1}) y_{r-1}^{\ast}(\frac{1}{\overline{c}_{r-1}}  y_{r-1}) =
-q^{-1}\frac{1}{\overline{c}_{r-1}} .
\end{align}
This implies
\[ \beta(i) =  \frac{1}{c_{r-1}} .
\]

\begin{lemma} \label{factorialid}
There is an equality
$\{r-1\}!={\sqrt{-1}}^{r-1} r$.
\end{lemma}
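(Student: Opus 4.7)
The plan is to expand the factorial as a product of the quantum integers $\{k\} = q^k - q^{-k}$ and reduce everything to the elementary cyclotomic identity $\prod_{k=1}^{r-1}(1-\zeta^k)=r$, which holds for any primitive $r$th root of unity $\zeta$. The key arithmetic fact underlying the whole computation is that $q^r = e^{\pi\sqrt{-1}} = -1$ (since $q = e^{\pi\sqrt{-1}/r}$) and, crucially, that $r$ is odd so that $q^2$ is itself a primitive $r$th root of unity.

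First I would write
\[
\{r-1\}! = \prod_{k=1}^{r-1}(q^k-q^{-k}) = \Bigl(\prod_{k=1}^{r-1} q^{-k}\Bigr)\prod_{k=1}^{r-1}(q^{2k}-1).
\]
The first factor simplifies by summing exponents: $\prod_{k=1}^{r-1} q^{-k} = q^{-r(r-1)/2} = q^{-r r'}$. Using $q^r = -1$ this equals $(-1)^{r'}$.

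Next I would set $\zeta := q^2$. Since $r$ is odd, $\zeta = e^{2\pi\sqrt{-1}/r}$ is a primitive $r$th root of unity, so
\[
\prod_{k=1}^{r-1}(x-\zeta^k) = \frac{x^r-1}{x-1} = 1+x+\cdots+x^{r-1}.
\]
Evaluating at $x=1$ yields $\prod_{k=1}^{r-1}(1-\zeta^k) = r$, and since $r-1 = 2r'$ is even,
\[
\prod_{k=1}^{r-1}(\zeta^k-1) = (-1)^{r-1}\prod_{k=1}^{r-1}(1-\zeta^k) = r.
\]

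Combining, $\{r-1\}! = (-1)^{r'} \cdot r$. Finally, $\sqrt{-1}^{\,r-1} = \sqrt{-1}^{\,2r'} = (-1)^{r'}$, so $\{r-1\}! = \sqrt{-1}^{\,r-1} r$, as claimed. There is no real obstacle here; the only care needed is tracking the signs introduced by $q^r = -1$ and by the parity of $r-1$, both of which cooperate because $r$ is odd.
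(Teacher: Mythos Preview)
Your proof is correct. The paper proves the same identity by a slightly different route: it writes $q^j - q^{-j} = 2\sqrt{-1}\,\sin(\pi j/r)$ and then invokes the classical product formula $\prod_{j=1}^{r-1}\sin(\pi j/r) = r/2^{r-1}$, so that $\{r-1\}! = (2\sqrt{-1})^{r-1}\cdot r/2^{r-1} = \sqrt{-1}^{\,r-1} r$. Your argument instead factors out $q^{-k}$ and appeals directly to the cyclotomic identity $\prod_{k=1}^{r-1}(1-\zeta^k) = r$ for $\zeta = q^2$ a primitive $r$th root of unity. The two are closely related (the sine product formula is typically \emph{derived} from the cyclotomic identity), so your version is arguably the more self-contained of the two; the paper's version has the virtue of making the real-valued intermediate quantities visible. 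Either way the content is the same elementary fact about roots of unity, and your bookkeeping of the signs via $q^r = -1$ and $r-1 = 2r'$ is clean and correct.
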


\begin{proof}
We compute as follows:
\begin{align*}
& \{r-1\}! = \prod_{j=1}^{r-1} (q^j-q^{-j})= \prod_{j=1}^{r-1} (2 \sqrt{-1}) \sin(\frac{\pi j}{r})
\\
& \quad =
(2 \sqrt{-1})^{r-1} \frac{r}{2^{r-1}} =
\sqrt{-1}^{r-1} r .
\end{align*}
\end{proof}

\begin{proposition} \label{prop:beta-formula}
One has
$ \beta(i) =  \frac{ \{1 \}^{2(r-1)} \md({i})}{r \md_0}$.
\end{proposition}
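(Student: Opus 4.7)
The plan is to expand the already-established formula $\beta(i) = 1/c_{r-1}$ from \eqref{defofbeta} and \eqref{defofck}, which gives
\begin{equation*}
\beta(i) \;=\; \frac{\{1\}^{2(r-1)}}{\prod_{s=1}^{r-1}\{s\}\{s+i\}}.
\end{equation*}
Comparing with the desired formula and using $\md(i) = \md_0\{i\}/\{ri\}$, the problem reduces to establishing the identity
\begin{equation*}
\prod_{s=1}^{r-1}\{s\}\{s+i\} \;=\; \frac{r\,\{ri\}}{\{i\}}.
\end{equation*}

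First I would dispose of the $\{s\}$-factor using Lemma \ref{factorialid}, which gives $\prod_{s=1}^{r-1}\{s\} = \sqrt{-1}^{\,r-1}\, r = (-1)^{r'} r$ since $r-1 = 2r'$. The main step is then to evaluate $\prod_{s=1}^{r-1}\{s+i\}$. For this I would rewrite the full product $\prod_{s=0}^{r-1}\{i+s\}$ by setting $y = q^{2i}$ and factoring out $q^{-i-s}$ from each term:
\begin{equation*}
\prod_{s=0}^{r-1}\{i+s\} \;=\; q^{-ri - r(r-1)/2}\prod_{s=0}^{r-1}\bigl(y\, q^{2s} - 1\bigr).
\end{equation*}
Because $q$ is a primitive $2r$-th root of unity, $q^2$ is a primitive $r$-th root of unity, so $\{q^{2s}\}_{s=0}^{r-1}$ runs through all $r$-th roots of unity. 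Thus $\prod_{s=0}^{r-1}(yq^{2s}-1) = y^r - 1 = q^{2ri} - 1$, and the prefactor simplifies to $(-1)^{r'}$ using $q^r = -1$ (the $q^{r(r-1)}$ contribution collapses since $r-1$ is even). This yields $\prod_{s=0}^{r-1}\{i+s\} = (-1)^{r'}\{ri\}$, hence
\begin{equation*}
\prod_{s=1}^{r-1}\{s+i\} \;=\; \frac{(-1)^{r'}\,\{ri\}}{\{i\}}.
\end{equation*}

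Finally I would combine the two factors: the two copies of $(-1)^{r'}$ cancel, leaving $\prod_{s=1}^{r-1}\{s\}\{s+i\} = r\{ri\}/\{i\}$, which is exactly what is needed. Substituting back into the expression for $\beta(i)$ and using the definition of $\md(i)$ yields the claim. The only delicate point is the sign bookkeeping in the root-of-unity product, since one must carefully track the powers $q^{-ir}$, $q^{-r(r-1)/2}$, and $q^{r(r-1)}$; the parity hypothesis that $r = 2r'+1$ is odd is used in each of these simplifications, so I would double-check each reduction against $q = e^{\pi\sqrt{-1}/r}$ before combining.
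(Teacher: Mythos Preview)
Your proof is correct. Both you and the paper start from $\beta(i)=1/c_{r-1}$ and invoke Lemma~\ref{factorialid} for $\prod_{s=1}^{r-1}\{s\}$, but you diverge in how $\prod_{s=1}^{r-1}\{s+i\}$ is handled. The paper packages $\frac{1}{\{r-1\}!}\prod_{s=1}^{r-1}\{s+i\}$ as the quantum binomial ${r-1+i \brack i}$ and then cites an identity from \cite{CGP1} relating this binomial to $\md(i)^{-1}$, after which a second application of Lemma~\ref{factorialid} (to $(\{r-1\}!)^2$) finishes. You instead evaluate $\prod_{s=0}^{r-1}\{i+s\}$ directly by writing each factor as $q^{-i-s}(q^{2i}q^{2s}-1)$ and using that $q^2$ is a primitive $r$-th root of unity, so that the product collapses to $(-1)^{r'}\{ri\}$; dividing by $\{i\}$ and combining with $(-1)^{r'}r$ gives $r\{ri\}/\{i\}=r\md_0/\md(i)$. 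Your route is more self-contained, since it avoids the external citation and in effect re-derives the needed special case of that quantum binomial identity from scratch; the paper's route is shorter if one is willing to import the identity. Your sign bookkeeping (the $q^{r(r-1)}=1$ collapse and $q^{-r(r-1)/2}=(-1)^{r'}$) is correct given $r=2r'+1$.
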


\begin{proof}
Recall from \eqref{defofck} that
\begin{align*}
c_{r-1} &= \prod_{s=1}^{r-1} \frac{\{ s \} \{s+i \}}{\{1\}^2}
=
\frac{\{r-1\}!}{\{1 \}^{2(r-1)}} \prod_{s=1}^{r-1} \{s+i \}
\\ &
=
\frac{(\{r-1\}!)^2}{\{1 \}^{2(r-1)} \{r-1\}! } \prod_{s=1}^{r-1}  \{s+i \}
\\ &=
\frac{(\{r-1\}!)^2}{\{1 \}^{2(r-1)}} {r-1+i \brack i}
.
\end{align*}
By \cite[Equation 6]{CGP1}, this is equal to
\[
\frac{(\{r-1\}!)^2}{\{1 \}^{2(r-1)}}
\left(\frac{\md(i)}{(-1)^{r-1}} \right)^{-1} .
 \]
 By Lemma \ref{factorialid}, this is equal to
 \[
 \frac{(-1)^{r-1} r^2}{\{1 \}^{2(r-1)}}
 \left(\frac{\md({i})}{(-1)^{r-1}} \right)^{-1}
 =
 \frac{r^2}{\{1\}^{2(r-1)} \md({i})} .
 \]
Recall that we take $\md_0 = r$.  Thus we get
 \[
 \beta(i)= \frac{(-1)^{r-1} \{1 \}^{2(r-1)} \md({i})}{r \md_0}
 = \frac{ \{1 \}^{2(r-1)} \md({i})}{r \md_0}
 .
 \]
\end{proof}

\subsection{Coefficients $\gamma_{}$} \label{gammasec}
Consider the invariant space
\[
H(i,j,k)=\Hom(1, V_i \otimes V_j \otimes V_k).
\]
This space is $1$-dimensional and is spanned by a specific basis element $h_{ijk}$ that we depict using the graphical notation
\begin{equation}
h_{ijk}\;\; := \;\;
\hackcenter{ \begin{tikzpicture} [scale=.75, decoration={markings,
                        mark=at position 0.5 with {\arrow{>}};    }]
\draw[very thick, postaction={decorate}] (-1,1.25) .. controls ++(0,-.5) and ++(-.5,.25) ..   (0,.0);
\draw[very thick, postaction={decorate}] (1,1.25) .. controls ++(0,-.5) and ++(.5,.25) ..   (0,.0);
\draw[very thick,postaction={decorate}] (0,1.25) -- (0,0);
\node at (0,0) {$\bullet$};
\node at (-1,1.5) {$\scs i $};
\node at (0,1.5) {$\scs j $};
\node at (1,1.5) {$\scs k$};
\end{tikzpicture}}
\end{equation}
defined explicitly in \eqref{eq:h-ijk}.
Then using the pivotal structure, a choice of $h_{ijk}$ determines uniquely basis elements of the 1-dimensional spaces
\begin{align*}
&\Hom(V_{i}^{\ast}, V_j \otimes V_k)
\cong \Hom(V_j ^{\ast} \otimes V_{i}^{\ast},   V_k)
\\
& \quad \cong \Hom(V_k^{\ast} \otimes V_j ^{\ast} \otimes V_{i}^{\ast},  1 ).
\end{align*}
Combining with the isomorphisms \eqref{eq:w}, a choice of $h_{ijk}$ determines unique basis elements in
\begin{align*}
\Hom(V_{-i^{}}, V_j \otimes V_k)
&\cong \Hom(V_{-j^{}} \otimes V_{-i^{}},   V_k) \\
& \cong \Hom(V_{-k} \otimes V_{-j^{}} \otimes V_{-i^{}},  1 ) .
\end{align*}
Then we have $h^{ijk}$ as a distinguished basis element of the 1-dimensional space $
\Hom(V_i \otimes V_j \otimes V_k, 1) \cong
\Hom(1, V_k^{\ast} \otimes V_j^{\ast} \otimes V_i^{\ast} ) \cong
\Hom(1, V_{-k} \otimes V_{-j} \otimes V_{-i} )$.
%
\begin{equation}
h^{ijk}  \; = \;
\hackcenter{ \begin{tikzpicture} [scale=.7, decoration={markings,
                        mark=at position 0.5 with {\arrow{>}};    }]
\draw[very thick, postaction={decorate}] (0,.0).. controls ++(-.5,-.25)and ++(0,.5) .. (-1,-1.25);
\draw[very thick, postaction={decorate}] (0,.0).. controls ++(.5,-.25)and ++(0,.5) ..  (1,-1.25);
\draw[very thick,postaction={decorate}] (0,0)-- (0,-1.25);
\node at (0,0) {$\bullet$};
\node at (-1,-1.5) {$\scs i $};
\node at (0,-1.5) {$\scs j $};
\node at (1,-1.5) {$\scs k$};
\end{tikzpicture}}
\; := \;
\hackcenter{ \begin{tikzpicture} [scale=.75]
\draw[very thick,  directed=.45] (-1,1.0) .. controls ++(0,-.5) and ++(-.5,.25) ..   (0,.0);
\draw[very thick, directed=.45] (1,1.0) .. controls ++(0,-.5) and ++(.5,.25) ..   (0,.0);
\draw[very thick, directed=.55] (0,1.0) -- (0,0);
\draw[very thick, directed=.65] (-1,1.0) -- (-1,1.75);
\draw[very thick, directed=.65] (0,1.0) -- (0,1.75);
\draw[very thick, directed=.65] (1,1.0) -- (1,1.75);
\draw[very thick, ->] (-1,1.75).. controls ++(0,.5)and ++(0,.5) .. (-2,1.75) -- (-2,0);
\draw[very thick, ->] (0,1.75).. controls ++(0,1)and ++(0,1) .. (-3,1.75)-- (-3,0);
\draw[very thick, ->] (1,1.75).. controls ++(0,1.5)and ++(0,1.5) .. (-4,1.75) -- (-4,0);
\node at (0,0) {$\bullet$};
\node at (0,1.0) {$\bullet$};
\node at (1,1.0) {$\bullet$};
\node at (-1,1.0) {$\bullet$};
\node at (-1.25,.65) {$\scs -k $};
\node at (-.42,.65) {$\scs -j $};
\node at (1.25,.65) {$\scs -i$};
\node at (-2.25,1.25) {$\scs k $};
\node at (-3.3,1.25) {$\scs j$};
\node at (-4.25,1.25) {$\scs i$};
\end{tikzpicture}}
\end{equation}
Since $(h_{ijk})^{\dagger} \in \Hom(V_i \otimes V_j \otimes V_k, 1)$  we define nonzero scalars $\gamma(i,j,k)$ by
\begin{equation}
  h_{ijk}^{\dagger} = \gamma(i,j,k)h^{ijk}.
\end{equation}

\begin{proposition}
If $i+j-k=-2r'$, then there are maps of representations
\begin{equation} \label{defofYbasecase}
\hackcenter{ \begin{tikzpicture} [scale=.7, decoration={markings,
                        mark=at position 0.45 with {\arrow{>}};    }]
\draw[very thick, ->]   (0,-.25) --(0,-1.25) ;
\draw[very thick, postaction={decorate} ] (.5,1.25) .. controls ++(0,-.5) and ++(.25,.35) .. (.0,0);
\draw[very thick, postaction={decorate}]  (-.5,1.25).. controls ++(0,-.5) and ++(-.25,.35)   .. (0,0) ;
\node at (-.5,1.5) { $\scs i$};
\node at (0,-1.5) {$\scs k $};
\node at (.5,1.5) {$\scs j$};
\node[draw, fill=white!20 ,rounded corners ] at (0,0) {$ -2r'$};
\end{tikzpicture}}
  \colon V_k \rightarrow V_i \otimes V_j
 \qquad \quad
  \hackcenter{ \begin{tikzpicture} [scale=.7, decoration={markings,
                        mark=at position 0.6 with {\arrow{>}};    }]
\draw[very thick, postaction={decorate}] (0,1.25) -- (0,.25);
\draw[very thick, -> ] (0,0) .. controls ++(0,.35) and ++(0,.5) .. (.5,-1.25);
\draw[very thick, ->] (0,0) .. controls ++(0,.35) and ++(0,.5) ..  (-.5,-1.25);
\node at (-.5,-1.5) { $\scs i$};
\node at (0,1.5) {$\scs k $};
\node at (.5,-1.5) {$\scs j$};
\node[draw, fill=white!20 ,rounded corners ] at (0,0) {$ 2r'$};
\end{tikzpicture}}
 \colon V_i \otimes V_j \rightarrow V_k
\end{equation}
determined by $v_0 \mapsto v_0 \otimes v_0$ and  $v_0 \otimes v_0 \mapsto \frac{1}{\md(k)} v_0$, respectively.
\end{proposition}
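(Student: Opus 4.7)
The plan is to verify each of the two maps separately; both amount to essentially standard highest/lowest weight constructions inside the tensor product, made explicit.

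For the first map $V_k \to V_i\otimes V_j$, observe first that when $i+j-k = -2r'$, the vector $v_0 \otimes v_0 \in V_i \otimes V_j$ has $H$-weight $(i+r-1) + (j+r-1) = k + 2r'$, which is precisely the highest weight of $V_k$. Since $E v_0 = 0$ in both factors and $\Delta(E) = 1\otimes E + E\otimes K$, we get $E(v_0\otimes v_0) = 0$, so $v_0\otimes v_0$ is a highest weight vector. I would then define $\psi\maps V_k \to V_i\otimes V_j$ on the basis $\{v_d\}_{d=0}^{r-1}$ of $V_k$ by $\psi(v_d) := \Delta(F)^d(v_0\otimes v_0)$. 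Compatibility with $F$, $K$, and $H$ is then built in, and compatibility with the truncation $F^r=0$ of $\bar U_q^H\mathfrak{sl}(2)$ forces $\psi(v_r) = 0$ automatically.

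The one nontrivial check is $E$-compatibility, namely
\begin{equation*}
  E\,\Delta(F)^d(v_0\otimes v_0) \;=\; \frac{\{d\}\{d-k\}}{\{1\}^2}\,\Delta(F)^{d-1}(v_0\otimes v_0).
\end{equation*}
I would use the standard commutator identity $[E,F^n] = [n]F^{n-1}\,(q^{-n+1}K - q^{n-1}K^{-1})/(q-q^{-1})$ (valid in $\bar U_q^H\mathfrak{sl}(2)$ and inherited in tensor products via the coproduct) applied to $v_0\otimes v_0$ on which $E$ acts by $0$ and $K$ by $q^{k+2r'}$. This produces the factor $[d]\{k+2r'-d+1\}/\{1\} = \{d\}\{k+r-d\}/\{1\}^2$, and the root-of-unity identity $q^r = -1$ gives $\{k+r-d\} = \{d-k\}$, matching the desired formula for $E\cdot v_d$ in \eqref{E:BasisV}.

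For the second map $V_i\otimes V_j \to V_k$, I would invoke Proposition~\ref{decompprop}: under the genericity hypothesis on $i,j$, the decomposition
\begin{equation*}
  V_i\otimes V_j \;\cong\; \bigoplus_{\gamma=-r'}^{r'} V_{i+j+2\gamma}
\end{equation*}
is multiplicity free with pairwise non-isomorphic simple summands, so $\Hom(V_i\otimes V_j,V_k)$ is one-dimensional. The vector $v_0\otimes v_0$ is a highest weight vector of weight $k+2r'$ and is the unique (up to scalar) such vector in $V_i\otimes V_j$; it therefore lies in the $V_k$-isotypic summand. Hence there is a unique module map sending $v_0\otimes v_0 \mapsto \frac{1}{\md(k)}v_0$, obtained by composing the projection onto the $V_k$-summand with an isomorphism to $V_k$ scaled so that this normalization holds. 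The main subtle point is the simple-weight combinatorics needed for the $E$-compatibility check above; the normalization $1/\md(k)$ is simply a scaling choice and is consistent because $\md(k) \ne 0$ by the formula $\md(\tilde i) = \md_0\,(q^i - q^{-i})/(q^{ri} - q^{-ri})$ for generic weights.
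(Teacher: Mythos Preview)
Your argument is correct. For the first map you spell out in detail what the paper dismisses as ``trivial to see'': both of you are exhibiting the highest weight vector $v_0\otimes v_0$ and extending by powers of $\Delta(F)$, and your $E$-compatibility computation is sound (including the use of $q^r=-1$ to convert $\{k+r-d\}$ into $\{d-k\}$).

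For the second map your route differs from the paper's. You invoke the multiplicity-free decomposition of Proposition~\ref{decompprop} to see that $\Hom(V_i\otimes V_j,V_k)$ is one-dimensional and then simply scale a generator so that $v_0\otimes v_0\mapsto \frac{1}{\md(k)}v_0$. The paper instead \emph{defines} the second map from the first via the pivotal structure of the category, and then shows that this pivotal partner satisfies the stated normalization by citing the bubble identity from \cite[Figure~5]{GP1}, namely that the composite ``split then merge'' evaluates to $1/\md(k)$. Your approach is cleaner representation theory and suffices for the bare existence statement; the paper's approach has the advantage that the second map is not an arbitrary scaling choice but is the specific morphism produced by the graphical calculus, so that the constant $1/\md(k)$ is forced rather than chosen --- a point that matters for the compatibility with later diagrammatic identities such as \eqref{absorbinganH}.
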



\begin{proof}
It is trivial to see that the first map defines a morphism of modules.
The second morphism is defined using the first one and the pivotal structure of the category.  Both spaces of morphisms are $1$-dimensional.  The scalar for the second morphism is determined by
the identity (\cite[Figure 5]{GP1})
\begin{equation*}
\hackcenter{ \begin{tikzpicture} [scale=.7, decoration={markings,
                        mark=at position 0.6 with {\arrow{>}};    }]
\draw[very thick, postaction={decorate}] (0,1.25) -- (0,.25);
\draw[very thick, postaction={decorate}](0,0) .. controls ++(-.65,-.35) and ++(-.65,.35) .. (0,-2);
\draw[very thick,  postaction={decorate}](0,0) .. controls ++(.65,-.35) and ++(.65,.35) .. (0,-2);
\draw[very thick, ->] (0,-2) -- (0,-3.25);
\node at (-.95,-.8) { $\scs i$};
\node at (0,1.5) {$\scs k $};
\node at (0,-3.5) {$\scs k$};
\node at (.95,-.8) {$\scs j$};
\node[draw, fill=white!20 ,rounded corners ] at (0,0) {$ 2r'$};
\node[draw, fill=white!20 ,rounded corners ] at (0,-2) {$ -2r'$};
\end{tikzpicture}}
=
\frac{1}{\md(k)} .
\end{equation*}
\end{proof}

\begin{proposition} \label{Ydagger}
Assume $i+j-k=-2r'$.  Then
there are equalities of maps
\begin{align} \label{Yl}
\left(  \hackcenter{ \begin{tikzpicture} [scale=.7, decoration={markings,
                        mark=at position 0.6 with {\arrow{>}};    }]
\draw[very thick, postaction={decorate}] (0,1.25) -- (0,.25);
\draw[very thick, -> ] (0,0) .. controls ++(0,.35) and ++(0,.5) .. (.5,-1.25);
\draw[very thick, ->] (0,0) .. controls ++(0,.35) and ++(0,.5) ..  (-.5,-1.25);
\node at (-.5,-1.5) { $\scs i$};
\node at (0,1.5) {$\scs k $};
\node at (.5,-1.5) {$\scs j$};
\node[draw, fill=white!20 ,rounded corners ] at (0,0) {$ 2r'$};
\end{tikzpicture}}
\right)^{\dagger}
&\;\;  = \;\; \frac{1}{\md(k)}\;
\hackcenter{ \begin{tikzpicture} [scale=.7, decoration={markings,
                        mark=at position 0.45 with {\arrow{>}};    }]
\draw[very thick, ->]   (0,-.25) --(0,-1.25) ;
\draw[very thick, postaction={decorate} ] (.5,1.25) .. controls ++(0,-.5) and ++(.25,.35) .. (.0,0);
\draw[very thick, postaction={decorate}]  (-.5,1.25).. controls ++(0,-.5) and ++(-.25,.35)   .. (0,0) ;
\node at (-.5,1.5) { $\scs i$};
\node at (0,-1.5) {$\scs k $};
\node at (.5,1.5) {$\scs j$};
\node[draw, fill=white!20 ,rounded corners ] at (0,0) {$ -2r'$};
\end{tikzpicture}}
\; \colon   V_{k} \rightarrow V_{i} \otimes V_{j} ,
\\
\label{Yu}
\left(
\hackcenter{ \begin{tikzpicture} [scale=.7, decoration={markings,
                        mark=at position 0.45 with {\arrow{>}};    }]
\draw[very thick, ->]   (0,-.25) --(0,-1.25) ;
\draw[very thick, postaction={decorate} ] (.5,1.25) .. controls ++(0,-.5) and ++(.25,.35) .. (.0,0);
\draw[very thick, postaction={decorate}]  (-.5,1.25).. controls ++(0,-.5) and ++(-.25,.35)   .. (0,0) ;
\node at (-.5,1.5) { $\scs i$};
\node at (0,-1.5) {$\scs k $};
\node at (.5,1.5) {$\scs j$};
\node[draw, fill=white!20 ,rounded corners ] at (0,0) {$ -2r'$};
\end{tikzpicture}}
\right)^{\dagger}
 &\;\; = \;\;
\hackcenter{ \begin{tikzpicture} [scale=.7, decoration={markings,
                        mark=at position 0.6 with {\arrow{>}};    }]
\draw[very thick, postaction={decorate}] (0,1.25) -- (0,.25);
\draw[very thick, -> ] (0,0) .. controls ++(0,.35) and ++(0,.5) .. (.5,-1.25);
\draw[very thick, ->] (0,0) .. controls ++(0,.35) and ++(0,.5) ..  (-.5,-1.25);
\node at (-.5,-1.5) { $\scs i$};
\node at (0,1.5) {$\scs k $};
\node at (.5,-1.5) {$\scs j$};
\node[draw, fill=white!20 ,rounded corners ] at (0,0) {$ 2r'$};
\end{tikzpicture}}
\;\colon   V_{i} \otimes V_{j}  \rightarrow  V_k.
\end{align}
\end{proposition}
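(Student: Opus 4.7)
Denote the map $V_i \otimes V_j \to V_k$ in \eqref{defofYbasecase} by $a$ and its partner $V_k \to V_i \otimes V_j$ by $b$; the proposition asserts $a^{\dagger} = \md(k)^{-1} b$ and $b^{\dagger} = a$. Under the hypothesis $i + j - k = -2r'$, Proposition~\ref{decompprop} (applied to the relevant summand) implies that both $\Hom_{\Ubar}(V_i \otimes V_j, V_k)$ and $\Hom_{\Ubar}(V_k, V_i \otimes V_j)$ are one-dimensional. Since Hermitian adjunction in the Hermitian ribbon category $\mathcal{D}^{\dagger}$ of \cite{GLPMS} sends a morphism $V \to W$ to a morphism $W \to V$, the adjoints $a^{\dagger}$ and $b^{\dagger}$ must be scalar multiples of $b$ and $a$ respectively. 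The task thus reduces to computing two explicit scalars.

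To identify the scalar $\lambda$ with $a^{\dagger} = \lambda b$, I would evaluate the defining relation of the adjoint on the highest-weight vectors, namely
\begin{equation*}
\langle a(v_0 \otimes v_0),\, v_0 \rangle_{V_k} = \langle v_0 \otimes v_0,\, a^{\dagger}(v_0) \rangle_{V_i \otimes V_j} .
\end{equation*}
By the explicit formula for $a$ in \eqref{defofYbasecase}, the left-hand side equals $\md(k)^{-1}(v_0,v_0)_{V_k}$, while the right-hand side equals $\overline{\lambda}\,(v_0 \otimes v_0,\, v_0 \otimes v_0)_{V_i \otimes V_j}$. Using the Hermitian pairings on the simple modules $V_{\alpha}$ and on their tensor products from \cite[Propositions 4.5, 4.16]{GLPMS}, both inner products can be evaluated in closed form and $\lambda$ read off. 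An entirely analogous test, using the defining pairing identity for $b$ together with $b(v_0) = v_0 \otimes v_0$, produces the second scalar. Alternatively, the two equalities can be linked by composing with $a$ or $b$ and using the relation $a \circ b = \md(k)^{-1}\id_{V_k}$ already recorded in the proof of \eqref{defofYbasecase}, which serves as a useful consistency check.

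The main technical obstacle will be the careful bookkeeping of the Hermitian form on the tensor product $V_i \otimes V_j$ within the ribbon structure of $\mathcal{D}^{\dagger}$: the factor $\md(k)$ that distinguishes the two scalars is encoded in precisely how the pairing on tensor products is constructed from the ribbon and pivotal data, analogous to the normalization work carried out in Section~\ref{subsecbeta} to pin down $\beta(i)$. Once the pairings on $V_k$ and on $V_i \otimes V_j$ are correctly normalized at the highest-weight vectors, both asserted equalities drop out of the scalar comparisons above.
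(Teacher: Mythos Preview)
Your proposal is correct and follows essentially the same approach as the paper: use one-dimensionality of the hom spaces to reduce to a scalar, then test the adjoint identity on highest-weight vectors using the Hermitian forms from \cite{GLPMS}, with the tensor-product form unwound via the ribbon data. The paper derives \eqref{Yu} from \eqref{Yl} via \cite[Figure 5]{GP1} and \cite[Lemma 4.19]{GLPMS} rather than by a parallel direct computation, but you anticipate this as your alternative route through $a\circ b=\md(k)^{-1}\id_{V_k}$.
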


\begin{proof}
Denote the maps in \eqref{defofYbasecase} by $Y_k^{ij}$ and $Y_{ij}^k$ respectively.  In order to compute $(Y^k_{ij})^{\dagger}$, we use the pairing from \cite[Theorem 4.14]{GLPMS}.
Let us assume that $(Y^k_{ij})^{\dagger}=M Y^{ij}_k $, where $M$ is a complex scalar.     Then we have the chain of equalities
\begin{align*}
&(v_0, Y_{ij}^k(v_0 \otimes v_0))_{V_k}
= ((Y_{ij}^k)^{\dagger} v_0, v_0 \otimes v_0)_{V_i \otimes V_j} \\
&=(M Y_k^{ij} v_0, v_0 \otimes v_0)_{V_i \otimes V_j} \\
&=\overline{M}(v_0 \otimes v_0, v_0 \otimes v_0)_{V_i \otimes V_j} \\
&=\overline{M}(v_0 \otimes v_0, \tau X(v_0 \otimes v_0))_p \\
&= \overline{M}(v_0 \otimes v_0, \tau \sqrt{\theta_{V_j \otimes V_i}}^{-1} \!c_{V_i,V_j} \sqrt{\theta_{V_i}} \otimes \sqrt{\theta_{V_j}} (v_0 \otimes v_0))_p \\
& =\overline{M}
\end{align*}
where $(,)_{V_i \otimes V_j} $ denotes the form defined in \cite[Theorem 4.14]{GLPMS} and
$ (,)_p$ denotes the product of the pairings on each tensor factor (see \cite[Equation 31]{GLPMS}).
The fourth equality follows from \cite[Theorem 4.14]{GLPMS}.
The fifth equality is a consequence of \cite[Equation 29]{GLPMS}.
The sixth equality is a consequence of \cite[Equation 28]{GLPMS}.

On the other hand,
\begin{equation*}
(v_0, Y_{ij}^k(v_0 \otimes v_0)) = \frac{1}{\md(k)} .
\end{equation*}
Thus
\begin{equation*}
M= \frac{1}{\md(k)}
\end{equation*}
establishing \eqref{Yl}.
Equation \eqref{Yu} follows from \eqref{Yl}, \cite[Figure 5]{GP1}, and \cite[Lemma 4.19]{GLPMS}.
\end{proof}

There are maps (see \cite[Lemma 12]{GP1})
\begin{equation} \label{defofHlower}
H_{jk} \colon
\begin{array}{l}
V_{j} \otimes V_{k} \rightarrow V_{j+1} \otimes V_{k+1}, \\
 q^{k+c-d-1} \{j-c\} v_c \otimes v_{d+1} + \frac{1}{q} \{k -d \} v_{c+1} \otimes v_d,
\end{array}
\end{equation}
defined as a composite of maps introduced earlier, and graphically denoted by
\begin{equation}
\hackcenter{ \begin{tikzpicture} [scale=.7, decoration={markings,
                        mark=at position 0.45 with {\arrow{>}};    }]
\draw[very thick, ->]   (0,-.25) --(0,-1.25) ;
\draw[very thick, ->]   (1.5,-.25) --(1.5,-1.25) ;
\draw[very thick] (.2,.25) .. controls ++(0,.75) and ++(0,.75) .. (1.3,.25);
\draw[very thick, postaction={decorate}]  (-.5,1.25).. controls ++(0,-.5) and ++(-.25,.25)   .. (0,0) ;
\draw[very thick, postaction={decorate}]  (2,1.25).. controls ++(0,-.5) and ++(.25,.25)   .. (1.5,0) ;
\node at (-.5,1.5) { $\scs j+1$};
\node at (0,-1.5) {$\scs j $};
\node at (1.5,-1.5) {$\scs k $};
\node at (2,1.5) {$\scs k+1$};
\node[draw, fill=white!20 ,rounded corners ] at (0,0) {$ +$};
\node[draw, fill=white!20 ,rounded corners ] at (1.5,0) {$ +$};
\node at (.75,.8) {$\bullet$};
\end{tikzpicture}} .
\end{equation}
\begin{remark}
  A priori, there is an ambiguity in the diagram above since it is not clear which object the morphism $w$ (denoted by a dot) is acting on.  However, by the definition of
the basic data in
  a relative $\Gr$-spherical category, the dot may move anywhere along a cup or cap morphism (see \cite[Lemma 7]{GP1}).
\end{remark}

There are maps defined in \cite[Section 3.6]{GP1} (called $\overline{X}$ there)
\begin{equation} \label{defofHupper}
{H}^{jk} =
\hackcenter{ \begin{tikzpicture} [scale=.7, decoration={markings,
                        mark=at position 0.6 with {\arrow{>}};    }]
\draw[very thick, postaction={decorate}] (0,1.25) -- (0,.25);
\draw[very thick, postaction={decorate}] (1.5,1.25) -- (1.5,.25);
\draw[very thick] (.2,-.25) .. controls ++(0,-.75) and ++(0,-.75) .. (1.3,-.25);
\draw[very thick, ->] (0,0) .. controls ++(0,.25) and ++(0,.5) ..  (-.5,-1.25);
\draw[very thick, ->] (1.5,0) .. controls ++(0,.25) and ++(0,.5) ..  (2,-1.25);
\node at (-.5,-1.5) { $\scs j+1$};
\node at (0,1.5) {$\scs j $};
\node at (1.5,1.5) {$\scs k $};
\node at (2,-1.5) {$\scs k+1$};
\node at (.75,-.8) {$\bullet$};
\node[draw, fill=white!20 ,rounded corners ] at (0,0) {$ -$};
\node[draw, fill=white!20 ,rounded corners ] at (1.5,0) {$ -$};
\end{tikzpicture}}
\colon V_{j+1} \otimes V_{k+1} \rightarrow V_{j} \otimes V_{k} .
\end{equation}

\begin{proposition} \label{Hdaggerbasecase}
We have an equality of maps $$ H_{jk}^{\dagger}= \{j\} \{k\} H^{jk} .$$
\end{proposition}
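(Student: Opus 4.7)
The plan is to decompose $H_{jk}$ into atomic graphical pieces whose Hermitian adjoints are either known or directly computable, and then to reassemble. Reading the diagram \eqref{defofHlower}, $H_{jk}$ is the parallel composition of the two elementary $+$-morphisms $V_j\to V_{j+1}\otimes\mathsf{v}$ and $V_k\to\mathsf{v}\otimes V_{k+1}$, followed by the cup evaluating the two middle $\mathsf{v}$ factors together with a $w_\mathsf{v}$-insertion (the dot). Schematically,
\[
  H_{jk} = \bigl(\id_{V_{j+1}}\otimes\,\mathrm{ev}_\mathsf{v}\circ(w_\mathsf{v}\otimes\id_\mathsf{v})\otimes\id_{V_{k+1}}\bigr)\circ(+\otimes +),
\]
and $H^{jk}$ from \eqref{defofHupper} admits the dual decomposition in terms of the $-$-morphisms and a cap with dot. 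Since the dagger is anti-multiplicative with respect to composition and compatible with tensor products in the Hermitian ribbon setting of \cite{GLPMS}, $H_{jk}^\dagger$ factors as the composition of the adjoints of each atomic piece.

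I would then compute these elementary adjoints individually. For $w_\mathsf{v}$, an argument identical to that of Section~\ref{subsecbeta} yields $w_\mathsf{v}^\dagger = \beta(\mathsf{v})\,w_\mathsf{v}^{-1}$, with $\beta(\mathsf{v})$ determined by the pairing on $\mathsf{v}$. For the two $+$-morphisms, using the explicit formulas \eqref{+def1}--\eqref{+def2} together with the Hermitian pairings of \cite[Propositions 4.5, 4.16]{GLPMS} on $V_j$, $V_{j+1}$ and $\mathsf{v}$, I identify the adjoint as a scalar multiple of the corresponding $-$-morphism, say $(+_j)^\dagger = \xi(j)\cdot(-_j)$; the scalar $\xi(j)$ is pinned down by testing the adjoint relation on a single convenient basis vector (for instance a highest weight vector), and the analogous computation on the mirror version of the $+$-morphism yields the corresponding scalar $\xi(k)$.

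Reassembling the pieces gives $H_{jk}^\dagger = \xi(j)\,\xi(k)\,\beta(\mathsf{v})\,H^{jk}$, and the proof reduces to the direct arithmetic verification that $\xi(j)\,\xi(k)\,\beta(\mathsf{v}) = \{j\}\{k\}$. I expect the main obstacle to be the careful bookkeeping of normalization constants in the weight-basis pairings of $V_j$ and $V_{j+1}$ (the factors $\{j-c\}$ and $\{k-d\}$ appearing in \eqref{defofHlower} ultimately trace back to these), together with tracking the side on which the $w_\mathsf{v}$-dot sits on the cup: the remark allowing the dot to slide freely along a cup or cap is what justifies the factorization above and the identification of $\mathrm{ev}_\mathsf{v}\circ(w_\mathsf{v}\otimes\id_\mathsf{v})$ with the cup-with-dot. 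Once this bookkeeping is under control, each individual adjoint computation is a finite-dimensional linear-algebra exercise modelled on the argument of Section~\ref{subsecbeta} (and closely parallel to Proposition~\ref{Ydagger}).
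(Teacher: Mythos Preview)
Your decomposition strategy is sound, but it takes a longer route than the paper does. The paper's proof is literally one sentence: it observes that the argument of Proposition~\ref{Ydagger} applies directly to $H_{jk}$ as a single morphism. That is, since $\Hom(V_{j+1}\otimes V_{k+1},V_j\otimes V_k)$ contains $H^{jk}$ and is one-dimensional (in the generic range), one writes $H_{jk}^\dagger=M\,H^{jk}$ for an unknown scalar $M$, then pins down $M$ by evaluating the adjunction identity $(x,H_{jk}y)_{V_{j+1}\otimes V_{k+1}}=(H_{jk}^\dagger x,y)_{V_j\otimes V_k}$ on a single convenient pair of vectors (e.g.\ highest-weight vectors), using the explicit formula \eqref{defofHlower} and the tensor-product pairing from \cite[Theorem~4.14]{GLPMS}. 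No decomposition into $+$-morphisms, $w_\mathsf{v}$, and cups is needed.

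Your approach factors the same computation through three separate adjoint calculations: one for each $+$-morphism and one for the dotted cup. Each of these is itself a Ydagger-style computation, so you end up invoking the same mechanism three times instead of once, and you must additionally determine $\beta(\mathsf{v})$ and the scalars $\xi(j),\xi(k)$ individually before checking that their product equals $\{j\}\{k\}$. This is correct and has the modest advantage of making the origin of the two factors $\{j\}$ and $\{k\}$ transparent (one from each $+$-morphism), but it introduces extra bookkeeping---in particular the Hermitian structure on the non-generic module $\mathsf{v}$, which the paper's direct approach sidesteps entirely. If you carry it out, be aware that the pairing conventions on $\mathsf{v}$ in \cite{GLPMS} need to be checked separately from those on the generic $V_\alpha$.
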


\begin{proof}
This follows in a similar fashion to the proof of Proposition \ref{Ydagger}.
\end{proof}

Letting $i+j-k=2F$, we now define the map of representations
\begin{align} \label{defofYgen}
&\hackcenter{ \begin{tikzpicture} [scale=.7, decoration={markings,
                        mark=at position 0.45 with {\arrow{>}};    }]
\draw[very thick, ->]   (0,-.25) --(0,-1.25) ;
\draw[very thick, postaction={decorate} ] (.5,1.25) .. controls ++(0,-.5) and ++(.25,.35) .. (.0,0);
\draw[very thick, postaction={decorate}]  (-.5,1.25).. controls ++(0,-.5) and ++(-.25,.35)   .. (0,0) ;
\node at (-.5,1.5) { $\scs i$};
\node at (0,-1.5) {$\scs k$};
\node at (.5,1.5) {$\scs j$};
\node[draw, fill=white!20 ,rounded corners ] at (0,0) {$ 2F$};
\end{tikzpicture}} \colon V_{k} \rightarrow V_{i} \otimes V_{j} \\
&
:=
\hackcenter{ \begin{tikzpicture} [scale=.7, decoration={markings,
                        mark=at position 0.45 with {\arrow{>}};    }]
\draw[very thick, ->]   (0,-.25) --(0,-1.25) ;
\draw[very thick, ->]   (1.5,-.25) --(1.5,-1.25) ;
\draw[very thick] (.2,.25) .. controls ++(0,.75) and ++(0,.75) .. (1.3,.25);
\draw[very thick, postaction={decorate}]  (-.5,1.25).. controls ++(0,-.5) and ++(-.25,.25)   .. (0,0) ;
\draw[very thick, postaction={decorate}]  (2,1.25).. controls ++(0,-.5) and ++(.25,.25)   .. (1.5,0) ;
\node at (-.3,1.5) { $\scs i$};
\node at (0,-1.5) {$\scs i-1 $};
\node at (1.5,-1.5) {$\scs j-1 $};
\node at (2,1.5) {$\scs j$};
\node[draw, fill=white!20 ,rounded corners ] at (0,0) {$ +$};
\node[draw, fill=white!20 ,rounded corners ] at (1.5,0) {$ +$};
\node at (.75,.8) {$\bullet$};
\end{tikzpicture}}
\!\!\circ \cdots \circ \!\!\!\!\!\!\!\!\!
\hackcenter{ \begin{tikzpicture} [scale=.7, decoration={markings,
                        mark=at position 0.45 with {\arrow{>}};    }]
\draw[very thick, ->]   (0,-.25) --(0,-1.25) ;
\draw[very thick, ->]   (1.5,-.25) --(1.5,-1.25) ;
\draw[very thick] (.2,.25) .. controls ++(0,.75) and ++(0,.75) .. (1.3,.25);
\draw[very thick, postaction={decorate}]  (-.5,1.25).. controls ++(0,-.5) and ++(-.25,.25)   .. (0,0) ;
\draw[very thick, postaction={decorate}]  (2,1.25).. controls ++(0,-.5) and ++(.25,.25)   .. (1.5,0) ;
\node at (-.4,1.5) { $\scs i-r'-F+1$};
\node at (-.2,-1.5) {$\scs i-r'-F $};
\node at (1.6,-1.5) {$\scs j-r'-F $};
\node at (1.8,1.5) {$\scs j-r'-F+1$};
\node[draw, fill=white!20 ,rounded corners ] at (0,0) {$ +$};
\node[draw, fill=white!20 ,rounded corners ] at (1.5,0) {$ +$};
\node at (.75,.8) {$\bullet$};
\end{tikzpicture}}
\!\!\!\circ \!\!\!
 \hackcenter{ \begin{tikzpicture} [scale=.7, decoration={markings,
                        mark=at position 0.45 with {\arrow{>}};    }]
\draw[very thick, ->]   (0,-.25) --(0,-1.25) ;
\draw[very thick, postaction={decorate} ] (.5,1.25) .. controls ++(0,-.5) and ++(.25,.35) .. (.0,0);
\draw[very thick, postaction={decorate}]  (-.5,1.25).. controls ++(0,-.5) and ++(-.25,.35)   .. (0,0) ;
\node at (-.8,1.5) { $\scs i-r'-F$};
\node at (0,-1.5) {$\scs k $};
\node at (.8,1.5) {$\scs j-r'-F$};
\node[draw, fill=white!20 ,rounded corners ] at (0,0) {$ -2r'$};
\end{tikzpicture}} .
\end{align}
For $i+j+k=2F$, this allows us to fix an explicit choice of basis element $h_{ijk} \in \Hom(1, V_i \otimes V_j \otimes V_k)$ by
\begin{equation} \label{eq:h-ijk}
\hackcenter{ \begin{tikzpicture} [scale=.7, decoration={markings,
                        mark=at position 0.5 with {\arrow{>}};    }]
\draw[very thick, postaction={decorate}] (-1,1.25) .. controls ++(0,-.5) and ++(-.5,.25) ..   (0,.0);
\draw[very thick, postaction={decorate}] (1,1.25) .. controls ++(0,-.5) and ++(.5,.25) ..   (0,.0);
\draw[very thick,postaction={decorate}] (0,1.25) -- (0,0);
\node at (0,0) {$\bullet$};
\node at (-1,1.5) {$\scs i $};
\node at (0,1.5) {$\scs j $};
\node at (1,1.5) {$\scs k$};
\end{tikzpicture}}
\;\; =\;\;
\hackcenter{ \begin{tikzpicture} [scale=.7, decoration={markings,
                        mark=at position 0.45 with {\arrow{>}};    }]
\draw[very thick, postaction={decorate} ] (.5,1.25) .. controls ++(0,-.5) and ++(.25,.35) .. (.0,0);
\draw[very thick, postaction={decorate}]  (-.5,1.25).. controls ++(0,-.5) and ++(-.25,.35)   .. (0,0) ;
\node at (-.5,1.5) { $\scs i$};
\node at (.12,-.85) {$\scs -k$};
\node at (.5,1.5) {$\scs j$};
\node at (1.3,1.5) {$\scs k$};
\draw[very thick] (.2,-.25) .. controls ++(0,-.75) and ++(0,-.75) .. (1.3,-.25);
\node at (.75,-.8) {$\bullet$};
\node[draw, fill=white!20 ,rounded corners ] at (0,0) {$ 2F$};
\draw[very thick,<-]   (1.3,-.25) --(1.3,1.25) ;
\end{tikzpicture}} .
\end{equation}

\begin{proposition}
Let $k-i-j=2F$.  Then there is an equality of maps $V_{i+1} \otimes V_{j+1} \rightarrow V_k$
\begin{equation} \label{absorbinganH}
\hackcenter{ \begin{tikzpicture} [scale=.65, decoration={markings,
                        mark=at position 0.6 with {\arrow{>}};    }]
\draw[very thick, postaction={decorate}] (.75,1.25) .. controls ++(0,.25) and ++(0,.5) ..   (0,.25);
\draw[very thick, postaction={decorate}] (.75,1.25).. controls ++(0,.25) and ++(0,.5) ..  (1.5,.25);
\draw[very thick] (.2,-.25) .. controls ++(0,-.75) and ++(0,-.75) .. (1.3,-.25);
\draw[very thick, ->] (0,0) .. controls ++(0,.25) and ++(0,.5) ..  (-.5,-1.25);
\draw[very thick, ->] (1.5,0) .. controls ++(0,.25) and ++(0,.5) ..  (2,-1.25);
\draw[very thick, postaction={decorate}] (.75,2.75) -- (.75,1.75);
\node at (-.5,-1.5) { $\scs i+1$};
\node at (-.3,1.05) {$\scs i $};
\node at (1.8,1.05) {$\scs j $};
\node at (2,-1.5) {$\scs j+1$};
\node at (.75,3) {$\scs k$};
\node at (.75,-.8) {$\bullet$};
\node[draw, fill=white!20 ,rounded corners ] at (0,0) {$ -$};
\node[draw, fill=white!20 ,rounded corners ] at (1.5,0) {$ -$};
\node[draw, fill=white!20 ,rounded corners ] at (.75,1.5) {$ 2F$};
\end{tikzpicture}}
\!= \;
\frac{\{r'+F\}\{k-F+r'+1 \}}{\{1\}}
 \hackcenter{ \begin{tikzpicture} [scale=.7, decoration={markings,
                        mark=at position 0.6 with {\arrow{>}};    }]
\draw[very thick, postaction={decorate}] (0,1.25) -- (0,.25);
\draw[very thick, -> ] (0,0) .. controls ++(0,.35) and ++(0,.5) .. (.5,-1.25);
\draw[very thick, ->] (0,0) .. controls ++(0,.35) and ++(0,.5) ..  (-.5,-1.25);
\node at (-.5,-1.5) { $\scs i+1$};
\node at (0,1.5) {$\scs k $};
\node at (.5,-1.5) {$\scs j+1$};
\node[draw, fill=white!20 ,rounded corners ] at (0,0) {$ 2F-2$};
\end{tikzpicture}}
\end{equation}
\end{proposition}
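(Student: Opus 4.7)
The plan is to verify \eqref{absorbinganH} using that both sides live in the one-dimensional space $\Hom_{\Ubar}(V_{i+1}\otimes V_{j+1},V_k)$ --- one-dimensional by Proposition~\ref{decompprop} together with admissibility of the triple $(i+1,j+1,k)$ corresponding to $k-(i+1)-(j+1)=2F-2$. Hence equality up to scalar is automatic, and it remains only to pin down the constant of proportionality $C$. I would pin down this scalar by composing both sides on the right with a fixed nonzero morphism $Y^{-2F+2}_{k\to(i+1)(j+1)}$ (the down-version from \eqref{defofYgen}), thereby reducing the comparison to an identity in the one-dimensional space $\End_{\Ubar}(V_k)$.

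For the left-hand side, after this composition one obtains $Y^{2F,\mathrm{up}}_{ij\to k}\circ H^{ij}_{(i+1)(j+1)\to ij}\circ Y^{-2F+2}_{k\to(i+1)(j+1)}$. Using the recursive definition \eqref{defofYgen} of $Y^{-2F}_{k\to ij}$ and Proposition~\ref{Hdaggerbasecase} (relating $H^{ij}$ and the adjoint of $H_{ij}$), the inner composition $H^{ij}\circ Y^{-2F+2}$ can be recognized as a specific scalar multiple of $Y^{-2F}_{k\to ij}$. The remaining composition $Y^{2F,\mathrm{up}}_{ij\to k}\circ Y^{-2F}_{k\to ij}$ evaluates to a scalar multiple of $\Id_{V_k}$ computable by iterating down to the base-case identity in \eqref{Yu}. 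A parallel analysis of the right-hand side yields $Y^{2F-2,\mathrm{up}}_{(i+1)(j+1)\to k}\circ Y^{-2F+2}_{k\to(i+1)(j+1)}=\frac{1}{\md(k)}\Id_{V_k}$ by the analogue of Proposition~\ref{Ydagger} applied one step deeper in the recursion, and equating the two gives $C$.

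The main obstacle is tracking these scalars carefully: both the $H^{ij}\circ Y^{-2F+2}$ absorption scalar and the $Y^{2F,\mathrm{up}}\circ Y^{-2F}$ orthogonality scalar depend on $F$, and must combine to produce exactly $\{r'+F\}\{k-F+r'+1\}/\{1\}$. An alternative, more hands-on approach is to evaluate both sides on a fixed weight vector $v_a\otimes v_b\in V_{i+1}\otimes V_{j+1}$ chosen so that its weight matches the highest weight of $V_k$ (which, given the action in \eqref{E:BasisV}, forces $a+b=r'-F+1$), and then compute explicitly using \eqref{+def1}--\eqref{+def2}, \eqref{defofYbasecase}, and \eqref{defofHupper}. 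The factors $\{r'+F\}$ and $\{k-F+r'+1\}$ should then emerge from the $\{c-\alpha\}$-type coefficients in the ``$+$'' and ``$-$'' cap morphisms evaluated at the relevant weights; the factor $\{r'+F\}$ reflects the depth of the $Y^{2F,\mathrm{up}}$ chain on the $V_j$ strand, $\{k-F+r'+1\}$ captures the symmetric contribution on the other strand (indirectly tracking the $V_k$ side of the base case), and $\{1\}^{-1}$ is the standard normalization appearing in \eqref{+def2}.
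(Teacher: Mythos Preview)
Your overall strategy---observe that both sides lie in the one-dimensional space $\Hom_{\Ubar}(V_{i+1}\otimes V_{j+1},V_k)$, then precompose with the downward $Y^{2-2F}:V_k\to V_{i+1}\otimes V_{j+1}$ to reduce to a comparison in $\End(V_k)\cong\C$---is exactly the paper's approach. The paper's proof is precisely: write LHS $=p_{2F}\cdot$RHS, precompose both with $Y^{2-2F}$, and then note that both resulting ``bubble'' endomorphisms equal $\tfrac{1}{\md(k)}\Id_{V_k}$ by the normalization in \cite[Figure~5]{GP1}, so everything reduces to identifying the scalar that appears when $H^{ij}$ is absorbed into the bubble on the left side.

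The gap in your write-up is in how you extract that scalar. You claim that ``using the recursive definition \eqref{defofYgen} and Proposition~\ref{Hdaggerbasecase}, the inner composition $H^{ij}\circ Y^{-2F+2}$ can be recognized as a specific scalar multiple of $Y^{-2F}$.'' But the recursion gives $Y^{2-2F}=H_{ij}\circ Y^{-2F}$, so what you are really asserting is that $H^{ij}\circ H_{ij}$ acts by a known scalar on the $V_k$-summand of $V_i\otimes V_j$. That it acts by \emph{some} scalar is Schur's lemma; the value of that scalar is the entire content of the proposition, and neither \eqref{defofYgen} nor Proposition~\ref{Hdaggerbasecase} (which relates $H_{ij}^\dagger$ to $H^{ij}$, not $H^{ij}H_{ij}$ to a scalar) supplies it. The paper closes this step by invoking \cite[Proposition~24]{GP1}, an external skein identity that directly computes the left-hand bubble as $\tfrac{\{r'+F\}\{k-F+r'+1\}}{\{1\}}$ times the right-hand bubble. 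Your alternative route---evaluating both sides on a well-chosen weight vector using the explicit formulas \eqref{+def1}--\eqref{defofHupper}---would certainly work and is self-contained, but it is a genuine computation rather than a consequence of the structural facts you cite in your primary argument.
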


\begin{proof}
Since these morphism spaces are $1$-dimensional,   we must have
\begin{equation} \label{eq:--}
\hackcenter{ \begin{tikzpicture} [scale=.65, decoration={markings,
                        mark=at position 0.6 with {\arrow{>}};    }]
\draw[very thick, postaction={decorate}] (.75,1.25) .. controls ++(0,.25) and ++(0,.5) ..   (0,.25);
\draw[very thick, postaction={decorate}] (.75,1.25).. controls ++(0,.25) and ++(0,.5) ..  (1.5,.25);
\draw[very thick] (.2,-.25) .. controls ++(0,-.75) and ++(0,-.75) .. (1.3,-.25);
\draw[very thick, ->] (0,0) .. controls ++(0,.25) and ++(0,.5) ..  (-.5,-1.25);
\draw[very thick, ->] (1.5,0) .. controls ++(0,.25) and ++(0,.5) ..  (2,-1.25);
\draw[very thick, postaction={decorate}] (.75,2.75) -- (.75,1.75);
\node at (-.5,-1.5) { $\scs i+1$};
\node at (-.3,1.05) {$\scs i $};
\node at (1.8,1.05) {$\scs j $};
\node at (2,-1.5) {$\scs j+1$};
\node at (.75,3) {$\scs k$};
\node at (.75,-.8) {$\bullet$};
\node[draw, fill=white!20 ,rounded corners ] at (0,0) {$ -$};
\node[draw, fill=white!20 ,rounded corners ] at (1.5,0) {$ -$};
\node[draw, fill=white!20 ,rounded corners ] at (.75,1.5) {$ 2F$};
\end{tikzpicture}}
\;\; = \;\;
p_{2F}\;
 \hackcenter{ \begin{tikzpicture} [scale=.7, decoration={markings,
                        mark=at position 0.6 with {\arrow{>}};    }]
\draw[very thick, postaction={decorate}] (0,1.25) -- (0,.25);
\draw[very thick, -> ] (0,0) .. controls ++(0,.35) and ++(0,.5) .. (.5,-1.25);
\draw[very thick, ->] (0,0) .. controls ++(0,.35) and ++(0,.5) ..  (-.5,-1.25);
\node at (-.5,-1.5) { $\scs i+1$};
\node at (0,1.5) {$\scs k  $};
\node at (.5,-1.5) {$\scs j+1$};
\node[draw, fill=white!20 ,rounded corners ] at (0,0) {$ 2F-2$};
\end{tikzpicture}}
\end{equation}
for some constant $p_{2F}$.
Precomposing yields the equality
\begin{equation} \label{precomp}
 \hackcenter{ \begin{tikzpicture} [scale=.65, decoration={markings,
                        mark=at position 0.6 with {\arrow{>}};    }]
\draw[very thick, postaction={decorate}] (.75,1.25) .. controls ++(0,.25) and ++(0,.5) ..   (0,.25);
\draw[very thick, postaction={decorate}] (.75,1.25).. controls ++(0,.25) and ++(0,.5) ..  (1.5,.25);
\draw[very thick] (.2,-.25) .. controls ++(0,-.75) and ++(0,-.75) .. (1.3,-.25);
\draw[very thick,postaction={decorate}]  (0,0) .. controls ++(-.5,-.65) and ++(-.5,.5) ..  (.6,-1.5);
\draw[very thick,postaction={decorate}]  (1.5,0) .. controls ++(.5,-.65) and ++(.5,.5) ..  (.8,-1.5);
\draw[very thick, postaction={decorate}] (.75,2.75) -- (.75,1.75);
\draw[very thick, ->]  (.75,-2) -- (.75,-3);
\node at (-.45,-1.25) { $\scs i+1$};
\node at (-.3,1.05) {$\scs i $};
\node at (1.8,1.05) {$\scs j $};
\node at (1.75,-1.25) {$\scs j+1$};
\node at (.75,3) {$\scs k$};
\node at (.75,-3.25) {$\scs k$};
\node at (.75,-.8) {$\bullet$};
\node[draw, fill=white!20 ,rounded corners ] at (0,0) {$ -$};
\node[draw, fill=white!20 ,rounded corners ] at (1.5,0) {$ -$};
\node[draw, fill=white!20 ,rounded corners ] at (.75,1.5) {$ 2F$};
\node[draw, fill=white!20 ,rounded corners ] at (.75,-1.75) {$ 2-2F$};
\end{tikzpicture} }
\;\; \refequal{\eqref{eq:--}} \;\;
p_{2F} \;
\hackcenter{ \begin{tikzpicture} [scale=.7, decoration={markings,
                        mark=at position 0.6 with {\arrow{>}};    }]
\draw[very thick, postaction={decorate}] (0,1.25) -- (0,.25);
\draw[very thick, postaction={decorate}](0,0) .. controls ++(-.65,-.35) and ++(-.65,.35) .. (0,-2);
\draw[very thick,  postaction={decorate}](0,0) .. controls ++(.65,-.35) and ++(.65,.35) .. (0,-2);
\draw[very thick, ->] (0,-2) -- (0,-3.25);
\node at (-.95,-.8) { $\scs i+1$};
\node at (0,1.5) {$\scs k $};
\node at (0,-3.5) {$\scs k$};
\node at (.95,-.8) {$\scs j+1$};
\node[draw, fill=white!20 ,rounded corners ] at (0,0) {$ 2F-2$};
\node[draw, fill=white!20 ,rounded corners ] at (0,-2) {$ 2-2F$};
\end{tikzpicture}} .
\end{equation}
By \cite[Proposition 24]{GP1}, \eqref{precomp} becomes
\begin{equation} \label{precomp2}
\frac{\{r'+F\}\{k-F+r'+1 \}}{\{1\}}
\hackcenter{ \begin{tikzpicture} [scale=.7, decoration={markings,
                        mark=at position 0.6 with {\arrow{>}};    }]
\draw[very thick, postaction={decorate}] (0,1.25) -- (0,.25);
\draw[very thick, postaction={decorate}](0,0) .. controls ++(-.65,-.35) and ++(-.65,.35) .. (0,-2);
\draw[very thick,  postaction={decorate}](0,0) .. controls ++(.65,-.35) and ++(.65,.35) .. (0,-2);
\draw[very thick, ->] (0,-2) -- (0,-3.25);
\node at (-.8,-.8) { $\scs i$};
\node at (0,1.5) {$\scs k $};
\node at (0,-3.5) {$\scs k$};
\node at (.8,-.8) {$\scs j$};
\node[draw, fill=white!20 ,rounded corners ] at (0,0) {$ 2F$};
\node[draw, fill=white!20 ,rounded corners ] at (0,-2) {$ -2F$};
\end{tikzpicture}}
  =
p_{2F}
\hackcenter{ \begin{tikzpicture} [scale=.7, decoration={markings,
                        mark=at position 0.6 with {\arrow{>}};    }]
\draw[very thick, postaction={decorate}] (0,1.25) -- (0,.25);
\draw[very thick, postaction={decorate}](0,0) .. controls ++(-.65,-.35) and ++(-.65,.35) .. (0,-2);
\draw[very thick,  postaction={decorate}](0,0) .. controls ++(.65,-.35) and ++(.65,.35) .. (0,-2);
\draw[very thick, ->] (0,-2) -- (0,-3.25);
\node at (-.95,-.8) { $\scs i+1$};
\node at (0,1.5) {$\scs k $};
\node at (0,-3.5) {$\scs k$};
\node at (.95,-.8) {$\scs j+1$};
\node[draw, fill=white!20 ,rounded corners ] at (0,0) {$ 2F-2$};
\node[draw, fill=white!20 ,rounded corners ] at (0,-2) {$ 2-2F$};
\end{tikzpicture}}
\end{equation}
By \cite[Figure 5]{GP1}, both diagrams in \eqref{precomp2} evaluate to $\frac{1}{\md(k)}$.  This implies that
\[
p_{2F}=\frac{\{r'+F\}\{k-F+r'+1 \}}{\{1\}} .
\]
\end{proof}

\begin{proposition} \label{daggerYgen}
There is an equality of maps
\begin{align}
\left( \hackcenter{ \begin{tikzpicture} [scale=.65, decoration={markings,
                        mark=at position 0.45 with {\arrow{>}};    }]
\draw[very thick, ->]   (0,-.25) --(0,-1.25) ;
\draw[very thick, postaction={decorate} ] (.5,1.25) .. controls ++(0,-.5) and ++(.25,.35) .. (.0,0);
\draw[very thick, postaction={decorate}]  (-.5,1.25).. controls ++(0,-.5) and ++(-.25,.35)   .. (0,0) ;
\node at (-.5,1.5) { $\scs i$};
\node at (0,-1.5) {$\scs k $};
\node at (.5,1.5) {$\scs j$};
\node[draw, fill=white!20 ,rounded corners ] at (0,0) {$ 2F$};
\end{tikzpicture}}
\right)^{\dagger} &=  \prod_{g=1}^{r'+F} \{i-g\} \{j -g\}
\nn \\ \nn
&    \quad \times
 \prod_{t=1}^{r'+F} \frac{\{2r'-t+1\}\{k+t\}}{\{1\}}
 \hackcenter{ \begin{tikzpicture} [scale=.65, decoration={markings,
                        mark=at position 0.6 with {\arrow{>}};    }]
\draw[very thick, postaction={decorate}] (0,1.25) -- (0,.25);
\draw[very thick, -> ] (0,0) .. controls ++(0,.35) and ++(0,.5) .. (.5,-1.25);
\draw[very thick, ->] (0,0) .. controls ++(0,.35) and ++(0,.5) ..  (-.5,-1.25);
\node at (-.5,-1.5) { $\scs i$};
\node at (0,1.5) {$\scs k $};
\node at (.5,-1.5) {$\scs j$};
\node[draw, fill=white!20 ,rounded corners ] at (0,0) {$ -2F$};
\end{tikzpicture}}
\end{align}
\end{proposition}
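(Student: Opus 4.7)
The plan is to take the Hermitian adjoint of the inductive definition of the generalized Y-morphism given in \eqref{defofYgen} and then apply the already-established base cases and the absorption identity \eqref{absorbinganH} to simplify the result.

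\textbf{Step 1: Adjoint of the composition.} The map $Y^{ij;2F}_{k}$ is defined in \eqref{defofYgen} as a composition (read from the bottom to the top) of the base case $Y^{(i-r'-F)(j-r'-F);-2r'}_{k}$ followed by $r'+F$ applications of the $H$-type maps $H_{i-r'-F+m,\,j-r'-F+m}$ for $m=0,1,\dots,r'+F-1$. Using $(A\circ B)^\dagger = B^\dagger \circ A^\dagger$ (together with the conjugate-linearity of $\dagger$), I would reverse the order to write $(Y^{ij;2F}_{k})^\dagger$ as the composition, from bottom to top, of $(H_{i-1,j-1})^\dagger,\,(H_{i-2,j-2})^\dagger,\dots,(H_{i-r'-F,\,j-r'-F})^\dagger$, followed at the top by $(Y^{(i-r'-F)(j-r'-F);-2r'}_{k})^\dagger$.

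\textbf{Step 2: Apply the base cases.} Proposition~\ref{Ydagger} (specifically the identity \eqref{Yu}) gives $(Y^{(i-r'-F)(j-r'-F);-2r'}_{k})^\dagger = Y^{(i-r'-F)(j-r'-F);2r'}_{k}$ with no scalar correction. Proposition~\ref{Hdaggerbasecase} supplies $(H_{a,b})^\dagger = \{a\}\{b\}\,H^{a,b}$. Plugging in $(a,b)=(i-g,\,j-g)$ for $g=1,\dots,r'+F$ and collecting all prefactors produces the scalar
\[
\prod_{g=1}^{r'+F}\{i-g\}\{j-g\},
\]
which already matches the first product in the claimed formula. The remaining diagrammatic piece is
\[
Y^{(i-r'-F)(j-r'-F);\,2r'}_{k}\;\circ\;H^{i-r'-F,\,j-r'-F}\;\circ\cdots\circ\;H^{i-1,\,j-1}.
\]

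\textbf{Step 3: Iterated absorption.} Now I would apply \eqref{absorbinganH} repeatedly, starting at the top of the diagram, to absorb each $H^{a,b}$ into the adjacent $Y$. At step $t=1,\dots,r'+F$, the current $Y$ has label $2r'-2(t-1)$ and index pair $(i-r'-F+t-1,\,j-r'-F+t-1)$; applying \eqref{absorbinganH} with these parameters yields the factor
\[
\frac{\{2r'-t+1\}\{k+t\}}{\{1\}}
\]
and replaces the $Y$ with the next one in line, of label $2r'-2t$ and indices $(i-r'-F+t,\,j-r'-F+t)$. After all $r'+F$ absorptions, the label has become $2r'-2(r'+F)=-2F$ and the indices have reached $(i,j)$, leaving precisely the $Y$ of label $-2F$ appearing on the right-hand side of the proposition. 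Multiplying the absorption scalars together yields the second product in the claim, completing the identification with the stated formula.

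\textbf{Expected difficulty.} The manipulations are essentially bookkeeping; the main thing to track carefully is the convention that the ``label'' in the Y-box decreases by $2$ at each absorption step (as dictated by \eqref{absorbinganH}) while the module indices simultaneously increase by $1$, so that the product formula indexed by $t$ aligns precisely with the range $t=1,\dots,r'+F$. A second point to check is that the orientations and splitting-vs-fusion types of the Y-morphisms produced by the two base-case identities \eqref{Yl}, \eqref{Yu} are the ones actually appearing in \eqref{defofYgen}, so that the adjoint of the splitting base case is the fusion version with no scalar, matching the $g=0,\,t=0$ (empty-product) instance of the claimed formula.
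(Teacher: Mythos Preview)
Your proposal is correct and follows essentially the same approach as the paper: take the dagger of the composition \eqref{defofYgen}, apply \eqref{Yu} and Proposition~\ref{Hdaggerbasecase} $r'+F$ times to obtain the first product of scalars and the chain of $H^{a,b}$'s topped by the $2r'$ fusion vertex, then iterate \eqref{absorbinganH} $r'+F$ times to produce the second product and collapse the diagram to the $-2F$ vertex. Your index bookkeeping for the absorption step (showing the $t$-th factor is $\{2r'-t+1\}\{k+t\}/\{1\}$) is exactly what the paper's terse ``applying \eqref{absorbinganH} $r'+F$ times'' leaves implicit.
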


\begin{proof}
Recall the definition of the map \eqref{defofYgen}.  Using Proposition \ref{Hdaggerbasecase} $r'+F$ times and \eqref{Yu},  we get that the left-hand side of the proposition is shown in Figure~\ref{fig:too-long}
\begin{figure*}
$\prod_{g=1}^{r'+F} \{i-g\} \{j -g \}
 \hackcenter{ \begin{tikzpicture} [scale=.7, decoration={markings,
                        mark=at position 0.6 with {\arrow{>}};    }]
\draw[very thick, postaction={decorate}] (0,1.25) -- (0,.25);
\draw[very thick, -> ] (0,0) .. controls ++(0,.35) and ++(0,.5) .. (.5,-1.25);
\draw[very thick, ->] (0,0) .. controls ++(0,.35) and ++(0,.5) ..  (-.5,-1.25);
\node at (-.8,-1.5) { $\scs i-r'-F$};
\node at (0,1.5) {$\scs k $};
\node at (.8,-1.5) {$\scs j-r'-F$};
\node[draw, fill=white!20 ,rounded corners ] at (0,0) {$ 2r'$};
\end{tikzpicture}}
\circ
\hackcenter{ \begin{tikzpicture} [scale=.7, decoration={markings,
                        mark=at position 0.6 with {\arrow{>}};    }]
\draw[very thick, postaction={decorate}] (0,1.25) -- (0,.25);
\draw[very thick, postaction={decorate}] (1.5,1.25) -- (1.5,.25);
\draw[very thick] (.2,-.25) .. controls ++(0,-.75) and ++(0,-.75) .. (1.3,-.25);
\draw[very thick, ->] (0,0) .. controls ++(0,.25) and ++(0,.5) ..  (-.5,-1.25);
\draw[very thick, ->] (1.5,0) .. controls ++(0,.25) and ++(0,.5) ..  (2,-1.25);
\node at (-.5,-1.5) { $\scs i-r'-F+1$};
\node at (-.2,1.5) {$\scs i-r'-F $};
\node at (1.7,1.5) {$\scs j-r'-F $};
\node at (2,-1.5) {$\scs j-r'-F+1$};
\node at (.75,-.8) {$\bullet$};
\node[draw, fill=white!20 ,rounded corners ] at (0,0) {$ -$};
\node[draw, fill=white!20 ,rounded corners ] at (1.5,0) {$ -$};
\end{tikzpicture}}
\circ \cdots \circ
\hackcenter{ \begin{tikzpicture} [scale=.7, decoration={markings,
                        mark=at position 0.6 with {\arrow{>}};    }]
\draw[very thick, postaction={decorate}] (0,1.25) -- (0,.25);
\draw[very thick, postaction={decorate}] (1.5,1.25) -- (1.5,.25);
\draw[very thick] (.2,-.25) .. controls ++(0,-.75) and ++(0,-.75) .. (1.3,-.25);
\draw[very thick, ->] (0,0) .. controls ++(0,.25) and ++(0,.5) ..  (-.5,-1.25);
\draw[very thick, ->] (1.5,0) .. controls ++(0,.25) and ++(0,.5) ..  (2,-1.25);
\node at (-.5,-1.5) { $\scs i$};
\node at (-.2,1.5) {$\scs i-1 $};
\node at (1.7,1.5) {$\scs j-1 $};
\node at (2,-1.5) {$\scs j$};
\node at (.75,-.8) {$\bullet$};
\node[draw, fill=white!20 ,rounded corners ] at (0,0) {$ -$};
\node[draw, fill=white!20 ,rounded corners ] at (1.5,0) {$ -$};
\end{tikzpicture}}$  \caption{A decomposition of the left-hand-side of Proposition~\ref{daggerYgen}.} \label{fig:too-long}
\end{figure*}
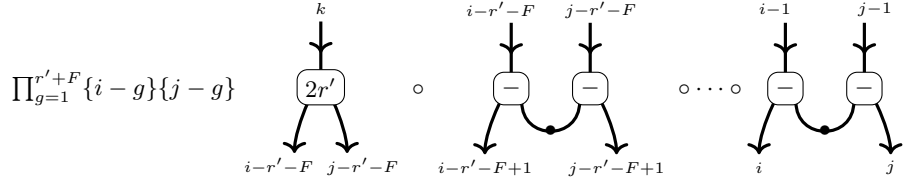
Applying \eqref{absorbinganH} $r'+F$ times yields the result.
\end{proof}

The next proposition yields the value of $\gamma(i,j,k)$.  Note that an explicit formula for $\beta(-k)$ is given in Proposition \ref{prop:beta-formula}.
\begin{proposition}
Assume $i+j+k=2F$.  Then
\begin{align}
& \left(
\hackcenter{ \begin{tikzpicture} [scale=.7, decoration={markings,
                        mark=at position 0.5 with {\arrow{>}};    }]
\draw[very thick, postaction={decorate}] (-1,1.25) .. controls ++(0,-.5) and ++(-.5,.25) ..   (0,.0);
\draw[very thick, postaction={decorate}] (1,1.25) .. controls ++(0,-.5) and ++(.5,.25) ..   (0,.0);
\draw[very thick,postaction={decorate}] (0,1.25) -- (0,0);
\node at (0,0) {$\bullet$};
\node at (-1,1.5) {$\scs i $};
\node at (0,1.5) {$\scs j $};
\node at (1,1.5) {$\scs k$};
\end{tikzpicture}}
\right)^{\dagger}
\;\; = \;\;
\beta(-k)
\prod_{g=1}^{r'+F} \{i-g\} \{j -g\}
\\
& \quad  \times
 \prod_{t=1}^{r'+F} \frac{\{2r'-t+1\}\{-k+t\}}{\{1\}}
\hackcenter{ \begin{tikzpicture} [scale=.7, decoration={markings,
                        mark=at position 0.5 with {\arrow{>}};    }]
\draw[very thick, postaction={decorate}] (0,.0).. controls ++(-.5,-.25)and ++(0,.5) .. (-1,-1.25);
\draw[very thick, postaction={decorate}] (0,.0).. controls ++(.5,-.25)and ++(0,.5) ..  (1,-1.25);
\draw[very thick,postaction={decorate}] (0,0)-- (0,-1.25);
\node at (0,0) {$\bullet$};
\node at (-1,-1.5) {$\scs i $};
\node at (0,-1.5) {$\scs j $};
\node at (1,-1.5) {$\scs k$};
\end{tikzpicture}} .
\end{align}
\end{proposition}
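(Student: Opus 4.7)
The plan is to exploit the factorization of $h_{ijk}$ implicit in the graphical definition \eqref{eq:h-ijk}. Reading the diagram from bottom to top, $h_{ijk}$ is the composition of (i) a cup-type morphism $\eta \colon 1 \to V_{-k}\otimes V_k$ built from the standard coevaluation and the isomorphism $w_{-k}$, followed by (ii) the generalized Y-morphism $Y_{2F} \colon V_{-k} \to V_i \otimes V_j$ from \eqref{defofYgen} tensored with $\mathrm{id}_{V_k}$. Thus
\[
h_{ijk} \;=\; (Y_{2F} \otimes \mathrm{id}_{V_k})\circ \eta .
\]
Observe that the admissibility condition $i+j-(-k) = 2F$ for $Y_{2F}$ coincides with our hypothesis $i+j+k = 2F$.

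Taking daggers reverses the composition, so
\[
h_{ijk}^{\dagger} \;=\; \eta^{\dagger}\circ \bigl(Y_{2F}^{\dagger}\otimes \mathrm{id}_{V_k}\bigr).
\]
Now I would apply Proposition~\ref{daggerYgen} with $k$ replaced by $-k$. This immediately yields
\[
Y_{2F}^{\dagger} \;=\; \prod_{g=1}^{r'+F}\{i-g\}\{j-g\}\,\prod_{t=1}^{r'+F}\frac{\{2r'-t+1\}\{-k+t\}}{\{1\}}\cdot \check{Y}^{2F},
\]
where $\check{Y}^{2F}\colon V_i \otimes V_j \to V_{-k}$ is the co-Y morphism dual to $Y_{2F}$. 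This produces exactly the two products appearing in the claimed formula, with the argument $-k+t$ rather than $k+t$ (the key sign flip responsible for matching the statement).

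It remains to identify $\eta^{\dagger}\circ(\check{Y}^{2F}\otimes\mathrm{id}_{V_k})$ with $\beta(-k)\cdot h^{ijk}$. Here I would unwind $\eta$ as the composition of the ordinary coevaluation $1\to V_k \otimes V_k^{\ast}$ with $\mathrm{id}\otimes w_{-k}^{-1}$ (up to the orientation conventions of the pivotal structure). The defining equation \eqref{defofbeta}, $w^{\dagger} = \beta(\cdot)\,w^{-1}$, then transforms $\eta^{\dagger}$ into the evaluation-type morphism $V_{-k}\otimes V_k \to 1$ used in the graphical definition of $h^{ijk}$, multiplied by precisely $\beta(-k)$. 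Composing with $\check{Y}^{2F}\otimes \mathrm{id}_{V_k}$ reconstructs $h^{ijk}$ by definition, yielding the claimed identity.

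The main obstacle will be the careful bookkeeping of $w$-isomorphisms and orientation conventions: $h_{ijk}$ and $h^{ijk}$ each involve several invocations of $w_{\pm k}$ implicit in the duality morphisms of the pivotal structure, and one must verify that all of these cancel except for the single factor $\beta(-k)$ arising from the $\eta^{\dagger}$ computation. The graphical calculus of Section~\ref{app:hermitian} (in particular \cite[Lemma~7]{GP1} allowing the dot $w$ to slide freely along cup/cap morphisms) is exactly what makes this bookkeeping manageable.
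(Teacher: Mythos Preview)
Your proposal is correct and follows essentially the same route as the paper's proof: factor $h_{ijk}$ via \eqref{eq:h-ijk}, apply Proposition~\ref{daggerYgen} (with $-k$ in place of $k$) to the $Y_{2F}$ piece, and use $w_{-k}^{\dagger}=\beta(-k)w_{-k}^{-1}$ for the remaining cup. The paper's proof is stated in one sentence but unpacks to exactly what you wrote; your explicit discussion of the bookkeeping of $w$-isomorphisms is a reasonable elaboration rather than a different argument.
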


\begin{proof}
Taking the dagger of the right-hand side of \eqref{eq:h-ijk} using Proposition \ref{daggerYgen} and the definition of the dagger of the bullet (the map $w_{-k}$) from Section \ref{subsecbeta} yield the result.
\end{proof}

\subsection{Modified $6j$ symbols}
In order to reduce the clutter on the diagrams, we will abbreviate the morphism defined earlier as follows:
\begin{equation}
\hackcenter{ \begin{tikzpicture} [scale=.7, decoration={markings,
                        mark=at position 0.45 with {\arrow{>}};    }]
\draw[very thick, ->]   (0,-.25) --(0,-1.25) ;
\draw[very thick, postaction={decorate} ] (.5,1.25) .. controls ++(0,-.5) and ++(.25,.35) .. (.0,0);
\draw[very thick, postaction={decorate}]  (-.5,1.25).. controls ++(0,-.5) and ++(-.25,.35)   .. (0,0) ;
\node at (-.5,1.5) { $\scs i$};
\node at (0,-1.5) {$\scs k$};
\node at (.5,1.5) {$\scs j$};
\node[draw, fill=white!20 ,rounded corners ] at (0,0) {$ 2F$};
\end{tikzpicture}}
:=
\hackcenter{ \begin{tikzpicture} [scale=.7, decoration={markings,
                        mark=at position 0.45 with {\arrow{>}};    }]
\draw[very thick, ->]   (0,0) --(0,-1.25) ;
\draw[very thick, postaction={decorate} ] (.5,1.25) .. controls ++(0,-.5) and ++(.25,.35) .. (.0,0);
\draw[very thick, postaction={decorate}]  (-.5,1.25).. controls ++(0,-.5) and ++(-.25,.35)   .. (0,0) ;
\node at (-.5,1.5) { $\scs i$};
\node at (0,-1.5) {$\scs k$};
\node at (.5,1.5) {$\scs j$};
\end{tikzpicture}} .
\end{equation}

The category of representations being monoidal implies that there is an equality of morphisms given below.


\[
 \hackcenter{
\begin{tikzpicture}[yscale=-1, scale=0.7,  decoration={markings, mark=at position 0.6 with {\arrow{>}};},]
\draw[very thick, postaction={decorate}] (0,0) to [out=90, in=220] (.75,1);
\draw[very thick, postaction={decorate}] (1.5,-1) to [out=90, in=-30] (.75,1);
\draw[very thick,  postaction={decorate}] (.8,1) to  (.8,2);
\draw[very thick,  postaction={decorate}] (-.5,-1)to [out=90, in=210] (0,0);
\draw[very thick, postaction={decorate}] (.5,-1) to [out=90, in=-30] (0,0);
\node at (.75,-.8) {$\scs j$};
\node at (-.75,-.8) {$\scs i$};
\node at (1.75,-.8) {$\scs l$};
\node at (.5,1.6) {$\scs m$};
\node at (-.1,.6) {$\scs k$};
\end{tikzpicture} }
\quad = \quad
\sum_n
\md(n)
N^{i  j k}_{ l  m n} \;
\hackcenter{
\begin{tikzpicture}[yscale=-1.0, scale=0.7, decoration={markings, mark=at position 0.6 with {\arrow{>}};}]
\draw[very thick,  postaction={decorate}] (0,0) to [out=90, in=-30] (-.75,1);
\draw[very thick,  postaction={decorate}] (-1.5,-1) to [out=90, in=210] (-.75,1);
\draw[very thick,  postaction={decorate}] (-.8,1) to  (-.8,2);
\draw[very thick,  postaction={decorate}] (.5,-1)to [out=90, in=-40] (0,0);
\draw[very thick,  postaction={decorate}] (-.5,-1) to [out=90, in=210] (0,0);
\node at (-.75,-.8) {$\scs j$};
\node at (.75,-.8) {$\scs l$};
\node at (-1.75,-.8) {$\scs i$};
\node at (-.5,1.6) {$\scs m$};
\node at (.1,.6) {$\scs n$};
\end{tikzpicture} }
\]

\begin{remark}
We note that
\begin{equation}
N^{i  j k}_{ l  m n} \;\; := \;\;
 \left|
  \begin{array}{ccc}
    i & j & k \\
    l & m & n \\
  \end{array}
\right|
\end{equation}
in the notation from \cite{GPT2}.
Since all the vertices in the graphs represent specified maps in a $1$-dimensional morphism space, the $N^{i  j k}_{ l  m n}$ are some complex numbers whose formulas could be found in \cite{GP1}.
\end{remark}

%


Note that $N^{i  j k}_{ l  m n}$ could be defined as the value of the following network:
\[
\hackcenter{\begin{tikzpicture}[   decoration={markings, mark=at position 0.6 with {\arrow{>}};}, scale =0.7]
\draw[very thick,  postaction={decorate}, out=-30, in=100] (-.7,3) to (0,2);
\draw[very thick,  postaction={decorate}] (-.7,3) .. controls ++(-.35,-.5) and ++(-.35,.5) .. (-.7,1);
\draw[very thick,  postaction={decorate}, out=-70, in=140] (-.7,1) to (0,0);
\draw[very thick,  postaction={decorate}] (0,2) -- (-.7,1);
\draw[very thick,  postaction={decorate}] (0,2) .. controls ++(.35,-.5) and ++(.35,.5) ..  (0,0);
\draw[very thick, ] (0,0) .. controls ++(0,-.5) and ++(0,-.5) ..  (1,0);
\draw[very thick, , postaction={decorate}] (1,0) to (1,3);
\draw[very thick, ] (-.7,3) .. controls ++(0,.75) and ++(0,.55) ..  (1,3);
\draw[black,fill=black] (-.7,1) circle (.5ex);
\draw[black,fill=black](0,2) circle (.5ex);
\draw[black,fill=black] (-.7,3) circle (.5ex);
\draw[black,fill=black] (0,0) circle (.5ex);
    \node at (-1.2,2.3) {$\scs i$};
    \node at (-.3,1.9) {$\scs j$};
    \node at (-.85,.55) {$\scs k$};
    \node at (.4,1.2) {$\scs l$};
    \node at (.2,2.5) {$\scs n$};
    \node at (1.3,2.25) {$\scs m$};
\end{tikzpicture}} .
\]






In Sections \ref{subsecbeta} and \ref{gammasec} we defined and computed quantities $\beta$ and $\gamma$:
$$\beta:\R\setminus\Z\to\R\qquad\gamma:(\R\setminus\Z)^3\to\R .$$
This leads to a non involutive semilinear conjugation operator
$\Gamma\mapsto \Gamma^\ddagger$ on the space of $\R$-colored planar
uni-trivalent networks.  The image of a graph $\Gamma$ with state $\sigma$, with
set of trivalent vertices $\Gamma_0$, with set of univalent vertices
$\Gamma'_0$ and with set of edges $\Gamma_1$ is given by
\begin{equation}
  \label{eq:plan-dag}
  \Gamma^\ddagger=c(\Gamma)\mirror \Gamma
\end{equation}
where $\mirror \Gamma$ is the mirror image of $\Gamma$  with the opposite state and the complex coefficient $c(\Gamma)$ is given by
$$c(\Gamma)=\prod_{v\in \Gamma_0}\gamma(\sigma(v))\prod_{u\in \Gamma'_0}\beta(\sigma(u))^{-1}\prod_{e\in \Gamma_1}\beta(\sigma(e))$$
where $\sigma(u)$ is $\sigma$ of the unique edge adjacent to $u$ and
$\sigma(v)=(j_1,j_2,j_3)\in\R^3$ where $j_k=\sigma(e_k)$ and
$e_1,e_2,e_3$ are the $3$ ordered edges adjacent to $v$ oriented
toward $v$.
A network with no univalent vertices is closed.
\begin{remark}
Note that for an oriented edge $e$, $\beta(\sigma(-e))=\beta(\sigma(e))$.  Thus $\beta(\sigma(e))$ make sense even if we forget the orientation on $e$.
\end{remark}

The Poincar\'{e} dual of $\col$ is a $\Gr$-valued 1-cocycle on the 1-skeleton of $\T$ given on an oriented edge $a$ of $\T$ by $\col(a)=\col(e)$ where $e$ is the oriented edge of $\Gamma$ such that the algebraic intersection $a\cap e$ is $+1$.  We denote by $[\col]\in H^1(\Sigma,\Gr)$ the corresponding cohomology class.

\begin{proposition}
  If $\Gamma$ is a closed planar (or spherical) network, the
  evaluation of the network $\Gamma^\ddagger$ is the conjugate of the
  value of $\Gamma$
  \begin{equation}
    \label{eq:ddagger}
    \brk{\Gamma^\ddagger}=\overline{\brk{\Gamma}}.
  \end{equation}
\end{proposition}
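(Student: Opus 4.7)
The strategy is to exploit the Hermitian structure on the category $\mathcal{D}^\dagger$ from \cite{GLPMS} and reduce the statement to the two elementary building-block identities established earlier in Sections \ref{subsecbeta} and \ref{gammasec}. Since $\Gamma$ is closed, its evaluation $\brk{\Gamma}$ lives in $\End(\mathbf{1})\cong\C$, and for a scalar in a Hermitian category one has $\overline{\brk{\Gamma}}=\brk{\Gamma}^\dagger$. It is therefore enough to show that $\brk{\Gamma}^\dagger = c(\Gamma)\,\brk{\mirror{\Gamma}}$.

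The first step is to write $\brk{\Gamma}$ as an explicit composite of elementary morphisms: each trivalent vertex $v$ contributes the distinguished element $h_{\sigma(v)}\in\Hom(\mathbf{1},V_{j_1}\otimes V_{j_2}\otimes V_{j_3})$ or, after applying the pivotal structure, the dual $h^{\sigma(v)}$; each oriented edge carries the identity of some $V_{\sigma(e)}$, possibly routed through the isomorphism $w_{\sigma(e)}\colon V_{\sigma(e)}\to V_{-\sigma(e)}^\ast$ when the pivotal structure demands an orientation reversal; and the network is closed using the ribbon category's evaluation and coevaluation morphisms. Using $(f\circ g)^\dagger = g^\dagger\circ f^\dagger$, $(f\otimes g)^\dagger = f^\dagger\otimes g^\dagger$, and the two identities
$$h_{ijk}^\dagger = \gamma(i,j,k)\, h^{ijk}, \qquad w_i^\dagger = \beta(i)\, w_i^{-1},$$
taking dagger of the full composite reverses the orientation of every trivalent vertex — this is precisely what produces the mirror $\mirror{\Gamma}$ with the conjugated state $\sigma(-e)=\sigma(e)^\ast$ — and pulls out the scalar $\prod_{v\in\Gamma_0}\gamma(\sigma(v))$, exactly matching the vertex contribution to $c(\Gamma)$.

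The remaining bookkeeping is for the edges, and this is where the main obstacle lies. An edge $e$ may be crossed by several $w$-isomorphisms depending on how local orientations are laid out, each contributing a factor $\beta(\sigma(e))$ upon dagger, while the axiom \eqref{eq:gamma-theta} couples the $\gamma$'s at the two ends of an edge with two $\beta$-factors. The task is to verify that after applying dagger and using \eqref{eq:gamma-theta}, exactly one net factor $\beta(\sigma(e))$ survives per internal edge and no factor per (non-existent) univalent vertex, reproducing $\prod_{e\in\Gamma_1}\beta(\sigma(e))$. The cleanest way to handle this is to induct on the number of vertices, using the base case of the tetrahedral network where the claim is literally the axiom \eqref{eq:6j}, and propagating via the pentagon \eqref{eq:pent} and tetrahedral symmetries \eqref{eq:symm} — both of which are compatible with mirror reversal and with the coefficient $c(\Gamma)$ precisely because \eqref{eq:gamma-theta} and \eqref{eq:6j} hold. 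Once the elementary tetrahedral case and the compatibility of local moves with the dagger is established, \eqref{eq:ddagger} follows for any closed planar network.
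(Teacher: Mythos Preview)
Your first two paragraphs are on the right track and essentially unpack the paper's two-line proof, which simply cites \cite{CGP1} (graph evaluation as a modified trace in $\mathcal{D}^\dagger$) and \cite{GLPMS} (the Hermitian structure), whence $\overline{\brk{\Gamma}}=\brk{\Gamma}^\dagger$ and the generator identities for $h_{ijk}$ and $w_i$ produce $c(\Gamma)\brk{\mirror\Gamma}$.

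Your third paragraph, however, has a real problem. The induction you propose is circular in this appendix: its purpose is to give a conceptual derivation of Proposition~\ref{prop:herm-6j}, and \eqref{eq:6j} is precisely what Proposition~\ref{prop:herm-6jgraph} deduces \emph{from} the present proposition, so you cannot take \eqref{eq:6j} as a base case. Even setting circularity aside, pentagon and tetrahedral-symmetry moves do not generate all closed planar trivalent networks from tetrahedra; the $\Theta$-graph with two vertices, for instance, is not reachable this way, and it is in fact the network the paper uses to extract identity~(1) of Proposition~\ref{prop:herm-6jgraph}. The edge bookkeeping you worry about is handled not by induction on the graph but by the compatibility of $\dagger$ with the pivotal and ribbon structure established in \cite{GLPMS}: once $\dagger$ is known to be a monoidal anti-involution intertwining duality (so that cups, caps, and the $w_i$ transform as they should), the mirror image with one $\beta$ per edge and one $\gamma$ per trivalent vertex drops out directly from the generator formulas, with no combinatorial reduction required.
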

\begin{proof}
 In \cite{CGP1} it is shown that the graph evaluation arises from the representation theory of the unrolled quantum group for $\slt$, where the evaluation of a closed graph is determined from a modified trace applied to a cutting of the graph along a projective object.  The claim then follows from \cite{GLPMS} proving that this category is Hermitian.
\end{proof}

The next result is the goal of the section and is a key ingredient in proving the Hermiticity of the Hamiltonian for the example coming from the semi-restricted quantum group for $\slt$.
\begin{proposition} \label{prop:herm-6jgraph}
The following identities hold: \hfill
\begin{enumerate}
 \item \label{prop:gamma-theta}$\gamma(i,j,k)\gamma(k^{\ast},j^{\ast},i^{\ast})\beta(i)\beta(j)\beta(k)=1$;

 \item 
 $\wb{(N^{j_1 j_2 j_3 }_{j_4j_5j_6})}{=}N^{j_2^{\ast} j_1^{\ast} j_3^{\ast}}_{j_5 j_4j_6}\gamma(j_1,j_2,j_3^{\ast})\gamma(j_1^{\ast},j_5,j_6^{\ast})$
\hspace{30 mm} $\times \gamma(j_2^{\ast},j_6,j_4^{\ast}) \gamma(j_3,j_4,j_5^{\ast})\prod_i\beta(j_i)$. \hfill
\end{enumerate}
\end{proposition}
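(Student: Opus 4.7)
The plan is to derive both identities from the semilinear conjugation $\ddagger$ on planar networks introduced in \eqref{eq:plan-dag}, using the relation $\brk{\Gamma^\ddagger} = \overline{\brk{\Gamma}}$ proved for closed networks.

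For (1), I would exploit the involutivity of the Hermitian adjoint applied to the intertwiner $h_{ijk}$. Since $h_{ijk}^\dagger = \gamma(i,j,k)\,h^{ijk}$ by definition, applying $\dagger$ a second time yields
\[
h_{ijk} = \overline{\gamma(i,j,k)}\,(h^{ijk})^\dagger.
\]
First I would express $h^{ijk}$ in terms of $h_{k^\ast j^\ast i^\ast}$ using the pivotal structure, together with the isomorphisms $w_i, w_j, w_k$ that close off the three legs. Taking $\dagger$ of the resulting composite then produces three factors of $\beta$ coming from $w_\ell^\dagger = \beta(\ell)\,w_\ell^{-1}$ (Section~\ref{subsecbeta}), together with a single factor of $\gamma(k^\ast,j^\ast,i^\ast)$ coming from $h_{k^\ast j^\ast i^\ast}^\dagger = \gamma(k^\ast,j^\ast,i^\ast)\,h^{k^\ast j^\ast i^\ast}$. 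Matching the scalars that multiply $h_{ijk}$ then forces the identity.

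For (2), I would realize $N^{j_1 j_2 j_3}_{j_4 j_5 j_6}$ as the evaluation of the planar tetrahedral network $\Gamma$ pictured just before this proposition: four trivalent vertices (one per face of the tetrahedron) and six edges carrying the labels $j_1,\ldots,j_6$. Applying $\brk{\Gamma^\ddagger} = \overline{\brk{\Gamma}}$ immediately gives
\[
\overline{N^{j_1 j_2 j_3}_{j_4j_5j_6}} = c(\Gamma)\,\brk{\mirror\Gamma},
\]
where by definition $c(\Gamma) = \prod_{v\in\Gamma_0}\gamma(\sigma(v))\prod_{e\in\Gamma_1}\beta(\sigma(e))$ contributes exactly four $\gamma$ factors (one per face of the tetrahedron) and six $\beta$ factors (one per edge). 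I would then identify $\brk{\mirror\Gamma}$ with conjugated state as $N^{j_2^\ast j_1^\ast j_3^\ast}_{j_5 j_4 j_6}$, using that mirror reflection reverses the cyclic order at each vertex while the opposite state conjugates edge colors; invoking the tetrahedral symmetries of Proposition~\ref{prop:mod-6j}\eqref{prop:symm1} to bring the result into the canonical form on the right-hand side of \eqref{eq:6j}.

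The main obstacle will be the combinatorial bookkeeping in the last step of (2): one must verify that each of the four faces, after reflection and conjugation, contributes precisely one of the cyclic triples $(j_1,j_2,j_3^\ast)$, $(j_1^\ast,j_5,j_6^\ast)$, $(j_2^\ast,j_6,j_4^\ast)$, $(j_3,j_4,j_5^\ast)$ appearing in \eqref{eq:6j}, and that the mirrored vertex cyclic orders match the conventions implicit in $N^{j_2^\ast j_1^\ast j_3^\ast}_{j_5 j_4 j_6}$ rather than some other tetrahedrally equivalent presentation. Once this matching is established, both identities follow from the single principle $\brk{\Gamma^\ddagger}=\overline{\brk{\Gamma}}$, with (1) serving as the degenerate three-edge case of the calculation producing (2).
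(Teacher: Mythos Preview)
Your approach to (2) is exactly the paper's: realize the $6j$-symbol as the evaluation of the closed tetrahedral network $\Gamma$, apply $\brk{\Gamma^\ddagger}=\overline{\brk{\Gamma}}$, and identify $\brk{\mirror\Gamma}$ with $N^{j_2^\ast j_1^\ast j_3^\ast}_{j_5 j_4 j_6}$ via an isotopy in the sphere; the four $\gamma$'s and six $\beta$'s then fall out of the definition of $c(\Gamma)$.

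For (1), the paper takes a shorter route than your involutivity argument. Rather than unwinding $(h^{ijk})^\dagger$ through the pivotal structure and the $w$-isomorphisms, it applies the same principle $\brk{\Gamma^\ddagger}=\overline{\brk{\Gamma}}$ directly to the theta network $\Theta$ with edges colored $i,j,k$. Since $\brk{\Theta}=1$ and $\brk{\mirror\Theta}=1$ (both are theta evaluations), the relation $c(\Theta)\,\brk{\mirror\Theta}=\overline{\brk{\Theta}}$ immediately gives $c(\Theta)=\gamma(i,j,k)\gamma(k^\ast,j^\ast,i^\ast)\beta(i)\beta(j)\beta(k)=1$. This is precisely the ``degenerate three-edge case'' you allude to in your closing sentence, and it sidesteps all the bookkeeping of tracking $\dagger$ through the composite defining $h^{ijk}$. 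Your involutivity argument should also go through, but it requires more care in matching the conventions relating $h^{ijk}$ to $h_{k^\ast j^\ast i^\ast}$ and in handling the cups, caps, and $w$-maps; the theta-network argument buys you uniformity with (2) and a one-line proof.
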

\begin{proof}
  The first identity comes from
  $\brk{\Theta^\ddagger}=\overline{\brk{\Theta}}=1$ where $\Theta$ is
  the network
  \[
    \Theta=\hackcenter{
      \begin{tikzpicture}[   decoration={markings, mark=at position 0.6 with {\arrow{>}};}, scale =0.9]
        \draw[very thick,  postaction={decorate}] (1.5,0).. controls ++(.1,.7) and ++(-.1,.7) .. (0,0);
        \draw[very thick,  postaction={decorate}] (1.5,0).. controls ++(.1,-.7) and ++(-.1,-.7) .. (0,0);
        \draw[very thick,  postaction={decorate}] (1.5,0) to  (0,0);
        \draw[black,fill=black] (0,0) circle (.5ex);
        \draw[black,fill=black](1.5,0) circle (.5ex);
        \node at (.85,.7) {$\scs i$};
        \node at (.85,.25) {$\scs j$};
        \node at (.85,-.7) {$\scs k$};
        \node at (-.25,0) {$\scs v$};
        \node at (1.75,0) {$\scs u$};
      \end{tikzpicture}}\text{ with }\mirror\Theta=\hackcenter{
      \begin{tikzpicture}[   decoration={markings, mark=at position 0.6 with {\arrow{>}};}, scale =0.9]
        \draw[very thick,  postaction={decorate}] (1.5,0).. controls ++(.1,.7) and ++(-.1,.7) .. (0,0);
        \draw[very thick,  postaction={decorate}] (1.5,0).. controls ++(.1,-.7) and ++(-.1,-.7) .. (0,0);
        \draw[very thick,  postaction={decorate}] (1.5,0) to  (0,0);
        \draw[black,fill=black] (0,0) circle (.5ex);
        \draw[black,fill=black](1.5,0) circle (.5ex);
        \node at (.85,.7) {$\scs i$};
        \node at (.85,.25) {$\scs j$};
        \node at (.85,-.7) {$\scs k$};
        \node at (-.25,0) {$\scs u$};
        \node at (1.75,0) {$\scs v$};
      \end{tikzpicture}}.
  \]
  For the second identity we consider the network $\Gamma$ defining
  the 6j-symbol $N^{j_1 j_2 j_3 }_{j_4j_5j_6}$.  Then we have
  \[
    \mirror\Gamma=
    \hackcenter{\begin{tikzpicture}[   decoration={markings, mark=at position 0.6 with {\arrow{<}};}, scale =0.7]
        \draw[very thick,  postaction={decorate}, out=-150, in=80] (.7,3) to (0,2);
        \draw[very thick,  postaction={decorate}] (.7,3) .. controls ++(.35,-.5) and ++(.35,.5) .. (.7,1);
        \draw[very thick,  postaction={decorate}, out=-110, in=40] (.7,1) to (0,0);
        \draw[very thick,  postaction={decorate}] (0,2) -- (.7,1);
        \draw[very thick,  postaction={decorate}] (0,2) .. controls ++(-.35,-.5) and ++(-.35,.5) ..  (0,0);
        \draw[very thick, ] (0,0) .. controls ++(0,-.5) and ++(0,-.5) ..  (-1,0);
        \draw[very thick,  postaction={decorate}] (-1,0) to (-1,3);
        \draw[very thick, ] (.7,3) .. controls ++(0,.75) and ++(0,.55) ..  (-1,3);
        \draw[black,fill=black] (.7,1) circle (.5ex);
        \draw[black,fill=black](0,2) circle (.5ex);
        \draw[black,fill=black] (.7,3) circle (.5ex);
        \draw[black,fill=black] (0,0) circle (.5ex);
        \node at (1.2,2.2) {$\scs j_1$};
        \node at (.4,1.85) {$\scs j_2$};
        \node at (.85,.55) {$\scs j_3$};
        \node at (-.45,1.2) {$\scs j_4$};
        \node at (-.2,2.5) {$\scs j_6$};
        \node at (-1.3,2.1) {$\scs j_5$};
      \end{tikzpicture}}
    \equiv
    \hackcenter{\begin{tikzpicture}[   decoration={markings, mark=at position 0.6 with {\arrow{>}};}, scale =0.7]
        \draw[very thick,  postaction={decorate}, out=-30, in=100] (-.7,3) to (0,2);
        \draw[very thick,  postaction={decorate}] (-.7,3) .. controls ++(-.35,-.5) and ++(-.35,.5) .. (-.7,1);
        \draw[very thick,  postaction={decorate}, out=-70, in=140] (-.7,1) to (0,0);
        \draw[very thick,  postaction={decorate}] (0,2) -- (-.7,1);
        \draw[very thick,  postaction={decorate}] (0,2) .. controls ++(.35,-.5) and ++(.35,.5) ..  (0,0);
        \draw[very thick, ] (0,0) .. controls ++(0,-.5) and ++(0,-.5) ..  (1,0);
        \draw[very thick,  postaction={decorate}] (1,0) to (1,3);
        \draw[very thick, ] (-.7,3) .. controls ++(0,.75) and ++(0,.55) ..  (1,3);
        \draw[black,fill=black] (-.7,1) circle (.5ex);
        \draw[black,fill=black](0,2) circle (.5ex);
        \draw[black,fill=black] (-.7,3) circle (.5ex);
        \draw[black,fill=black] (0,0) circle (.5ex);
        \node at (-1.2,2.3) {$\scs j_2^{\ast}$};
        \node at (-.35,1.9) {$\scs j_1^{\ast}$};
        \node at (-.85,.55) {$\scs j_3^{\ast}$};
        \node at (.5,1.2) {$\scs j_5$};
        \node at (.2,2.5) {$\scs j_6$};
        \node at (1.3,2) {$\scs j_4$};
      \end{tikzpicture}}
\]
where $\equiv$ is an isotopy of the graph in the sphere.  This last
graph evaluates to $N^{j_2^{\ast} j_1^{\ast} j_3^{\ast}}_{j_5 j_4j_6}$. So
equation \eqref{eq:ddagger} applied to $\Gamma^\ddagger$ implies the
proposition.
\end{proof}

\vspace{.8in}

%

\end{document}